\title{$\ell_2/\ell_2$ Sparse Recovery via Weighted Hypergraph Peeling}
\author{%
    Nick Fischer\thanks{Max Planck Institute for Informatics, \texttt{nfischer@mpi-inf.mpg.de}. Parts of this work were done while the author was affiliated with INSAIT, Sofia University ``St.\ Kliment Ohridski''. This work is partially funded by the Ministry of Education and Science of Bulgaria (support for INSAIT, part of the Bulgarian National Roadmap for Research Infrastructure).}
\and%
    Vasileios Nakos\thanks{National and Kapodistrian University of Athens and Archimedes / Athena RC, \texttt{vasilisnak@di.uoa.gr}.}}
\date{}
\begin{document}

\maketitle

\begin{abstract}
\noindent
We demonstrate that the best $k$-sparse approximation of a length-$n$ vector can be recovered within a $(1+\epsilon)$-factor approximation in $O( (k/\epsilon) \log n)$ time using a non-adaptive linear sketch with $O( (k/\epsilon ) \log n)$ rows and $O(\log n)$ column sparsity. This improves the running time of the fastest-known sketch [Nakos, Song; STOC '19] by a factor of $\log n$, and is \emph{optimal} for a wide range of parameters.

Our algorithm is simple and likely to be practical, with the analysis built on a new technique we call \emph{weighted hypergraph peeling}. Our method naturally extends known hypergraph peeling processes (as in the analysis of Invertible Bloom Filters) to a setting where edges and nodes have (possibly correlated) weights.
\end{abstract}

\setcounter{page}{0}
\thispagestyle{empty}
\clearpage

\section{Introduction} \label{sec:intro}
The sparse recovery problem poses a deceptively simple yet profound question: Is it possible to approximately reconstruct the best $k$-term approximation of a vector $x\in\mathbb{R}^n$ using only a few linear queries, and how efficiently can this be achieved? This question, in its various forms and under different constraints on the queries and computational resources, is a central challenge in fields like signal processing~\cite{CandesT05, CandesT06, CandesRT06}, data compression~\cite{LiXW13a}, streaming algorithms under the name of the \emph{heavy hitters} problem~\cite{CharikarCF04, LarsenNNT19}, as well as applied mathematics, usually in the form of \emph{combinatorial group testing}~\cite{DuH99}. By focusing on the recovery of sparse vectors---where most of the energy is concentrated in a small number of coordinates---research in this area has unlocked new possibilities across a wide range of applications, from MRI imaging~\cite{LustigDSP08, JaspanFL15} to real-time data analysis~\cite{LiXW13a}. The exploration of this problem has fundamentally transformed how we handle vast amounts of data, enabling more efficient data storage, transmission, and processing in diverse industries~\cite{LustigDSP08,JaspanFL15,EldarK12, HermanS09}.

In this paper, we focus on one of the most well-studied forms of the sparse recovery problem~\cite{CharikarCF04, CormodeM06, GilbertI10, PriceW11, Price11, GilbertLPS12,  NakosS19}. The goal is to design a (typically \emph{randomized}) matrix $A\in\mathbb{R}^{m\times n}$ that satisfies the following criteria: (i) given the measurement $Ax$ for~\makebox{$x \in \mathbb{R}^n$}, we can recover a vector~$x'$ that retains most of the energy from the top $k$ coordinates of~$x$, (ii)~the number of rows (also referred to as measurements), $m$, is minimized, (iii)~the encoding time, corresponding to the column sparsity of $A$, is as small as possible, (iv)~the recovery procedure is fast, ideally even \emph{sublinear-time} in $n$. 

The celebrated work of Candes, Tao, Donoho and others~\cite{CandesT05, CandesT06, CandesRT06} has demonstrated that $m = O(k \log (n/k))$ rows are sufficient for approximate recovery \emph{for all vectors} in time that is polynomial in $n$ under a guarantee usually referred to as~$\ell_2/\ell_1$. Alternatively, it has been shown how to recover such an $x'$ in time $O(n \log (n/k))$ if the notion of approximation is relaxed further to an~$\ell_1/\ell_1$ guarantee~\cite{IndykR08}. Both guarantees have been accomplished with sublinear-time algorithms, albeit with a polylogarithmic measurement overhead and polynomial in $m$ recovery time for the former~\cite{GilbertSTV07}, and a slight sub-optimality for the latter due to a multiplicative factor of $(\log_k n)^\gamma$ factor for some constant $\gamma <1$ in the number of measurements~\cite{GilbertLPS17}.

The recovery guarantee we focus on in this work is the so-called $\ell_2/\ell_2$ guarantee, which is strictly stronger than the previously mentioned $\ell_2/\ell_1$ and $\ell_1/\ell_1$ guarantees and can only be achieved through the use of randomness~\cite{Cohen2008}, i.e., no deterministic matrix can recover \emph{all vectors}. Specifically, the $\ell_2/\ell_2$ guarantee we are shooting for demands that the recovered vector $x'$ satisfies
\begin{equation*}
    \norm{x-x'}_2 \leq (1+\epsilon) \cdot\! \min_{z: \norm{z}_0 \leq k} \norm{x-z}_2
\end{equation*}
This guarantee is arguably the most well-studied one in the context of sublinear-time sparse recovery, appearing in various forms across the literature~\cite{CormodeM06, PriceW11, Price11, GilbertLPS12, HassaniehIKP12a, HassaniehIKP12b, Price13, Kapralov16, Kapralov17, IndykK14, IndykKP14, CevherKSZ17, NakosS19, CheraghchiN20}. Any (randomized) matrix which enables $\ell_2/\ell_2$ sparse recovery with constant probability must have at least $m= \Omega( (k/\epsilon) \log (n/k))$ rows~\cite{PriceW11}, and this lower bound is matched by upper bounds~\cite{GilbertLPS12, NakosS19}. Specifically, the state of the art is the work of Nakos and Song~\cite{NakosS19} which achieves the aforementioned optimal bound on~$m$ with column sparsity $O(\log n)$ and recovery time $O(m \log n)$, improving upon the previous work of Gilbert, Li, Porat and Strauss~\cite{GilbertLPS12} by polylogarithmic factors in both recovery time and column sparsity. While the number of rows is best-possible, however, there is no lower bound for the query time $O(m \log n)$.

Our contribution is a concise algorithm for the $\ell_2/\ell_2$ sparse recovery problem achieving $m = O((k/\epsilon) \log n)$ rows, column sparsity $O(\log n)$, and most importantly recovery time $O(m)$:

\begin{restatable}[Fast $\ell_2/\ell_2$ Sparse Recovery]{theorem}{thmmain} \label{thm:main}
Let $k \leq n$ be integers and let $0 < \epsilon < 1$. There is a randomized construction of a matrix $A\in\mathbb{R}^{m\times n}$ with $m=O((k/\epsilon) \log n)$ rows and $O( \log n)$ non-zero entries per column, such that given $Ax$ we can recover in time $O(m)$ a vector $x'$ satisfying 
\[ \|x- x'\|_2 \leq (1+\epsilon) \|x_{-k}\|_2,\]
with arbitrarily large constant probability. Here, $x_{-k}$ is the vector obtained from $x$ by zeroing out the largest~$k$ entries in absolute value.
\end{restatable}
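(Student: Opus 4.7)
The plan is a standard $\ell_2/\ell_2$ head--tail reduction combined with a weighted hypergraph peeling decoder. Call a coordinate $i$ \emph{heavy} if $x_i^2 \geq (\epsilon/k)\|x_{-k}\|_2^2$; there are only $O(k/\epsilon)$ such coordinates, and it is well known that recovering each of them with additive error $O((\epsilon/k)\|x_{-k}\|_2^2)$ yields the promised bound $\|x-x'\|_2 \leq (1+\epsilon)\|x_{-k}\|_2$. So the whole game is to identify and estimate this set of $O(k/\epsilon)$ heavy coordinates from the sketch.

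For the sketch itself I would take $B = \Theta(k/\epsilon)$ buckets together with $d = \Theta(\log n)$ independent hash functions $h_1,\ldots,h_d:[n]\to[B]$ and random signs $\sigma_1,\ldots,\sigma_d:[n]\to\{\pm 1\}$, storing the $m = Bd = O((k/\epsilon)\log n)$ bucket sums $S_{j,b} = \sum_{i:\, h_j(i)=b}\sigma_j(i)\,x_i$. This gives column sparsity $d = O(\log n)$ as required. A second-moment calculation shows that the tail contribution to a fixed bucket has expected squared value $O(\|x_{-k}\|_2^2 / B) = O((\epsilon/k)\|x_{-k}\|_2^2)$, so a vast majority of buckets are noise-clean. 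The decoder then views each heavy coordinate as a $d$-uniform hyperedge on the $B$ buckets, sorts the heavies by magnitude, and peels largest-first: at each step, locate a bucket whose residual value (after removing contributions of already-recovered heavies) matches a single remaining coordinate in both sign pattern and magnitude, read off its estimate, and subtract it.

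The principal technical challenge is the analysis of this weighted peeling process. In the classical IBLT analysis all edges carry equal weight, and peeling succeeds with high probability precisely when the random $d$-uniform hypergraph has no nonempty $2$-core — a condition that holds below a universal density threshold. Here the weights are heterogeneous and correlated with edge placement, but they also make the task easier: a bucket shared between a large heavy and a small heavy is still usable for recovering the large one. The right formulation, I expect, is a magnitude-stratified notion of isolation — partition heavies into $O(\log n)$ dyadic weight classes and prove inductively, from the top class down, that every remaining heavy coordinate has at least one of its $d$ buckets where the aggregate residual noise from the tail plus from strictly smaller heavies stays below its own magnitude. Because each heavy has $d = \Theta(\log n)$ independent trials, a Chernoff-plus-union argument over the $O(k/\epsilon)$ heavies should drive the failure probability below an arbitrary constant. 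For the running time, a queue keyed on ``currently peelable buckets'' together with the $O(\log n)$ column sparsity yields the desired $O(m)$ total work along the usual IBLT decoding template.
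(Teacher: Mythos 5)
Your high-level plan shares the paper's spirit (hash to buckets, model heavies as hyperedges, peel), but there are several gaps that would prevent the proposal from being turned into a proof, and the design also deviates from the paper in ways that make the gaps worse rather than better.

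\textbf{Missing identification mechanism.} The sketch you propose stores only the bucket sums $S_{j,b}=\sum_{i:h_j(i)=b}\sigma_j(i)x_i$. From these sums alone you cannot learn \emph{which} coordinate $i$ is the dominant one in a clean bucket without iterating over all $n$ candidates, which blows the $O(m)$ time budget; and you cannot even locate $i$'s other buckets $h_{j'}(i)$ without already knowing $i$. The paper solves this by attaching to every bucket a miniature $1$-sparse recovery structure based on a linear-time-decodable error-correcting code (\cref{lem:code}), so that the index of the dominant coordinate is embedded in the $O(\log n)$ measurements of each bucket and can be read off directly. Without some analogous identification gadget your decoder has no way to ``read off'' the peeled coordinate, and ``matches a single remaining coordinate in sign pattern and magnitude'' presupposes a candidate set that you do not have.

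\textbf{Accumulated estimation errors are not addressed.} Each peel subtracts an \emph{estimate} $\hat x_i$, not $x_i$, and the residual error $|x_i-\hat x_i|^2$ pollutes every bucket $i$ was hashed to, for the rest of the process. Since a bucket can contain up to $\Theta(\log n/\log\log n)$ heavies, the accumulated error can be a $\Theta(\log n/\log\log n)$ factor larger than the per-coordinate tolerance $\Theta((\epsilon/k)\|x_{-k}\|_2^2)$, which destroys the very ``clean bucket'' condition you want to use. The paper (Problem~2 in \cref{sec:overview}) treats this as a central obstacle: it maintains a \emph{separate}, never-modified Count-Sketch and uses the Minton--Price tail bound (\cref{lem:count-sketch}) to show the errors are smaller than the worst case by a factor of $\Theta(\log n)$ on average, then proves a per-bucket accumulated-error bound in \cref{lem:point-estimates-intermediate,lem:bucket-error}. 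Your proposal, which estimates from the same residual bucket sums being modified during peeling, makes this issue strictly harder, not easier. Moreover, these errors are \emph{correlated} across your $d$ hash functions (the same $\hat x_{i'}$ contaminates all $d$ buckets of $i'$), so the ``$d$ independent trials'' Chernoff argument does not apply to them.

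\textbf{The union-bound target is both too strong and not obviously reachable.} You aim to show that \emph{every} heavy coordinate has a clean bucket with high probability. The paper instead accepts that this can fail for a small \emph{mass}: it proves that an edge is non-peelable with probability $O(\mu/w_e)$ (\cref{lem:peeling-random-vertex-weights,lem:expected-spreadness}), and then bounds the expected lost energy over $T$ by $O(\epsilon)\|x_{-k}\|_2^2$ (\cref{lem:non-peelable-weight}). This is exactly the kind of ``recover most of the mass'' statement that the $\ell_2/\ell_2$ guarantee requires, and it makes the paper's one-shot $h=3$ design (a sparse random $3$-uniform hypergraph where nearly every component is a hypertree or unicyclic) analyzable by a clean weighted-peeling induction. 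Your $d=\Theta(\log n)$-uniform multipartite design trades that structural simplicity for a high-probability union bound, but as noted, the estimation errors break the independence needed for that bound, and the threshold-scale heavies with $|x_i|^2\approx(\epsilon/k)\|x_{-k}\|_2^2$ have only constant clean-ness probability per bucket, so the constant in $d=\Theta(\log n)$ would need to be carefully argued against the union bound over $O(k/\epsilon)$ heavies. Finally, you also need a mechanism to threshold candidates (the paper keeps a tail-estimation sketch, \cref{lem:tail-sketch}); your sketch provides none, and the description assumes knowledge of $\|x_{-k}\|_2$.
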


Notably, our algorithm is \emph{optimal} whenever $\log n = \Theta(\log (n/k))$, i.e., whenever $k \leq n^{1-\zeta}$ for some arbitrarily small constant $\zeta > 0$. Running time-wise, it is the first algorithm to achieve time $O(m)$ for the problem. Generally, an algorithm with optimal $m$ and optimal recovery time $O(m)$ is the holy grail of sublinear-time sparse recovery, and to our knowledge our algorithm is the \emph{first} such algorithm for any robust\footnote{I.e., where the vector $x$ is not \emph{exactly} $k$-sparse.} sparse recovery problem. The recovery algorithm itself is very simple (we give a description below) and all the complexity is pushed to the analysis, using a new technique we call \emph{weighted hypergraph peeling}.

Additionally, we believe this algorithm has significant practical potential, as it only requires a black-box $1$-sparse recovery routine and a standard Count-Sketch data structure~\cite{CharikarCF04, MintonP14} to query the values of $x$. While the two main previous sublinear-time approaches~\cite{GilbertLPS12, NakosS19} require fine-tuning of parameters, our approach is distinct and provides deeper insights into the structure of $\ell_2/\ell_2$ sparse recovery, both from a technical and a conceptual perspective.

\paragraph{Algorithm Description.}
As promised, we give a nearly complete description of the recovery algorithm in the introduction. Each coordinate of the vector $x$ is hashed to some constant number~$h \geq 2$ out of $O(k/\epsilon)$ buckets, where each bucket is augmented with an $O(\log n)$-row matrix that allows fast $1$-sparse recovery with high probability. This yields $O((k/ \epsilon) \log n)$ rows in total. On the side we use a standard Count-Sketch with $O((k/ \epsilon) \log n)$ rows which allows estimating each individual $x_i$ upon demand~\cite{CharikarCF04, MintonP14}. For the sake of exposition, here we assume that we know $\|x_{-k}\|_2$.

\bigskip
\begin{adjustbox}{frame={\fboxrule} 8pt,minipage={\textwidth-2\fboxrule-16pt}}
\begin{algorithmic}[1]
    \State Initialize $R \gets \emptyset$ and let $Q$ be the set of all buckets
    \While{$Q \neq \emptyset$}
        \State Take an arbitrary bucket $v \in Q$ and remove $v$ from $Q$
        \State Run the $1$-sparse recovery routine on bucket $v$ to obtain a candidate $i$
        \State Estimate $x_i$ using Count-Sketch to obtain $\hat{x}_i$
        \If{$|\hat x_i|^2 \geq \frac{\epsilon}{2k} \cdot \norm{x_{-k}}_2^2$ \AND{} $i \not\in R$} 
            \State Update $R \gets R  \cup \set{i}$ and (re-)insert to $Q$ all buckets that $i$ participates in 
            \State Subtract the effect of $\hat{x}_i$ from all the buckets $i$ participates in
        \EndIf
    \EndWhile
    \State Sort $R = \set{i_1, \dots, i_{|R|}}$ such that $|\hat x_{i_1}| \geq \dots \geq |\hat x_{i_{|R|}}|$
    \State Let $S = \set{i_1, \dots, i_{\min(3k, |R|)}}$
    \State\Return $x' \gets \hat x_S$
\end{algorithmic}
\end{adjustbox}
\bigskip
\section{Technical Overview} \label{sec:overview}
In this section we give a high-level overview of our new sketch. It relies on hashing at its main building component, similarly to \cite{GilbertLPS12, NakosS19} and in all the heavy hitters literature. However, our specific approach differs quite substantially from previous work. We briefly review the two known approaches.

\paragraph{Known Approaches: Iterative and Hierarchical.}
In~\cite{GilbertLPS12} the vector $x$ is hashed to $O(k/\epsilon)$ buckets, augmented with an appropriate error-correcting code in such a way that from all buckets that contain \emph{only one} heavy hitter that heavy hitter is recoverable. However, we only expect a constant fraction, say $k/2$, of the heavy hitters to be isolated in their respective buckets. So letting~$x'$ denote the resulting recovered vector, there are still up to $k/2$ heavy hitters remaining in the \emph{residual vector} $x - x'$. To recover these, one can proceed in an iterative fashion, i.e., we essentially repeat the same hashing process where the number of buckets can be reduced by a constant factor. This process can be set up in a such way that eventually, after $O(\log k)$ rounds, it yields a vector satisfying the $\ell_2/\ell_2$ guarantee. Due to the iterative procedure, one has to pay at least one $\log n$ factor in the column sparsity and decoding time. For details, we refer to~\cite{GilbertLPS12}. 

In contrast, the improved algorithm in~\cite{NakosS19} builds a $\log (k/\epsilon)/ \log \log (k/\epsilon)$-degree interval tree over $[n]$ and traverses that tree from top to bottom in order to find where the heavy hitters are, recognizing \emph{heavy} intervals per level. At each level elements belonging to the same interval are hashed to the same bucket and combined with \emph{gaussians}, instead of the usual random signs. The choice of gaussians is crucial in order to reduce the likelihood of cancellation, and using the anti-concentration of the gaussian distribution it is shown that \emph{most} energy from heavy hitters is retained if one sets up that process correctly. The branching factor of this process yields one additional $\log $ factor in the decoding time, and changing the branching factor crucially affects correctness. For more details, we refer to~\cite{NakosS19}.

\paragraph{New Approach: One-Shot.}
In a nutshell, as we are aiming for the best-possible $O((k / \epsilon) \log n)$ recovery time, our goal is to perform \emph{one-shot} recovery rather than iterative recovery (as in~\cite{GilbertLPS12}) or hierarchical recovery (as in~\cite{NakosS19}). That is, we will throughout consider a single hashing scheme.

Before we get into the details of our new approach, let us first fix some definitions. We call $i$ a \emph{heavy hitter} (or just \emph{heavy}) if $|x_i|^2 \geq \frac{\epsilon}{k}  \cdot \norm{x_{-k}}_2$, and otherwise we call $i$ \emph{light}. Let $T \subseteq [n]$ denote the set of the largest $k$ entries in $x$. Following~\cite{NakosS19}, instead of computing $x'$ directly we resort to a slight simplification of the original task: It suffices to recover a subset of heavy hitters $R$ with the property that we do not miss too much energy from $T$, i.e., such that
\begin{equation*}
    \norm{x_{T \setminus R}}_2^2 \leq O(\epsilon) \norm{x_{-k}}_2^2.
\end{equation*}
One can then easily refine $R$ to derive the desired approximation $x'$ of $x$. For the rest of this overview we elaborate on how to compute such a set $R$.

\paragraph{Inspiration: Hypergraph Peeling.}
Our algorithm is inspired by hypergraph peeling processes. This is a widely-applied technique, e.g.\ in the design of efficient data structures like Invertible Bloom Filters~\cite{GoodrichM11}, cuckoo hashing~\cite{PaghR01,DietzfelbingerW07,Mitzenmacher09,FountoulakisP12,FriezeM12}, but also in the context of error-correcting codes~\cite{KarpLS04, LubyMSS01} and random SAT instances~\cite{Molloy04}. Before proceeding, we find it instructive to give a quick recap of this technique. The setup is that there are $K$ elements which are supposed to be hashed to $O(K)$ buckets in such a way that we can recover all $K$ elements given only the buckets. The main idea is to hash each element not just to a single bucket, but to $h > 1$ buckets, where each bucket stores the sum of the elements that are hashed to it. In order to recover the elements from the buckets, one repeatedly finds a bucket that contains exactly one item, recovers that one item and removes (i.e., subtracts) it from all the~$h$ bucket it participates in. This creates some new singletons, so the algorithm proceeds to recover these new singletons, and so on.

The analysis hinges on random (hyper-)graph theory. Specifically, one models the hashing scheme as an $h$-uniform hypergraph where each bucket is a vertex, and each element is a hyperedge spanning the~$h$ buckets it is hashed to. The goal is to show that this hypergraph is \emph{peelable}, i.e., that we can step-by-step remove edges containing some degree-1 node (i.e., a bucket containing just one element). Clearly, as the hash function is uniformly random, so is this hypergraph. And as the number of vertices (i.e., buckets) is larger than the number of hyperedge (i.e., elements) by a constant factor, the hypergraph is quite sparse. In such a situation one can prove that with constant probability all connected components are loosely connected and therefore peelable, at least as long as~\makebox{$h \geq 2$} (for $h \geq 3$ this event even happens with high probability).

This peeling process technique has already been used before in sparse recovery~\cite{Price11,EppsteinG07,LiYPPR19}. In~\cite{Price11} the support of the vector $x$ is known and thus the algorithm is concerned solely about estimation rather than the harder task of identification, whereas in~\cite{EppsteinG07,LiYPPR19} the authors assume that the vector is exactly $k$-sparse. In our case, where we assume nor knowledge of the support nor exact sparsity, analyzing this process is quite different and more challenging.

\paragraph{Weighted Hypergraph Peeling.}
Back to our context, let us attempt to mimic the Invertible Bloom Filter setup: We hash the~$n$ coordinates to $O(k/\epsilon)$ buckets in such a way that we can ideally recover all $O(k / \epsilon)$ heavy elements. Similarly to the Invertible Bloom Filter, we hash each element not just to a single bucket, but instead to some $h \geq 2$ buckets.\footnote{Our final algorithm succeeds only with constant probability, so it actually suffices to take $h = 2$ (i.e., to consider graphs). Following related literature we will nevertheless treat the general hypergraph case $h \geq 2$ here, also because we are optimistic that this generalization will pay off for future work (possibly to reduce the error probability in \cref{thm:main}).} We can augment each bucket with a scheme that can perform $1$-sparse recovery in $O(\log n)$ measurements and time with high probablity in $n$, i.e., whenever some $x_i$ is hashed to a bucket $v$ such that the energy $|x_i|^2$ significantly exceeds the total energy of all other entries hashed to the bucket~$v$ (by some large constant factor), then we can successfully recover $i$. On the side we keep an instance of Count-Sketch to have query access to a point-wise approximation~$\hat x$ of $x$~\cite{CharikarCF04}, and we keep a short linear sketch to approximate the tail~\makebox{$\hat t \approx \norm{x_{-k}}_2^2$}~\cite{NakosS19}. With these tools we simulate the Invertible Bloom Filter decoding process: For each bucket we run the $1$-sparse recovery procedure to find a candidate $i$. By comparing $\hat x_i$ with~$(\epsilon / k) \cdot \hat t$ we can reliably test if $i$ is heavy or light. If we determine that $i$ is light, the recovery failed and we discard $i$. If we determine that $i$ is heavy, then we store $i$ and subtract $\hat x_i$ from all buckets that $i$ participates in. The hope is that we can peel the heavy hitters, one by one, as in the Invertible Bloom Filter decoding procedure.

The analysis of this process naturally corresponds to a generalized \emph{weighted} hypergraph peeling process. Specifically, we model the recovery algorithm as a vertex- and edge-weighted $h$-uniform hypergraph $G = (V, E, w)$, where the nodes $V$ correspond to buckets and each edge $e$ corresponds to some heavy hitter $x_i$. We sometimes write $e_i$ to denote the edge corresponding to $x_i$. The edge weights are $w_{e_i} = |x_i|^2$ and the vertex weights are $w_v = \sum_i |x_i|^2$, where the sum is over all \emph{light} entries hashed to the bucket $v$. So, we have that $\mathbb{E}[w_v] = O(\frac{\epsilon}{k} \|x_{-k}\|_2^2)$. In this model, let us call an edge $e \in E$ \emph{free} if it is incident to some vertex $v \in e$ (i.e., bucket) such that
\begin{equation*}
    w_e \geq \rho \cdot \parens*{w_v + \sum_{\substack{e \neq e' \in E\\v \in e'}} w_{e'}},
\end{equation*}
where $\rho$ is a large constant called the \emph{recovery threshold}. We say that $e$ is \emph{peelable} if we can repeatedly remove free edges from the graph so that in the remaining graph $e$ itself becomes free.

This definition is indeed a generalization of the unweighted notion from before,\footnote{To see this, let all vertex weights be $0$, let all edge weights be $1$ and take a recovery threshold $\rho > 1$. Then an edge is free if and only if it is incident to a degree-$1$ node.} and it is easy to verify that it is chosen in such a way that our new algorithm can recover all peelable edges in~$G$. Therefore, ideally we would wish to prove that all edges in~$G$ are peelable and conclude the proof here. Unfortunately, there are two major problems in what we have outlined so far. In the following we describe these two problems and our solutions.

\paragraph{Problem 1: Not All Edges Are Peelable.}
The first serious problem is that, in contrast to the unweighted setting, it is unfortunately not true that all edges are peelable with high probability.\footnote{Namely, consider a heavy hitter with energy $|x_i|^2 \approx (\epsilon / k) \norm{x_{-k}}_2^2$. It corresponds to some edge $e_i$ with weight $w_{e_i} = |x_i|^2$. The total energy of the light elements is up to $\norm{x_{-k}}_2^2$, and thus the expected vertex weight $\Ex[w_v]$ in the hypergraph is roughly~\makebox{$\mu \approx (\epsilon / k) \norm{x_{-k}}_2^2$}. Therefore, with constant probability \emph{all} vertices $v$ incident to $e$ have weight~\makebox{$w_v \geq w_e$}, which effectively means that we can never recover $e$.} Does this mean that our approach is doomed? Recall that it is not strictly necessary to recover \emph{all} heavy hitters, but that we can afford to miss a small fraction (energy-wise). Therefore, the question is whether we can recover edges of sufficiently large total energy. 

Our analysis of this step turned out to be surprisingly clean. In fact, we only care about two aspects of $G$: (1) The hyperedges are random and sparse, and (2) the vertex weights are chosen randomly with expected weight $\mu$ (though they possibly depend on each other). In such a situation, we show that any edge $e$ fails to be peelable with probability at most $O(\mu / w_e)$.

The proof of this statement involves two steps (see \cref{lem:peeling-random-vertex-weights,lem:expected-spreadness}). First, fix the randomness of the edges (1) and focus on the randomness of the vertex weights (2). Intuitively, any large-weight edge $e$ only fails to be peelable if for all nodes $v$ incident to $e$, there are other non-peelable edges $e'$ of comparable total weight incident to $v$. But for these edges $e'$ the same argument applies. Hence, by induction $e$ only fails to be peelable if it is at the center of a sufficiently large and far-spreading connected component in which the weights slowly decrease as one moves away from $e$. Conversely, edges $e$ in smaller components should be peelable. Formally, this shows that each edge $e$ is peelable with probability $O(D(e) \cdot \mu / w_e)$ where $D(e) = \sum_{e'} \rho^{1+d(e, e')}$ measures how far-spread the connected component of $e$ behaves (the sum is over all edges $e'$ in the same connected component as $e$ and $d(e, e')$ is the distance between the edges $e$ and $e'$). In the second step we fix the randomness of the vertex weights (2) and only consider the randomness of the edges~(1). As long as the graph is sufficiently sparse, it can be shown that $\Ex[D(e)] = O(1)$. This concludes the proof that each edge is non-peelable with probability at most $O(\mu / w_e)$.

Getting back to our recovery algorithm, we demonstrate that this probability is indeed small enough so that the expected loss in energy is tolerable:
\begin{equation*}
    \mathbb{E}\left[\norm{x_{R \setminus T}}_2^2\right] = \sum_{i \in T} \Pr(\text{$e_i$ not peelable}) \cdot |x_i|^2 = \sum_{i \in T} O\parens*{\frac{\mu |x_i|^2}{w_{e_i}}} = O(|T| \mu) = O(\epsilon) \norm{x_{-k}}_2^2.
\end{equation*}

\paragraph{Problem 2: Accumulated Estimation Errors.}
There is a second serious problem with the approach outlined above. In the Invertible Bloom Filter decoder it is crucial that we can \emph{subtract} recovered elements from all relevant buckets---however, in our case this subtraction is \emph{not perfect}. Recall that instead of subtracting $x_i$ we can only subtract the Count-Sketch approximation~$\hat x_i$. However, this incurs an additive error of up to $\frac{\epsilon}{k} \cdot \norm{x_{-k}}_2^2$ to the energy of the respective bucket $v$ which affects all future recoveries involving this bucket $v$. Even worse, it could very well happen that these errors \emph{accumulate}. Concretely, in some buckets $v$ it could become impossible to perform recoveries---not because the noise $w_v$ is large, but because the accumulated approximation error of the previously recovered elements in that bucket $v$ became too large. In fact, we can tolerate an accumulated error of at most $O(\frac{\epsilon}{k} \cdot \norm{x_{-k}}_2^2)$, so even after a \emph{constant} number of recoveries per bucket, in the worst case, further recoveries become impossible.

Perhaps a first instinct would be to give up on these buckets, hoping again that the total energy loss is tolerable. Unfortunately, this does not work out at all---if we can only perform up to $O(1)$ recoveries per bucket then we effectively lose a constant fraction of the heavy hitters.

Instead, our solution relies on the tighter analysis of Count-Sketch due to Minton and Price~\cite{MintonP14}. They proved that a Count-Sketch with column sparsity $r = \Theta(\log n)$ and $O(r k / \epsilon)$ rows does not only give a point-wise approximation $\hat x$ satisfying that $|x_i - \hat x_i|^2 \leq \frac{\epsilon}{k} \cdot \norm{x_{-k}}_2^2$ (as in the original analysis of Count-Sketch~\cite{CharikarCF04}), but in fact provides an estimate that is more accurate by a factor of~$r$ \emph{on average.} Formally, the guarantee is that for all $i$,
\begin{equation*}
    \Pr\parens*{|x_i - \hat x_i|^2 \geq \frac{\lambda}{r} \cdot \frac{\epsilon}{k} \cdot \norm{x_{-k}}_2^2} \leq \exp(-\Omega(\lambda)).
\end{equation*}
see \cref{lem:count-sketch}. Intuitively, the estimation error $|x_i - \hat x_i|^2$ behaves like an exponentially distributed random variable. Combining this insight with the fact that with high probability the maximum bucket load from heavy hitters is \smash{$O(\frac{\log(k/\epsilon)}{\log\log(k/\epsilon)}) = O(\log n)$}, it follows that the expected accumulated error per bucket is $O(\frac{\epsilon}{k} \cdot \norm{x_{-k}}_2^2)$ (rather than the naive upper bound which is worse by the bucket load), which is tolerable in our setting. If the errors were independent of each other, a standard concentration bound for sums of independent subexponentially distributed random variables (e.g.,~\cite{Janson18}) would imply that the bound holds in fact with high probability. As the errors are dependent on each other our formal proof looks slightly different; see \cref{lem:point-estimates-intermediate,lem:bucket-error}. Curiously we depend on the tighter upper bound of \smash{$O(\frac{\log(k/\epsilon)}{\log\log(k/\epsilon)}) = O(\frac{\log n}{\log\log n})$} on the maximum bucket load.

\medskip

Our approach seems to inherently require $(k/\epsilon) \log n$ rows, instead of $(k/\epsilon) \log(n / k)$ as we need all the $1$-sparse recovery routines to succeed with high probability. With $\log (n/k)$ rows there is a $\poly(k/n)$ probability of not recovering a heavy hitter from a bucket, and this is not sufficiently small for the peeling process to go through. Nevertheless, the algorithm is very concise and is likely to be relevant in practice as well.

\paragraph{Outline.}
This concludes the high-level overview. In \cref{sec:toolkit} we describe and analyze our abstracted hypergraph peeling process, and describe the additional tools from sparse recovery and coding theory. In \cref{sec:sketch} we then give the complete sketch and formally analyze the peeling-based recovery algorithm.
\section{Toolkit} \label{sec:toolkit}
Throughout we write $[n] = \set{1, \dots, n}$. For a vector $x \in \Real^n$ and a set $S \subseteq [n]$, we let $x_S$ denote the vector obtained from $x$ by zeroing out all coordinates not in $S$. We also set $x_{-k}$ to denote the vector obtained from $x$ by zeroing the largest $k$ coordinates in magnitude (breaking ties arbitrarily but consistently).

\subsection{Hypergraph Peeling}
An \emph{$h$-uniform hypergraph} $G = (V, E)$ consists of \emph{vertices} $V$ and \emph{hyperedges} $E \subseteq \binom{V}{h}$ (i.e., each hyperedge is a size-$h$ set of vertices). A $2$-uniform hypergraph is simply a graph, and many of the basic concepts on graphs generalize to $h \geq 3$. For $u, v \in V$, a \emph{$u$-$v$-path} is a sequence of vertices~\makebox{$u = u_1, \dots, u_\ell = v$} such that each pair $\set{u_i, u_{i+1}}$ appears in a some hyperedge. We call~$u, v$ \emph{connected} if there exists a $u$-$v$-path, and we define their \emph{distance} $d(u, v)$ as the length of the shortest $u$-$v$-path. For sets $S, T \subseteq V$, we set $d(S, T) = \min_{s \in S, t \in T} d(s, t)$. The hypergraph $G$ is \emph{connected} if all pairs of nodes are connected, and we call a maximal connected subhypergraph a \emph{connected component}. We typically denote by $C(v)$ and~$C(e)$ the connected component containing a vertex~\makebox{$v \in V$} or a hyperedge $e \in E$.

A connected $h$-uniform hypergraph $G = (V, E)$ is a \emph{hypertree} if $|V| = (h - 1) |E| + 1$. It is easy to verify that this is the largest number of nodes in a connected hypergraph with $|E|$ hyperedges. If instead $|V| = (h - 1) |E|$ we call $G$ \emph{unicyclic}. See \cref{fig:hypertree-unicyclic} for an illustration.

\begin{figure} \label{fig:hypertree-unicyclic}
\caption{Illustrates a hypertree (left) and two unicyclic hypergraphs (middle and right).}
\vspace{-\smallskipamount}
\begin{adjustbox}{frame={\fboxrule} 18pt,minipage={\textwidth-2\fboxrule-36pt}}
\def\nodedist{1cm}
\def\edgesep{0.2cm}
\newcommand\drawedge[3]{
    \path[edg]
        let \p1 = (#1.center) in
        let \p2 = (#2.center) in
        let \p3 = (#3.center) in
        (\x2, \y2) + ({atan2(\y2 - \y1, \x2 - \x1)-90}:{\edgesep})
        arc ({atan2(\y2 - \y1, \x2 - \x1)-90}:{atan2(\y3 - \y2, \x3 - \x2)-90}:{\edgesep})
        -- ++(\x3 - \x2, \y3 - \y2)
        arc ({atan2(\y3 - \y2, \x3 - \x2)-90}:{atan2(\y1 - \y3, \x1 - \x3)-90}:{\edgesep})
        -- ++(\x1 - \x3, \y1 - \y3)
        arc ({atan2(\y1 - \y3, \x1 - \x3)-90}:{atan2(\y2 - \y1, \x2 - \x1)+270}:{\edgesep})
        -- cycle;}
\tikzset{
    vtx/.style={fill, circle, inner sep=1pt},
    edg/.style={draw},
}
\begin{tikzpicture}
\node[vtx] (a1) {};
\node[vtx, below left=0.9cm and 0.5cm of a1] (a2) {};
\node[vtx, below right=0.9cm and 0.5cm of a1] (a3) {};
\node[vtx, below left=0.9cm and -0.4cm of a3] (a4) {};
\node[vtx, below right=0.9cm and 1.2cm of a3] (a5) {};
\node[vtx, below left=0.9cm and 0.2cm of a2] (a6) {};
\node[vtx, below right=0.9cm and 0.6cm of a2] (a7) {};
\node[vtx, below left=0.9cm and 2.0cm of a2] (a8) {};
\node[vtx, below right=0.9cm and -1.2cm of a2] (a9) {};
\node[vtx, below left=0.9cm and 0.4cm of a6] (a10) {};
\node[vtx, below right=0.9cm and 0.4cm of a6] (a11) {};
\drawedge{a1}{a2}{a3}
\drawedge{a3}{a4}{a5}
\drawedge{a2}{a6}{a7}
\drawedge{a2}{a8}{a9}
\drawedge{a6}{a10}{a11}
\end{tikzpicture}
\hfill
\begin{tikzpicture}
\node[vtx] (a1) {};
\node[vtx, below left=0.9cm and 0.5cm of a1] (a2) {};
\node[vtx, below right=0.9cm and 0.5cm of a1] (a3) {};
\node[vtx, below left=0.9cm and -0.4cm of a3] (a4) {};
\node[vtx, below right=0.9cm and 1.2cm of a3] (a5) {};
\node[vtx, below left=0.9cm and 0.2cm of a2] (a6) {};
\node[vtx, below right=0.9cm and 0.6cm of a2] (a7) {};
\node[vtx, below left=0.9cm and 2.0cm of a2] (a8) {};
\node[vtx, below right=0.9cm and -1.2cm of a2] (a9) {};
\node[vtx, below left=0.9cm and 0.4cm of a6] (a10) {};
\node[vtx, below right=0.9cm and 0.4cm of a6] (a11) {};
\node[vtx, below right=0.9cm and 0.4cm of a7] (a12) {};
\drawedge{a1}{a2}{a3}
\drawedge{a3}{a4}{a5}
\drawedge{a2}{a6}{a7}
\drawedge{a2}{a8}{a9}
\drawedge{a6}{a10}{a11}
\drawedge{a7}{a11}{a12}
\end{tikzpicture}
\hfill
\begin{tikzpicture}
\def\rin{1.1cm}
\def\rout{1.7cm}
\node[vtx] (b1) at (90:{\rin}) {};
\node[vtx] (b2) at (162:{\rin}) {};
\node[vtx] (b3) at (234:{\rin}) {};
\node[vtx] (b4) at (306:{\rin}) {};
\node[vtx] (b5) at (18:{\rin}) {};
\node[vtx] (b6) at (126:{\rout}) {};
\node[vtx] (b7) at (198:{\rout}) {};
\node[vtx] (b8) at (270:{\rout}) {};
\node[vtx] (b9) at (342:{\rout}) {};
\node[vtx] (b10) at (54:{\rout}) {};
\drawedge{b6}{b2}{b1}
\drawedge{b2}{b7}{b3}
\drawedge{b3}{b8}{b4}
\drawedge{b5}{b4}{b9}
\drawedge{b10}{b1}{b5}
\end{tikzpicture}
\end{adjustbox}
\end{figure}

\paragraph{Unweighted Hypergraph Peeling.}
We start with two lemmas that are useful in the analysis of unweighted peeling processes. Throughout we consider \emph{uniformly random} hypergraphs that are sampled with repetition---i.e., starting from an empty graph on $N$ nodes, we insert $M$ uniformly random $h$-uniform hyperedges into the graph. It is possible for the same edge to be sampled multiple times, so the resulting object is strictly speaking a \emph{multi-}hypergraph. But this event happens only with small probability in our parameter regime, and thus we will often implicitly assume simple hypergraphs. For a proof of the following lemma, see e.g.\ the text book~\cite[Lemma 12.3]{FriezeK15}.

\begin{lemma} \label{lem:hypertrees-unicyclic}
Let $h \geq 3$ and let $G$ be a $h$-uniform hypergraph with $N$ vertices and~\makebox{$M \leq \frac{N}{2h}$} hyperedges sampled uniformly at random (with repetition). Then with probability \smash{$1 - N^{-1+o(1)}$} each connected component is a hypertree or unicyclic.
\end{lemma}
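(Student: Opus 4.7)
The plan is a standard first-moment calculation in the subcritical random-hypergraph regime. A connected component that is neither a hypertree nor unicyclic has ``excess'' $(h-1)|E|-|V|+1\geq 2$, so it contains some $j\geq 2$ hyperedges whose vertex-union has size $s\leq (h-1)j-1$. Therefore it suffices to show that the expected number of such \emph{bad} $(j,s)$-configurations is $N^{-1+o(1)}$ and apply Markov's inequality.

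The first-moment upper bound I would write down is
\[
    \sum_{j\geq 2}\;\sum_{s\leq(h-1)j-1}\binom{N}{s}\binom{M}{j}\biggl(\frac{\binom{s}{h}}{\binom{N}{h}}\biggr)^{\!j},
\]
where the three factors choose the $s$-vertex support, the $j$ participating samples, and enforce that each chosen sample lands inside that support. Plugging in $\binom{N}{s}\leq(\mathrm{e}N/s)^s$, $\binom{M}{j}\leq(\mathrm{e}M/j)^j$, $\binom{s}{h}/\binom{N}{h}\leq(2s/N)^h$, and $M\leq N/(2h)$, each summand's $N$-dependence is $N^{s-(h-1)j}\leq N^{-1}$. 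The dominant term is $(j,s)=(2,h)$, a \emph{double hyperedge} (two samples equal to the same $h$-edge), with expected count exactly $\binom{M}{2}/\binom{N}{h}=O(N^{2-h})$; this is $O(1/N)$ at $h=3$ and strictly smaller for $h\geq 4$. The hypothesis $h\geq 3$ is essential here: at $h=2$ the analogous double-edge count is $\Theta(1)$ at density $M=N/4$.

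To close the argument, one must verify that the $j$-tail does not overwhelm this $O(1/N)$ leading contribution. The slackness $M\leq N/(2h)$ yields an extra power of $N$ in the denominator each time $j$ increases, producing a geometric-type series whose total is $N^{-1+o(1)}$. I expect the main obstacle to be here: the naive $(T,J)$-enumeration above overcounts each bad component by many factors, and the raw bound $(\mathrm{e}M/j)^j(s/N)^{(h-1)j-s}$ carries a $C(h)^j$ factor whose taming requires either restricting to a \emph{minimal} bad subhypergraph (a ``nucleus'', the smallest being precisely the double edge) before taking expectations, or carefully absorbing the overcount into the geometric series. Either route is standard and is carried out in~\cite[Ch.~12]{FriezeK15}; the clean bookkeeping is exactly where the factor $N^{o(1)}$ in the final bound comes from.
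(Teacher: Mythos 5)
The paper does not prove this lemma itself; it cites \cite[Lemma~12.3]{FriezeK15}. So there is no ``paper's own proof'' to compare against, and I evaluate your proposal on its own merits.

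Your high-level framework (first moment over bad configurations parameterized by $j$ edges on $s\leq (h-1)j-1$ vertices), your identification of the double hyperedge as the dominant term with expected count $\Theta(N^{2-h})$, and your observation that $h\geq 3$ is what makes this $o(1)$ (and that the analogue fails at $h=2$) are all correct and match the standard treatment.

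However, there is a genuine gap, and your claim that ``the slackness $M\leq N/(2h)$ yields an extra power of $N$ in the denominator each time $j$ increases, producing a geometric-type series whose total is $N^{-1+o(1)}$'' is not correct as written. The $N$-exponent of the summand is $s-(h-1)j$, which for the extremal choice $s=(h-1)j-1$ is identically $-1$ for \emph{every} $j$; no additional powers of $N$ accrue as $j$ grows, only constant-per-edge factors. Plugging the Stirling bounds into your expression gives, at $s=(h-1)j-1$, a term of order $N^{-1}\cdot j\,(h-1)\cdot\bigl(2^{h-1}e^h(h-1)/h\bigr)^{j}$, whose base exceeds $1$ for every $h\geq 2$ (at $h=3$ it is $8e^3/3\approx 53$). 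Worse, near the top of the range $j\approx M$ the bound $\binom{N}{s}\binom{M}{j}\bigl(\binom{s}{h}/\binom{N}{h}\bigr)^{j}$ is in fact exponentially large in $N$, so the naive first moment gives nothing. The looseness is precisely what you suspected: the factor $\binom{s}{h}^{j}$ counts arbitrary (overwhelmingly disconnected) placements of the $j$ hyperedges inside the $s$ chosen vertices, while a bad component must be connected and must actually use every one of the $s$ vertices. Your suggested patch of restricting to a minimal bad ``nucleus'' does not by itself close the gap either, since minimal excess-$2$ subhypergraphs are not bounded in size (for instance, a long cycle plus a single chord is minimal and can be arbitrarily large). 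The standard fix is to enforce connectivity and degree constraints, e.g.\ by passing to the $2$-core or kernel of the bad component and counting configurations via a Cayley-type formula, and this is exactly the nontrivial bookkeeping that \cite{FriezeK15} carries out and your sketch defers. So: right skeleton, correct dominant term, but the convergence of the $j$-sum --- the heart of the proof --- is not actually established.
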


Notably the same statement also holds for graphs (i.e., $h = 2$), but only with \emph{constant} success probability~\cite[Theorem~2.1]{FriezeK15}. From this lemma it is almost immediate that random hypergraphs are peelable with high probability, i.e., that we can repeatedly remove edges involved incident to a degree-1 vertex to remove all edges from the hypergraph.  Indeed, this is clear for hypertrees (simply keep removing leaf edges) and is also not hard to check for unicyclic components. In the same spirit, the following lemma will be useful for the treatment of unicyclic components later.

\begin{lemma} \label{lem:unicyclic-removal}
Let $G = (V, E)$ be a unicyclic hypergraph. For each edge $e \in E$ there is some vertex $v \in e$ such that upon removing $e$ the resulting connected component $C_{G \setminus e}(v)$ is a hypertree.
\end{lemma}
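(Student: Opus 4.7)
\medskip

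\noindent\textbf{Proof proposal.} The plan is a counting argument based on the hypertree identity $|V| = (h-1)|E| + 1$ versus the unicyclic identity $|V| = (h-1)|E|$. First, I would argue that after removing $e$ from $G$, every connected component of $G \setminus e$ must contain at least one vertex of $e$: if some component $C$ of $G \setminus e$ contained no vertex of $e$, then in $G$ no edge would cross between $C$ and its complement (the only edge added back is $e$ itself, which does not touch $C$), contradicting the connectedness of $G$. Consequently, writing $C_1, \dots, C_k$ for the components of $G \setminus e$ with $n_i = |V(C_i)|$ and $m_i = |E(C_i)|$, we have $k \leq h$.

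Next, I would use the standard fact that every connected $h$-uniform hypergraph satisfies $n_i \leq (h-1)m_i + 1$, with equality exactly for hypertrees, and $n_i \leq (h-1)m_i$ if the component contains at least one cycle. Summing these inequalities and using $\sum_i n_i = |V| = (h-1)|E|$ and $\sum_i m_i = |E| - 1$ gives
\begin{equation*}
    (h-1)|E| = \sum_{i=1}^{k} n_i \leq \sum_{i=1}^{k} \bigl((h-1)m_i + 1\bigr) = (h-1)(|E|-1) + k,
\end{equation*}
so $k \geq h - 1$. Hence $k \in \{h-1, h\}$.

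In the case $k = h - 1$, the inequality above is tight, which forces every term to be tight, so every component $C_i$ is a hypertree, and any $v \in e$ works. In the case $k = h$, the total slack in the above inequality is exactly $1$, so at most one component fails to be a hypertree (it would have to be unicyclic, absorbing the single unit of slack), and at least $h-1 \geq 1$ components are hypertrees. Picking $v$ to be any vertex of $e$ lying in such a hypertree component completes the argument.

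The only mildly delicate point is verifying that each component of $G \setminus e$ really does meet $e$ (so that $k \leq h$); once that is in hand, the proof reduces to matching the two ``Euler-type'' identities against each other. I do not expect any genuine obstacle beyond being careful that $G \setminus e$ is allowed to have isolated vertices among the vertices of $e$ (each such isolated vertex is itself a trivial hypertree component, which is consistent with the counting above).
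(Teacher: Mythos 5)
Your proof is correct and uses the same counting argument as the paper: compare the unicyclic identity $|V| = (h-1)|E|$ against the fact that every non-hypertree component of $G \setminus e$ satisfies $n_i \leq (h-1)m_i$, and conclude that at least one component must be a hypertree. You are somewhat more explicit than the paper's terse version, both in verifying up front that every component of $G \setminus e$ meets $e$ (which is implicitly used via connectedness of $G$, since the hypertree component found must contain a witnessing vertex $v \in e$) and in pinning down the exact number of components and their types; this extra bookkeeping is harmless but not needed for the conclusion.
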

\begin{proof}
Let $N$ and $M$ denote the number of vertices and edges in the connected component $C(e)$, respectively. Since the graph is unicyclic we have that $N = (h-1) M$. Now consider the graph~$G \setminus e$ obtained from $G$ by removing $e$, and let $C_1, \dots, C_\ell$ denote its connected components. Suppose for contradiction that all these components are not hypertrees. Letting $N_i$ and $M_i$ denote the number of vertices and edges in $C_i$, respectively, this means that $N_i \leq (h-1) M_i$. But clearly $\sum_i N_i = N$ and $\sum_i M_i = M - 1$, which contradicts our initial assumption that $N = (h-1) M$.
\end{proof}

\paragraph{Weighted Hypergraph Peeling.}
Finally, we consider our new weighted hypergraph peeling processes. Formally, equip a hypergraph $G = (V, E)$ with a weight function $w : V \sqcup E \to \Real_{\geq 0}$. We will write $w_v$ (for $v \in V$) and $w_e$ (for $e \in E$) to refer to the vertex- and edge-weights, respectively. We first define weighted hypergraph peeling formally:

\begin{definition}[Peeling]
Let $\rho \geq 1$ and let $G = (V, E, w)$ be a vertex- and edge-weighted hypergraph. We say that an edge $e \in E$ is \emph{$\rho$-free} if there is some vertex $v \in e$ such that
\begin{equation*}
    w_e \geq \rho \cdot \left(w_v + \sum_{e \neq e' \in E} w_{e'}\right).
\end{equation*}
A \emph{$\rho$-peeling sequence} is a sequence of edges $e_1, \dots, e_\ell$ such that, for all $i \in [\ell]$, the edge $e_i$ is $\rho$\=/free in the graph $G \setminus \set{e_1, \dots, e_{\ell-1}}$. We call $e$ \emph{$\rho$-peelable} if there is a $\rho$-peeling sequence containing~$e$.
\end{definition}

The next two lemmas give convenient tools to analyze weighted hypergraph peeling. In combination, they essentially state that the probability that any edge is not peelable is at most $O(\mu / w_e)$, where $\mu$ is the expected vertex weight.

\begin{lemma}[Peeling with Random Vertex Weights] \label{lem:peeling-random-vertex-weights}
Fix an arbitrary hypergraph $G = (V, E)$ and arbitrary edge weights $w_e$, and suppose that the vertex weights $w_v$ are chosen randomly such that $\Ex[w_v] \leq \mu$ (though these weights might be dependent). Let $e \in E$ be such that $C(e)$ is a hypertree or unicyclic. The probability that $e$ is not $\rho$-peelable is at most \smash{$\frac{\mu \cdot D(e)}{w_e}$} where
\begin{equation*}
    D(e) = D_{G, \rho}(e) = \sum_{e' \in E(C(e))} \rho^{1+d(e, e')}.
\end{equation*}
\end{lemma}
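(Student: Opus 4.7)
My approach is an induction on $|C(e)|$. The unicyclic case reduces to the hypertree case via \cref{lem:unicyclic-removal}: when $C(e)$ is unicyclic there is a vertex $v \in e$ such that $C_{G \setminus e}(v)$ is a hypertree, so the analysis at this $v$ proceeds as in the hypertree argument, and the cycle-closing edge contributes to $D(e)$ only as an additional distance-weighted term. The base case $|C(e)| = 1$ is immediate: $e$ is non-peelable iff $\min_{v \in e} w_v > w_e/\rho$, which by Markov's inequality occurs with probability at most $\rho \mu / w_e = \mu D(e)/w_e$.

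For the inductive step, I would root the hypertree at $e$ and consider the sub-hypertrees $\{T_u : u \in e\}$ obtained by removing $e$. The key observation is that $e$ is $\rho$-freeable at a fixed vertex $v \in e$ if and only if we can peel enough edges of $T_v$ that
\[
    w_v + \sum_{\substack{e' \in E_v\\ e' \text{ not peelable in } T_v}} w_{e'} \;\leq\; \frac{w_e}{\rho}.
\]
Applying Markov's inequality together with the inductive hypothesis $\Pr(e' \text{ not peelable in } T_v) \leq \mu D_{T_v}(e')/w_{e'}$ for each $e' \in E_v$, I obtain an upper bound of the form $\frac{\rho \mu}{w_e} \bigl(1 + \sum_{e' \in E_v} D_{T_v}(e')\bigr)$.

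The crucial step, which I expect to be the main obstacle, is to show that this quantity is at most $\mu D(e)/w_e$ for an appropriate choice of $v \in e$. Here the geometry of the hypertree plays the decisive role: every edge $e'' \in T_v$ sits at distance $d(e, e'') = 1 + d_{T_v}(e'_{\mathrm{top}}, e'')$ from $e$, where $e'_{\mathrm{top}} \in E_v$ is the unique top edge on the path from $e$ to $e''$, so each such $e''$ contributes an extra factor of $\rho$ to $D(e)$ relative to its contribution in $D_{T_v}(\cdot)$. This extra factor is exactly what is needed to cancel the leading $\rho$ coming from the Markov step. The delicate subcase is when $e$ has several non-trivial sub-trees hanging off each of its vertices; there one must use that the sub-trees $T_u$ for distinct $u \in e$ are vertex-disjoint (by the hypertree property) and choose $v$ cleverly, for instance by taking the vertex minimizing the relevant tail sum or averaging over $v \in e$, to control $\sum_{e' \in E_v} D_{T_v}(e')$ against $D(e)/\rho$.
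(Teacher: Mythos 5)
Your base case, your use of \cref{lem:unicyclic-removal} for the unicyclic case, and the identity $d(e,e'')=1+d_{T_v}(e'_{\mathrm{top}},e'')$ are all on the right track, and the overall shape of the argument (root at $e$, peel bottom-up, Markov at a chosen vertex) is the same as the paper's. However, there is a genuine gap in the inductive step, and it is not repaired by ``choosing $v$ cleverly.''

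The problem is the quantity $\sum_{e'\in E_v} D_{T_v}(e')$. When you invoke the inductive hypothesis with $T_v$ as the ambient graph, $D_{T_v}(e')$ sums $\rho^{1+d_{T_v}(e',e'')}$ over \emph{every} $e''\in T_v$, not only over the sub-hypertree hanging strictly below $e'$. Consequently, each $e''\in T_v$ is charged once in $D_{T_v}(e')$ for \emph{every} $e'\in E_v$, and the $|E_v|-1$ ``wrong'' charges are each of order $\rho^{2+d_{T_v}(e'_{\mathrm{top}},e'')}$ — one power of $\rho$ larger than $e''$'s actual contribution $\rho^{1+d(e,e'')}=\rho^{2+d_{T_v}(e'_{\mathrm{top}},e'')}$ to $D(e)$. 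Concretely, your target inequality $\rho\bigl(1+\sum_{e'\in E_v}D_{T_v}(e')\bigr)\le D(e)$ reduces (when every $T_u$ consists of $m$ leaf hyperedges all meeting $u$) to $m(m-1)\rho\le (h-1)m$, which fails for every choice of $v\in e$ once $m\ge 2$ and $\rho$ is a large constant. So neither minimizing over $v$ nor averaging over $v$ can close the gap; the double-counting is intrinsic.

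The paper avoids this by a different bookkeeping: it does not reason about ``peelable in $T_v$'' at all. Instead it fixes one explicit bottom-up peeling procedure $\textsc{Peel}$, defines $X_{e'}$ to be the event that this procedure fails to peel $e'$, and observes that $X_{e'}$ depends \emph{only} on the sub-hypertree hanging strictly below $e'$ (not on $e'$'s siblings in $T_v$). Those sub-hypertrees are pairwise disjoint across $e'\in E_v$, so the inductive contributions add up to (the $T_v$-part of) $D(e)$ with no overcounting. Your plan can be repaired in the same spirit: apply the inductive hypothesis to the sub-hypertree below $e'$ rather than to all of $T_v$, using that $e'$ peelable in its own sub-hypertree implies $e'$ peelable in $T_v$. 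Once that is done, the ``clever choice of $v$'' becomes unnecessary: the paper picks an arbitrary non-root $v\in e$, and the terms of $D(e)$ coming from the subtrees $T_u$ with $u\ne v$ simply provide slack rather than being needed to absorb anything.

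One smaller remark: your identity $d(e,e'')=1+d_{T_v}(e'_{\mathrm{top}},e'')$ is correct only if $d$ counts hyperedge-hops; with the paper's stated vertex-set distance $d(e,e')=\min_{s\in e,t\in e'}d(s,t)$ one has $d(e,e'')=0$ for $e''$ adjacent to $e$ and the identity fails. This is a quiet inconsistency in the paper itself (the proof and \cref{lem:expected-spreadness} really want the line-graph distance, at the cost of a constant factor), but your write-up should fix the convention explicitly rather than switching between the two.
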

\begin{proof}
Let $e^* \in E$ be an arbitrary edge. First assume that the connected component $C(e^*)$ is a hypertree. Let $v^* \in e^*$ be arbitrary. We will consider $C(e^*)$ as a \emph{rooted} hypertree $T$ with root~$v^*$ in the following natural way: Each edge $e$ picks a root node $r(e) \in e$ which is the unique first node discovered by performing a breadth-first search from $v^*$. Let $\delta(v) = \set{e : r(e) = v}$ denote the set of edges rooted directly at $v$. Let $\Delta(v)$ denote the set of edges in the entire subtree rooted at $v$, and let $\Delta(e) = \Delta(r(e))$.

We construct a peeling sequence by traversing the hypertree $T$ from bottom to top and attempting to peel all edges. Formally, consider the following recursive process $\textsc{Peel}(v)$ to construct a peeling sequence $\sigma$:

\bigskip
\begin{adjustbox}{frame={\fboxrule} 8pt,minipage={\textwidth-2\fboxrule-16pt}}
\begin{algorithmic}[1]
\Procedure{Peel}{$v$}
    \State Let $\sigma$ be an empty peeling sequence
    \ForEach{$e \in \delta(v)$}
        \ForEach{$v \neq u \in e$}
            \State Append $\textsc{Peel}(u)$ to $\sigma$
        \EndForEach
        \If{$e$ is free in $T \setminus \sigma$}
            \State Append $e$ to $\sigma$
        \EndIf
    \EndForEach
    \State\Return $\sigma$
\EndProcedure
\end{algorithmic}
\end{adjustbox}
\bigskip

Consider the peeling sequence constructed by $\textsc{Peel}(v^*)$. Clearly, as both the hypergraph $G$ and the edge weights are fixed, this sequence only depends on the randomness of the vertex-weights. Let $X_e \in \set{0, 1}$ be the random variable indicating if the edge $e$ does not appear in this peeling sequence. By induction we show that
\begin{equation*}
    \Pr(X_e = 1) \leq \frac{\mu}{w_e} \cdot \sum_{e' \in \Delta(e)} \rho^{1+d(e, e')}.
\end{equation*}

To prove this, take an arbitrary edge $e$ and pick any non-root node $v \in e$ (i.e. $v \neq r(e)$). By construction we can include $e$ in the peeling sequence if $e$ is free in the graph where we have removed all edges in the recursively constructed peeling sequences $\textsc{Peel}(u)$, where $u$ ranges over the children of $v$. In other words, if
\begin{equation*}
    w_e \geq \rho \cdot \parens*{w_v + \sum_{e' \in \delta(v)} X_{e'} w_{e'}}.
\end{equation*}
(Note that for leaf nodes $v$ the set $\delta(v)$ is empty and this condition simplifies to $w_e \geq \rho \cdot w_v$.) By Markov's inequality we thus have
\begin{align*}
    \Pr(X_e = 1)
    &\leq \Pr\parens*{\rho \cdot \parens*{w_v + \sum_{e' \in \delta(v)} X_{e'} w_{e'}} > w_e} \\
    &\leq \frac{\rho}{w_e} \cdot \Ex\brackets*{w_v + \sum_{e' \in \delta(v)} X_{e'} w_{e'}} \\
    &= \frac{\rho}{w_e} \cdot \parens*{\Ex[w_v] + \sum_{e' \in \delta(v)} \Ex[X_{e'}] \cdot w_{e'}}
\intertext{We emphasize that here we only use linearity of expectation and therefore any dependencies between the variables $w_v$ and the resulting random variables $X_{e'}$ are irrelevant. Using that $\Ex[w_v] \leq \mu$ and using the induction hypothesis we can further bound this expression as follows:}
    &\leq \frac{\rho}{w_e} \cdot \parens*{\mu + \sum_{e' \in \delta(v)} \frac{w_{e'} \cdot \mu}{w_{e'}} \sum_{e'' \in \Delta(e')} \rho^{1 + d(e', e'')}} \\
    &\leq \frac{\mu}{w_e} \cdot \parens*{\rho + \rho \cdot \sum_{e' \in \Delta(e) \setminus \set{e}} \rho^{d(e, e')}} \\
    &\leq \frac{\mu}{w_e} \cdot \parens*{\rho + \sum_{e' \in \Delta(e) \setminus \set{e}} \rho^{1+d(e, e')}} \\
    &\leq \frac{\mu}{w_e} \cdot \sum_{e' \in \Delta(e)} \rho^{1+d(e, e')}.
\end{align*}
All in all, we have shown that $\Pr(X_{e^*} = 1)$ is bounded as in the lemma statement whenever $C(e^*)$ is a hypertree.

Finally, suppose that $C(e^*)$ is unicyclic. By \cref{lem:unicyclic-removal} there is some vertex $v^* \in e^*$ such that when removing $e^*$ the resulting connected component $C_{G \setminus e}(v^*)$ is a hypertree $T$. By the same inductive argument we can construct a peeling sequence $\sigma$ that fails to peel any edge $e$ in~$T$ with probability at most~\smash{$\Pr(X_e = 1) \leq \frac{\mu}{w_e} \cdot \sum_{e' \in \Delta(e)} \rho^{1+d(e, e')}$}. If $e^*$ is $\rho$-free in the graph $G \setminus \sigma$ then clearly $e^*$ is $\rho$-peelable. By exactly the same argument as before, $e^*$ fails to be $\rho$-free with probability at most
\begin{equation*}
    \Pr\parens*{\rho \cdot \parens*{w_{v^*} + \sum_{e' \in \delta(v)} X_{e'} w_{e'}} > w_{e^*}} \leq \frac{\mu}{w_{e^*}} \cdot \sum_{e \in T} \rho^{1 + d(e^*, e)},
\end{equation*}
which completes the proof.
\end{proof}

\begin{lemma} \label{lem:expected-spreadness}
Let $G = (V, E)$ be an $h$-uniform hypergraph where we fix one hyperedge $e \in E$ and sample~\smash{$M \leq \frac{|V|}{8 \rho h^2}$} other hyperedges uniformly at random (with repetition). Then $\Ex[D_{G, \rho}(e)] \leq \rho + 1$.
\end{lemma}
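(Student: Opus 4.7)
The natural approach is to stratify the sum $D_{G,\rho}(e) = \sum_{d \geq 0} \rho^{1+d} N_d$ by distance from $e$, where $N_d = |\{e' \in E(C(e)) : d(e,e') = d\}|$. The term $N_0 = 1$ contributes exactly $\rho$. For $d \geq 1$ I would bound $N_d$ by the number of walks $e = e_0, e_1, \dots, e_d$ of distinct hyperedges with consecutive edges sharing a vertex, since any edge at distance exactly $d$ from $e$ must admit such a witnessing walk.

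To bound the expected number of walks, I would count ordered $d$-tuples of distinct indices of random edges (at most $M^d$ choices) and, for each fixed tuple, bound the probability that it forms a walk. Conditioning from the tail and using independence of the random hyperedges together with the vertex-pair union bound
\[
    \Pr(f_{i_{j-1}} \cap f_{i_j} \neq \emptyset \mid f_{i_{j-1}}) \leq h^2/|V|,
\]
this probability telescopes to at most $(h^2/|V|)^d$. Hence, setting $p = Mh^2/|V|$, we obtain $\Ex[N_d] \leq p^d$ and consequently
\[
    \Ex[D_{G,\rho}(e)] \leq \sum_{d \geq 0} \rho^{1+d} p^d = \frac{\rho}{1-\rho p}.
\]
The hypothesis $M \leq |V|/(8\rho h^2)$ yields $\rho p \leq 1/8$, giving the $O(\rho)$ bound $8\rho/7$.

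The delicate part I expect is sharpening this constant to exactly $\rho + 1$: the crude walks bound is wasteful because, at depth $d \geq 1$, the ``parent vertex'' of an edge in a rooted BFS hypertree cannot spawn new descendants, so the effective branching factor for deeper layers is only $(h-1)p/h$ rather than $p$. Separating the contribution of the root edge (branching~$p$) from deeper levels (branching $\leq (h-1)p/h$), and handling the possible extra cycle edge in a unicyclic component via \cref{lem:unicyclic-removal}, should tighten the geometric tail enough to close the gap and yield the claimed $\rho + 1$ bound.
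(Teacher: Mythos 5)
Your plan is essentially the same as the paper's: explore the random component around $e$ layer by layer (the paper via BFS with layers $E_i$, you via walk counting), and use the geometric decay coming from the branching factor $p = Mh^2/|V| \leq 1/(8\rho)$ to bound $\Ex[D_{G,\rho}(e)]$. The walk-counting first moment is a clean and valid route.

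However, there is a genuine gap, and the ``delicate sharpening'' you anticipate is both unnecessary and (as proposed) insufficient. The problem is an off-by-one in how walk length maps to the distance $d(e,e')$ appearing in the exponent. Recall $d(e,e')$ is the vertex-path distance between the vertex sets $e$ and $e'$: if $e'$ shares a vertex with $e$ then $d(e,e') = 0$, and more generally an edge reached by an edge-walk $e = e_0, e_1, \dots, e_\ell = e'$ satisfies $d(e,e') \leq \ell - 1$. So your stratification should read
\begin{equation*}
    D_{G,\rho}(e) \;=\; \rho + \sum_{e' \neq e} \rho^{1+d(e,e')} \;\leq\; \rho + \sum_{\ell \geq 1} \rho^{\ell} \cdot \bigl|\{\text{walks of length } \ell \text{ from } e\}\bigr|,
\end{equation*}
not $\sum_{d\geq 0}\rho^{1+d} N_d$ with $N_d$ bounded by walks of length $d$. (Also $N_0$ is not $1$: it contains $e$ and all edges sharing a vertex with $e$.) Plugging your first-moment bound $\Ex[\#\text{walks of length } \ell] \leq p^\ell$ into the corrected formula gives
\begin{equation*}
    \Ex[D_{G,\rho}(e)] \;\leq\; \rho + \sum_{\ell \geq 1} (\rho p)^\ell \;\leq\; \rho + \frac{1/8}{1 - 1/8} \;=\; \rho + \tfrac{1}{7} \;\leq\; \rho + 1,
\end{equation*}
which proves the lemma directly with room to spare.

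With your current (mis-indexed) formula you obtain $\frac{\rho}{1-\rho p} \leq \frac{8\rho}{7} = \rho + \rho/7$, and since $\rho = 2048$ this exceeds $\rho + 1$ by roughly a factor $\rho$ in the additive term. The refinement you then sketch---reducing the branching factor from $p$ to $(h-1)p/h$ by excluding the parent vertex---only improves $p$ by a constant factor, so it cannot bridge a $\Theta(\rho)$ gap; after that refinement you would still land around $\rho(1 + O(1/\rho))$ only if the indexing is already correct. Likewise, invoking \cref{lem:unicyclic-removal} is irrelevant here: that lemma concerns the deterministic peelability argument in \cref{lem:peeling-random-vertex-weights}, not the expectation of $D$. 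In short, fix the exponent bookkeeping and the proof closes immediately without any further refinement.
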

\begin{proof}
To bound $D_{G, \rho}(e)$ we perform a breadth-first search starting from $e$. Let $E_0, E_1, \dots$ denote the sets of edges explored in the respective layers of the search. Formally, let $E_0 = \set{e}$ and for~\makebox{$i \geq 1$} the set $E_i$ consists of all edges $e' \not\in E_1, \dots, E_{i-1}$ for which there is an edge $e'' \in E_{i-1}$ that shares a common node with $e'$. We view the random construction of $G$ as being revealed step-by-step in the breadth-first search. I.e., initially we only know that $e$ is an edge but have not revealed any random choice, and in the $i$-th step we reveal only the edges touching $E_i$.

We will bound~$\Ex[|E_i| \mid E_{i-1}]$. So focus on the $i$-th step where we have already revealed $E_{i-1}$ and are currently revealing the edges $E_i$. The only candidates are $\set{v_1, \dots, v_h}$ where at least one of these nodes touches an edge in $E_{i-1}$. There are at most $\binom{|V|}{h-1} \cdot h |E_{i-1}|$ such sets. Moreover, at least $|V| - M \geq (1 - \frac{1}{2h}) |V|$ nodes are not touched by any of the previously revealed edges. Thus, any fix $\set{v_1, \dots, v_h}$ becomes an edge in this step with probability at most
\begin{equation*}
    \frac{M}{\binom{|V|-M}{h}} \leq \frac{M}{\binom{|V|}{h}} \cdot \parens*{\frac{|V|}{|V|-M}}^h = \frac{M}{\binom{|V|}{h}} \cdot \parens*{\frac{1}{1-\frac{1}{2h}}}^h \leq \frac{2M}{\binom{|V|}{h}}.
\end{equation*}
Thus,
\begin{align*}
    \Ex[|E_i| \mid E_{i-1}]
    &\leq \binom{|V|}{h-1} \cdot h |E_{i-1}| \cdot \frac{2M}{\binom{|V|}{h}} \\
    &\leq \binom{|V|}{h-1} \cdot h |E_{i-1}| \cdot \frac{|V|}{4 \rho h^2 \binom{|V|}{h}} \\
    &= \frac{|E_{i-1}|}{4\rho} \cdot \frac{|V|}{|V| - (h - 1)} \\
    &\leq \frac{|E_{i-1}|}{2\rho},
\end{align*}
and it follows that
\begin{equation*}
    \Ex[|E_i|] \leq \frac{1}{(2\rho)^i}.
\end{equation*}
Clearly all edges $e' \in E_i$ have distance at most $i-1$ to the designated edge $e$, and therefore
\begin{equation*}
    \Ex[D_{G', \rho}(e)] \leq \Ex\brackets*{\rho + \sum_{i=1}^\infty |E_i| \cdot \rho^i} \leq \rho + \sum_{i=1}^\infty \frac{\rho^i}{(2\rho)^i} = \rho + 1,
\end{equation*}
completing the proof.
\end{proof}

\subsection{Sparse Recovery}

\begin{lemma}[Count-Sketch~\cite{CharikarCF04,MintonP14}] \label{lem:count-sketch}
For all integers $r \geq 1$ and $2 \leq s \leq n$, there is a randomized linear sketch $\Phi \in \Real^{m \times n}$ with $m = O(s r)$ rows and column sparsity $O(r)$ such that, given $y = \Phi x$ and some coordinate $i \in [n]$, we can compute in time $O(r)$ an approximation $\hat x_i$ satisfying that
\begin{equation*}
    \Pr\left(|x_i - \hat x_i|^2 > \frac{\lambda}{r} \cdot \frac{\norm{x_{-s}}_2^2}{s}\right) \leq 2 \exp(-\lambda),
\end{equation*}
for any $\lambda \leq r$.
\end{lemma}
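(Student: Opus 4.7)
I plan to use the standard Count-Sketch construction with $r$ independent hash tables. Each table uses a pairwise-independent hash $h_j : [n] \to [B]$ with $B = \Theta(s)$ buckets and a $4$-wise-independent Rademacher sign function $\sigma_j : [n] \to \{-1, +1\}$, and bucket $(j, b)$ stores $y_{j,b} = \sum_{i' : h_j(i') = b} \sigma_j(i')\, x_{i'}$. This gives $m = rB = O(rs)$ rows in total, with column sparsity $r$. Given $y = \Phi x$, the estimator is $\hat x_i := \operatorname{median}_{j \in [r]} \sigma_j(i) \cdot y_{j, h_j(i)}$, computable in $O(r)$ time.

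For each row, $\sigma_j(i) \cdot y_{j, h_j(i)} = x_i + Y_j$ where $Y_j = \sigma_j(i) \sum_{i' \neq i,\, h_j(i') = h_j(i)} \sigma_j(i')\, x_{i'}$. Conditional on the hash $h_j$, $Y_j$ is a Rademacher sum with variance $N_j := \sum_{i' \neq i,\, h_j(i') = h_j(i)} x_{i'}^2$, so Hoeffding's inequality gives the conditional sub-Gaussian tail $\Pr(|Y_j| > t \mid h_j) \leq 2 \exp(-t^2/(2 N_j))$. To control $N_j$, I would split $x$ into its top-$s$ coordinates $S$ and the tail $x_{-s}$: choosing $B$ a sufficiently large constant multiple of $s$ makes the probability that $i$ collides with any element of $S \setminus \{i\}$ at most, say, $1/10$, and on that event $\mathbb{E}[N_j] \leq \|x_{-s}\|_2^2/B = O(\sigma^2)$ where $\sigma^2 := \|x_{-s}\|_2^2/s$. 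A Markov step then yields $N_j \leq C\sigma^2$ with constant probability (say $3/4$), so each individual row supplies an unbiased sub-Gaussian estimator of $x_i$ with parameter $O(\sigma)$.

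The principal obstacle is the median-concentration step: showing that for every $\lambda \leq r$,
\[
    \Pr\!\left(|x_i - \hat x_i|^2 > \tfrac{\lambda}{r}\sigma^2\right) \leq 2 \exp(-\lambda),
\]
which is a factor $r$ tighter than the guarantee one gets from any single row. A naive union/Chernoff argument already handles the regime $\lambda = \Theta(r)$ (standard Count-Sketch concentration), but for small $\lambda$ the per-row failure probability $\Pr(|Y_j|>\sqrt{\lambda/r}\cdot\sigma)$ is not much below $1/2$ and a direct Chernoff is insufficient. Following Minton and Price~\cite{MintonP14}, I would set $\tau = \sqrt{\lambda/r}\cdot\sigma$ and note that $|\operatorname{median}_j Y_j| > \tau$ forces at least $r/2$ of the $Y_j$'s to lie on one side of the interval $[-\tau, \tau]$. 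Because each $Y_j$ is symmetric around $0$ and its variance $N_j$ is itself zero with positive probability (no collision) and $O(\sigma^2)$ otherwise, one can show that $\Pr(Y_j > \tau) \leq \tfrac12 - \Omega(\sqrt{\lambda/r})$ — this is the technical heart of the proof, combining the Rademacher sub-Gaussian tail with the anti-concentration near $0$ inherited from the randomness of the hash. A Hoeffding bound over the $r$ independent rows then gives $\Pr(|\operatorname{median}(Y_j)| > \tau) \leq 2\exp(-\Omega(\lambda))$, and the constant-probability hash-failure event can be absorbed by inflating the constant in $B$ (so that the per-row indicator $\mathbf{1}[Y_j > \tau \text{ or hash fails}]$ still has probability bounded away from $\tfrac12$ by $\Omega(\sqrt{\lambda/r})$). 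Rescaling constants produces the stated tail bound.
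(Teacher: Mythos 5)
The paper does not prove this lemma; it imports it verbatim from the Minton--Price tighter analysis of Count-Sketch~\cite{MintonP14}, with the basic construction going back to~\cite{CharikarCF04}. Your plan is a faithful sketch of the Minton--Price route (median of $r$ per-row estimates; per-row error is a signed Rademacher sum with collision-dependent variance; then boost via concentration of the median), so in that sense it takes the intended path and correctly identifies where the difficulty lies.

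However, the step you flag as the ``technical heart''---that each row satisfies $\Pr(Y_j > \tau) \leq \tfrac12 - \Omega(\sqrt{\lambda/r})$ for $\tau = \sqrt{\lambda/r}\,\sigma$---is precisely the part left unproved, and it is also the part where the plan as written is on shaky ground. The conditional sub-Gaussian tail $\Pr(|Y_j|>t\sigma)\leq e^{-\Omega(t^2)}$ alone only yields $\Pr(Y_j>\tau)\leq \tfrac12 - \Omega(\lambda/r)$ (expand $e^{-\Omega(\lambda/r)}\approx 1-\Omega(\lambda/r)$), and a Chernoff bound over $r$ rows then gives only $e^{-\Omega(\lambda^2/r)}$, which is far weaker than the target $e^{-\Omega(\lambda)}$ when $\lambda\ll r$. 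Closing this gap genuinely requires an \emph{anti-concentration} bound of the form $\Pr(|Y_j|\leq \tau)\geq \Omega(\tau/\sigma)$, and this does not follow from sub-Gaussianity: a symmetric distribution with atoms at $\pm\sigma$ satisfies every sub-Gaussian tail bound one could reasonably impose below $t=\sigma$, yet its median of $r$ copies fails to concentrate at all. So the anti-concentration must come from the structure of the Count-Sketch error itself, namely the mixture over the random hash. Your heuristic justification---``$N_j$ is zero with positive probability (no collision)''---is the wrong intuition in the regime $n \gg s$, since a bucket with $\Theta(s)$ slots will almost surely contain many tail coordinates when $n$ is large; the real reason anti-concentration holds is more delicate (essentially a Berry--Esseen/Littlewood--Offord-style argument over the bucket contents, combined with the fact that a rare collision with a single dominant tail coordinate drives the per-row tail probability well below $\tfrac12$). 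This is exactly the argument Minton and Price carry out, and absent it, the proof does not go through. I would recommend either importing their Lemma~3.1 as a black box with the anti-concentration hypothesis made explicit, or writing out the anti-concentration step; as stated, the plan silently assumes the hardest part.
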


\begin{lemma}[Tail Estimation Sketch~\cite{NakosS19}] \label{lem:tail-sketch}
There is a constant $c \geq 1$, so that for all integers~\makebox{$k \leq n$} there is a randomized linear sketch $\Phi \in \mathbb{R}^{m\times n}$ with $m = O(\log n)$ rows and column sparsity $O(\log n)$ such that given $y = \Phi x$ we can find in time $O(\log n)$ a number $\tilde t$ satisfying with high probability that
\begin{equation*}
    \frac{1}{c k} \cdot \norm{x_{-c k}}_2^2 \leq \tilde t \leq \frac{1}{k} \cdot \norm{x_{-k}}_2^2.
\end{equation*}
\end{lemma}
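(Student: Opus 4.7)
The plan is to use the classical combination of random subsampling and an AMS-style $\ell_2^2$-estimator, the standard way to approximate per-coordinate tail norms using $O(\log n)$ measurements.

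Fix sufficiently large constants $c \geq c' \geq 1$ and set the subsampling rate $p = 1/(c'k)$. Take $r = \Theta(\log n)$ independent trials; for each $j \in [r]$, draw an independent random subset $S_j \subseteq [n]$ where every coordinate is included with probability $p$, together with an independent $4$-wise independent sign vector $\sigma_j \colon [n] \to \set{\pm 1}$. Define the $j$-th row of $\Phi$ by $\Phi_{j, i} = \sigma_j(i) \cdot \mathbf{1}[i \in S_j]$, so that $y_j = (\Phi x)_j = \sum_{i \in S_j} \sigma_j(i) \cdot x_i$. The sketch has $m = r = O(\log n)$ rows and column sparsity trivially bounded by $r = O(\log n)$. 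The output is $\tilde t = \mathrm{median}_{j \in [r]}\, y_j^2$, computable in $O(\log n)$ time.

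For the analysis, let $H$ be the set of the top $k$ coordinates and call a trial \emph{good} if $S_j \cap H = \emptyset$. A union bound gives $\Pr[\text{good}] \geq 1 - k p = 1 - 1/c'$. Conditioned on goodness, pairwise independence of the signs yields $\Ex[y_j^2 \mid \text{good}] = p \cdot \norm{x_{-k}}_2^2$, and the $4$-wise independence plus a standard fourth-moment computation bounds $\Ex[y_j^4 \mid \text{good}] = O(\Ex[y_j^2 \mid \text{good}]^2)$ whenever the effective tail is well-spread. By Markov (for the upper tail) and Paley--Zygmund (for the lower tail), a constant fraction of the good trials satisfy $y_j^2 \in [C_1 \cdot p \norm{x_{-k}}_2^2,\ C_2 \cdot p \norm{x_{-k}}_2^2]$ for absolute constants $0 < C_1 < C_2$. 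A Chernoff bound over the $r$ independent trials then places the median $\tilde t$ in the same interval with probability $1 - n^{-\Omega(1)}$, and choosing $c' \geq C_2$ and $c \geq c'/C_1$ yields the required chain
\begin{equation*}
    \frac{\norm{x_{-ck}}_2^2}{ck} \leq \frac{\norm{x_{-k}}_2^2}{ck} \leq \frac{C_1 \norm{x_{-k}}_2^2}{c'k} \leq \tilde t \leq \frac{C_2 \norm{x_{-k}}_2^2}{c'k} \leq \frac{\norm{x_{-k}}_2^2}{k},
\end{equation*}
where the first inequality uses $\norm{x_{-ck}}_2^2 \leq \norm{x_{-k}}_2^2$.

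The main obstacle will be the fourth-moment bound underlying the Paley--Zygmund step: when the tail energy is concentrated on very few coordinates, the ratio $\norm{x_{-k}}_4^4 / \norm{x_{-k}}_2^4$ can be as large as $1$, inflating the variance of $y_j^2$. The key observation is that this high-variance regime coincides precisely with the regime where the required lower bound $\norm{x_{-ck}}_2^2/(ck)$ is small: if $x$ has fewer than $ck$ nonzero tail entries, then $\norm{x_{-ck}}_2^2 = 0$ and the lower bound holds trivially, whereas otherwise the bound $\norm{x_{-ck}}_\infty^2 \leq \norm{x_{-k}}_2^2 / ((c - 1) k)$ controls the $\ell_4/\ell_2$ ratio of the effective tail. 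If the per-trial success probability from Paley--Zygmund is too weak for a direct median trick, each trial can additionally be replaced by the average of a constant number of AMS counters sharing the same $S_j$, which only affects the implicit constants in the row count.
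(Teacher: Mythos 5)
Your proof plan (subsample at rate $\Theta(1/k)$, AMS sketch per trial, median over $O(\log n)$ trials) is a reasonable starting point, and the upper bound via Markov is fine. However, the lower bound contains a genuine gap.

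\textbf{The median argument cannot close as written.} Paley--Zygmund with $4$-wise independent signs yields at best $\Pr[y_j^2 \geq \theta \cdot \Ex[y_j^2]] \geq (1-\theta)^2/3 < 1/2$. To conclude that the \emph{median} exceeds a target, you need the single-trial probability to exceed $1/2$; a ``constant fraction of the good trials'' is not sufficient for the Chernoff step you invoke. Averaging a constant number of AMS counters per trial (your fallback) removes the sign-cancellation variance so that $\bar y_j \approx \|x_{S_j}\|_2^2$, but it does nothing about the subsampling variance of $\|x_{S_j}\|_2^2$ itself, which is the dominant issue.

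\textbf{Your handling of the concentrated-tail case is incorrect.} You write ``if $x$ has fewer than $ck$ nonzero tail entries, then $\|x_{-ck}\|_2^2 = 0$,'' but tail entries are entries beyond the top $k$, and this statement fails: a vector whose support has exactly $ck+1$ entries has only $(c-1)k+1 < ck$ nonzero tail entries, yet $\|x_{-ck}\|_2^2 > 0$. Concretely, take $x$ with $k$ coordinates of magnitude $M$, followed by $(c-1)k+1$ coordinates of magnitude $a$, and zeros elsewhere. Then $\|x_{-ck}\|_2^2 = a^2$, so the required lower bound is $a^2/(ck) > 0$, yet with subsampling rate $1/(c'k)$ a constant fraction $e^{-c/c'}$ of trials sample none of the nonzero coordinates and give $y_j^2 = 0$. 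Depending on the constants, this fraction can exceed $1/2$, making the median $0$. The $\ell_\infty$ bound you quote is correct, but it bounds the $\ell_4/\ell_2$ ratio of $x_{-ck}$, whereas your good trials are only conditioned to avoid the top $k$ (not the top $ck$), so the relevant fourth moment is $\|x_{-k}\|_4^4$, which can still be arbitrarily close to $\|x_{-k}\|_2^4$.

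\textbf{What is actually needed.} The lower bound requires a case split not present in your sketch. When $x_{-ck}$ is supported on few ($\leq ck/2$, say) coordinates, the key observation is not that $\|x_{-ck}\|_2^2 = 0$, but that $\|x_{-ck}\|_\infty^2 \geq 2\|x_{-ck}\|_2^2/(ck)$, and hence \emph{all} of the $\geq (c-1)k$ coordinates with ranks in $(k, ck]$ already have squared magnitude $\geq 2\|x_{-ck}\|_2^2/(ck)$; with probability $> 1/2$ (for $c$ sufficiently large relative to $c'$) at least one of these is sampled, giving $\|x_{S_j}\|_2^2$ (and hence $\bar y_j$) above the target. When $x_{-ck}$ is spread ($> ck/2$ nonzero coordinates), the fourth moment of $x_{-ck}$ is controlled and one can apply Chebyshev to $\|x_{S_j \cap (-ck)}\|_2^2$ directly. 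In both regimes, the argument must go through $\|x_{S_j}\|_2^2$ (as a nonnegative sum) rather than through Paley--Zygmund on the signed sum $y_j$. The averaged-AMS fix is needed precisely to pass from $\bar y_j$ to $\|x_{S_j}\|_2^2$, but the subsampling analysis is the missing core. (The paper itself just cites~\cite{NakosS19} for this lemma rather than reproving it.)
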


\subsection{Error-Correcting Codes}
As the third ingredient we rely on an asymptotically good error-correcting code that can be encoded and decoded in \emph{linear} time. Such a result was first obtained by Spielman~\cite{Spielman96}, and improved constructions have since lead to improved constants. Specifically, we rely on the following lemma which follows from the error-correcting codes due to Guruswami and Indyk~\cite{GuruswamiI05}.\footnote{In the language of coding theory, the function $\operatorname{Enc}$ is a binary error-correcting code with rate $R := \frac{1}{2048}$ that can be decoded (via $\operatorname{Dec}$) in linear time from a fraction of up to $e := \frac{1}{8}$ errors. In~\cite[Theorem~5]{GuruswamiI05} it is shown that any rate-distance trade-off below the Zyablov bound is achievable; that is, we have that $e \leq \max_{R < r < 1} (1 - r - \epsilon) H^{-1}(1 - R / r)$ (for some constant $\epsilon > 0$), where~$H^{-1}$ is the inverse of the binary entropy function.

The second item in \cref{lem:code} is non-standard, but can easily be ensured by flipping each bit in $\operatorname{Enc}(\cdot)$ with probability $\frac{1}{2}$. By Chernoff's bound, each resulting code word differs from the expected number of 1's, $1024 \log n$, by a fraction of more than $\epsilon := \frac{1}{16}$ with probability at most~\smash{$\exp(-\frac{\epsilon^2}{2} \cdot 1024 \log n) \leq n^{-2}$}, and so the statement follows by a union bound over all $n$ code words.}

\begin{lemma}[Linear-Time Error-Correcting Code~\cite{GuruswamiI05}] \label{lem:code}
There is a randomized algorithm to evaluate functions $\operatorname{Enc} : [n] \to \set{0, 1}^{2048 \log n}$ and $\operatorname{Dec} : \set{0, 1}^{2048 \log n} \to [n]$ with the following properties:
\begin{itemize}
    \item Let $i \in [n]$ and let $x \in \set{0, 1}^{2048 \log n}$ be such that $\operatorname{Enc}(i)$ differs from $x$ in at most a $\frac{1}{8}$-fraction of the symbols. Then $\operatorname{Dec}(x) = i$.
    \item With high probability, the fraction of 1's in $\operatorname{Enc}(i)$ is at least $\frac{1}{2} - \frac{1}{16}$ and at most $\frac{1}{2} + \frac{1}{16}$ for all $i \in [n]$.
    \item Evaluating $\operatorname{Enc}(\cdot)$ and $\operatorname{Dec}(\cdot)$ takes time $O(\log n)$.
\end{itemize}
\end{lemma}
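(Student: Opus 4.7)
The plan is to obtain $\operatorname{Enc}$ and $\operatorname{Dec}$ by taking an off-the-shelf Guruswami--Indyk linear-time code and composing it with a random XOR mask to enforce the balance condition. Essentially all three bullets are either direct quotes from \cite{GuruswamiI05} or follow from a one-line probabilistic argument.

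First, I would invoke \cite[Theorem~5]{GuruswamiI05} to obtain a binary code $C : \{0,1\}^{\lceil \log n \rceil} \to \{0,1\}^{2048 \log n}$ of rate $R = \tfrac{1}{2048}$ equipped with a linear-time encoder and a linear-time decoder from any $e$-fraction of errors, for $e$ strictly below the Zyablov bound. The Zyablov bound at rate $R$ is $\max_{R < r < 1} (1 - r) H^{-1}(1 - R/r)$; plugging in for instance $r = \tfrac{1}{2}$ gives $\tfrac{1}{2} \cdot H^{-1}(1 - \tfrac{1}{1024})$, which exceeds $\tfrac{1}{8}$ since $H^{-1}(1 - \tfrac{1}{1024})$ lies just below $\tfrac{1}{2}$. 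Hence we may take $e = \tfrac{1}{8}$. Identifying $[n]$ with $\{0,1\}^{\lceil \log n \rceil}$ then establishes the first and third bullets.

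To additionally secure the second bullet, I would sample once at construction time a uniformly random string $r^\star \in \{0,1\}^{2048 \log n}$ and redefine $\operatorname{Enc}(i) := C(i) \oplus r^\star$ together with $\operatorname{Dec}(y) := C^{-1}(y \oplus r^\star)$. Since XORing with a fixed string is an isometry of the Hamming metric, the decoding guarantee of the first bullet is preserved verbatim, and the extra bitwise XOR costs only an additional $O(\log n)$ time. For the balance property, fix any $i \in [n]$ and note that each of the $2048 \log n$ bits of $\operatorname{Enc}(i)$ is an independent $\operatorname{Bernoulli}(\tfrac{1}{2})$ random variable, because $r^\star$ is uniform. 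A Chernoff bound then shows that the number of $1$'s deviates from its expectation $1024 \log n$ by more than a $\tfrac{1}{16}$ fraction with probability at most $\exp(-\Omega(\log n)) \leq n^{-2}$, and a union bound over the $n$ codewords gives the high-probability guarantee.

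The only non-routine step is the numerical verification that the Zyablov bound at rate $\tfrac{1}{2048}$ admits decoding radius at least $\tfrac{1}{8}$; I expect this to be the main (but still mild) obstacle, since the remaining ingredients are a direct quotation of Guruswami--Indyk, the triviality that XOR preserves Hamming distance, and a textbook Chernoff-plus-union-bound argument.
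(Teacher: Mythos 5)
Your proposal matches the paper's own proof of this lemma (given in a footnote): both invoke \cite[Theorem~5]{GuruswamiI05} to obtain a rate-$\frac{1}{2048}$ binary code with linear-time encoding and decoding from an $\frac{1}{8}$-fraction of errors (verified below the Zyablov bound), and both enforce the balance property by XORing every codeword with a single random mask (equivalently, flipping each output position with probability $\frac{1}{2}$), followed by a Chernoff bound and a union bound over the $n$ codewords. The only difference is that you spell out the numerical check of the Zyablov bound at $r = \frac{1}{2}$ slightly more explicitly than the paper does, which is a welcome clarification but not a different argument.
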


\section{The Sketch in Detail} \label{sec:sketch}
In this section we provide a formal proof of our main theorem:

\thmmain*

For concreteness, we will demonstrate the theorem with error probability $0.1$ throughout, but any smaller constant success probability can easily be achieved by choosing the involved constants appropriately larger.
 
We begin with a detailed description of the sketch. It consists of three different types of queries summarized in the following (that is, the rows of the sketching matrix can be partitioned into three blocks corresponding to the following three types of queries). Here, as in the theorem statement we denote by $x$ denote the queried vector.
\begin{itemize}
    \item By \cref{lem:tail-sketch} we estimate the tail $\frac{1}{c} \norm{x_{-c k}}_2^2 \leq \hat t \leq \norm{x_{-k}}_2^2$.
    \item By \cref{lem:count-sketch} (with parameters $r = 10 \log n$ and $s = \ceil{8192 c k / \epsilon}$) we obtain point-wise query access to an estimate $\hat x$ of $x$.
    \item The main queries are described by the following construction. Let $V = [2^{19} c k / \epsilon]$; we will informally refer to $V$ as the set of \emph{buckets} and refer to any $v \in V$ as a \emph{bucket.} For each coordinate $i \in [n]$, we sample a uniformly random size-3 set~\makebox{$e_i \subseteq V$}; in other words we hash each coordinate to exactly 3 buckets. We write $B_v = \set{i \in [n] : v \in e_i}$ to denote the set of buckets that $i$ was hashed to. Let $\sigma_1, \dots, \sigma_{2048 \log n} : [n] \to \set{-1, 1}$ be uniformly random. We query, for each bucket $v \in V$ and for each $j \in [2048 \log n]$, the values
    \begin{equation*}
        q_{v, j} = \sum_{i \in B_v} \sigma_j(i) \cdot \operatorname{Enc}(i)_j \cdot x_i.
    \end{equation*}
\end{itemize}
It is easy to verify that these queries can be implemented by a sketching matrix $A \in \Real^{m \times n}$ with $m = O((k / \epsilon) \log n)$ rows and with column sparsity $O(\log n)$ (see \cref{lem:column-sparsity}).

\bigskip
\begin{adjustbox}{frame={\fboxrule} 8pt,minipage={\textwidth-2\fboxrule-16pt}}
\begin{algorithmic}[1]
    \State Initialize $Q \gets V$ and $R \gets \emptyset$
    \While{$Q \neq \emptyset$} \label{alg:recover:line:loop}
        \State Take an arbitrary bucket $v \in Q$ and remove $v$ from $Q$ \label{alg:recover:line:pop}
        \State Compute the following values $p_{v, j}$ for all $j \in [2048 \log n]$: \label{alg:recover:line:residual}
        \begin{equation*}
            \quad p_{v, j} = q_{v, j} - \sum_{i \in B_v \cap R} \sigma_j(i) \cdot \operatorname{Enc}(i)_j \cdot \hat x_i
        \end{equation*}
        \State Let $y \in \set{0, 1}^{2048 \log n}$ be the string with 1's in the largest half of the entries of $(p_{v, j})_j$ \label{alg:recover:line:median}
        \State Compute $i \gets \operatorname{Dec}(y)$ \label{alg:recover:line:dec}
        \State Query Count-Sketch to obtain $\hat x_i$
        \If{$|\hat x_i|^2 \geq \frac{\epsilon}{2k} \cdot \hat t$ \AND{} $i \not\in R$} \label{alg:recover:line:threshold}
            \State Update $Q \gets Q \cup e_i$ and $R \gets R  \cup \set{i}$ \label{alg:recover:line:push}
        \EndIf
    \EndWhile
    \State Sort $R = \set{i_1, \dots, i_{|R|}}$ such that $|\hat x_{i_1}| \geq \dots \geq |\hat x_{i_{|R|}}|$
    \State Let $S = \set{i_1, \dots, i_{\min(3k, |R|)}}$
    \State\Return $x' \gets \hat x_S$
\end{algorithmic}
\end{adjustbox}
\bigskip

We analyze the sketch in seven steps: We fix some handy notation (\cref{sec:sketch:sec:notation}), state the success events for the point and tail estimates (\cref{sec:sketch:sec:point-tail-estimates}), prove that the estimation error in each bucket is small (\cref{sec:sketch:sec:bucket-estimates}), model the recovery algorithm as a weighted hypergraph peeling process (\cref{sec:sketch:sec:peeling}), show that the significant mass of the top-$k$ coordinates is peelable and thus recovered (\cref{sec:sketch:sec:peelable}), analyze the final refinement step and conclude the overall correctness of the sketch (\cref{sec:sketch:sec:correctness}), and finally analyze the time complexity of the recovery algorithm (\cref{sec:sketch:sec:time}).

\subsection{Notation} \label{sec:sketch:sec:notation}
Throughout we fix the vector $x$.

\begin{definition}[Top Coordinates] \label{def:top-k}
Let $T \subseteq [n]$ denote the set of the largest $k$ coordinates in $x$ (in magnitude), breaking ties arbitrarily.
\end{definition}

\begin{definition}[Heavy/Light/Intermediate Coordinates] \label{def:heavy-light-intermediate}
We call $i \in [n]$ \emph{heavy} if~\smash{$|x_i|^2 \geq \frac{\epsilon}{k} \norm{x_{-k}}_2^2$} and \emph{light} otherwise. We call $i \in [n]$ \emph{intermediate} if~\smash{$|x_i|^2 \geq \frac{\epsilon}{4c k} \cdot \norm{x_{-c k}}_2^2$}. Let $H, L, I \subseteq [n]$ denote the sets of heavy, light, and intermediate coordinates, respectively.
\end{definition}

That is, the coordinates are perfectly partitioned into heavy and light, whereas the intermediate coordinates form a superset of the heavy coordinates. We also write $H_v = B_v \cap H$, $L_v = B_v \cap L$ and $I_v = B_v \cap I$ to express the heavy/light/intermediate coordinates hashed to the bucket $v$.

Throughout we assume the crude upper bound $k / \epsilon = O(n)$, as otherwise we can solve the problem trivially using $n$ measurements.

\begin{table}[t]
\caption{List of symbols.}
\vspace{-\smallskipamount}
\begin{adjustbox}{frame={\fboxrule} 8pt,minipage={\textwidth-2\fboxrule-16pt}}
\begin{tabularx}{\linewidth}{lXl}
    \emph{Symbol} & \emph{Description} & \emph{Reference} \\[\smallskipamount]
    $n$ & Dimension & \cref{thm:main} \\
    $k$ & Sparsity threshold & \cref{thm:main} \\
    $\epsilon$ & Error tolerance & \cref{thm:main} \\
    $x$ & Query vector & \cref{thm:main} \\
    $x'$ & Output vector & \cref{thm:main} \\
    $\hat x$ & Estimates $x$ point-wise by Count-Sketch & \cref{lem:count-sketch} \\
    $\hat t$ & Estimates the tail $\norm{x_{-k}}_2^2$ & \cref{lem:tail-sketch} \\
    $c$ & Constant from \cref{lem:tail-sketch} & \cref{lem:tail-sketch} \\
    $T$ & Set of largest $k$ coordinates in $x$ & \cref{def:top-k} \\
    $H$ & Set of heavy coordinates in $x$ & \cref{def:heavy-light-intermediate} \\
    $L$ & Set of light coordinates in $x$ & \cref{def:heavy-light-intermediate} \\
    $I$ & Set of intermediate coordinates in $x$ & \cref{def:heavy-light-intermediate} \\
    $I_\lambda$ & Set of intermediate coordinates misestimated by a factor of $\lambda$ & \cref{lem:point-estimates-intermediate} \\
    $B_v$ & Set of coordinates hashed to bucket $v$ \\
    $H_v$ & Set of heavy coordinates hashed to bucket $v$ \\
    $L_v$ & Set of light coordinates hashed to bucket $v$ \\
    $I_v$ & Set of intermediate coordinates hashed to bucket $v$ \\
    $G$ & Associated hypergraph & \cref{def:associated-hypergraph} \\
    $V$ & Set of buckets a.k.a.\ vertices in the associated hypergraph & \cref{def:associated-hypergraph} \\
    $E$ & Set of edges in the associated hypergraph & \cref{def:associated-hypergraph} \\
    $e_i$ & Defines to which 3 buckets a coordinate $i$ is hashed \\
    $w$ & Vertex- and edge-weights of the associated hypergraph & \cref{def:associated-hypergraph} \\
    $R$ & Set of recovered coordinates \\
    $S$ & Set of $\min(3k, |R|)$ recovered coordinates $i$ with largest $|\hat x_i|$ \\
    $\rho$ & Peeling threshold ($\rho = 2048$) \\
    $\operatorname{Enc}$ & Encoding function of the error-correcting code & \cref{lem:code} \\
    $\operatorname{Dec}$ & Decoding function of the error-correcting code & \cref{lem:code} \\
\end{tabularx}
\end{adjustbox}
\end{table}

\begin{lemma}[Number of Heavy Coordinates] \label{lem:heavy-intermediate-number}
$|H| \leq 2k / \epsilon$ and $|I| \leq 8 c k / \epsilon$.
\end{lemma}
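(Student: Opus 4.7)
The plan is to prove both bounds by splitting the relevant index sets according to whether a coordinate belongs to the appropriate ``top-$k$'' set (where the tail-norm bound cannot be directly applied) or lies in the tail (where it can). This is a completely routine volume-style counting argument; no random-graph machinery is needed here.

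For the bound on $|H|$, I would partition $H = (H \cap T) \sqcup (H \setminus T)$. Trivially $|H \cap T| \leq |T| = k$. For each $i \in H \setminus T$, the coordinate $i$ lies in the support of $x_{-k}$, so $|x_i|^2$ is one of the summands making up $\norm{x_{-k}}_2^2$; since every heavy coordinate satisfies $|x_i|^2 \geq (\epsilon/k) \norm{x_{-k}}_2^2$, summing gives
\begin{equation*}
    |H \setminus T| \cdot \tfrac{\epsilon}{k} \norm{x_{-k}}_2^2 \;\leq\; \sum_{i \in H \setminus T} |x_i|^2 \;\leq\; \norm{x_{-k}}_2^2,
\end{equation*}
hence $|H \setminus T| \leq k/\epsilon$. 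Adding the two parts and using $\epsilon < 1$ gives $|H| \leq k + k/\epsilon \leq 2k/\epsilon$.

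For the bound on $|I|$, I would use exactly the same decomposition, but with respect to the set $T_{ck}$ of the largest $ck$ coordinates of $x$ rather than $T$. At most $ck$ intermediate coordinates lie in $T_{ck}$, and for each $i \in I \setminus T_{ck}$ we have $|x_i|^2 \geq (\epsilon/(4ck)) \norm{x_{-ck}}_2^2$ while $\sum_{i \notin T_{ck}} |x_i|^2 = \norm{x_{-ck}}_2^2$; this forces $|I \setminus T_{ck}| \leq 4ck/\epsilon$. Summing gives $|I| \leq ck + 4ck/\epsilon \leq 5ck/\epsilon \leq 8ck/\epsilon$, comfortably within the slack of the claimed bound.

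There is no genuine obstacle here; the only ``choice'' is picking the right top-set ($T$ versus $T_{ck}$) to make the tail norm in the defining inequality match the tail norm whose energy we are budgeting against, which is already transparently encoded in the definitions of heavy and intermediate.
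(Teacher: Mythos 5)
Your proof is correct and follows essentially the same route as the paper: decompose the set in question into its intersection with an appropriate top-set and the complement, bound the former trivially, and bound the latter by a volume argument against the matching tail norm. The paper only spells out the $|H|$ case and says the $|I|$ bound is ``analogous''; your choice of $T_{ck}$ is exactly the right analogue, and your resulting constant $5ck/\epsilon$ is even slightly sharper than the claimed $8ck/\epsilon$.
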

\begin{proof}
Clearly~\smash{$\norm{x_{H \setminus T}}_2^2 \leq \norm{x_{-k}}_2^2$}, and thus there can be at most $k / \epsilon$ elements $i \in H \setminus T$ with magnitude~\smash{$|x_i|^2 \geq \frac{\epsilon}{k} \norm{x_{-k}}_2^2$}. It follows that $|H| \leq k/\epsilon + |T| = k/\epsilon + k \leq 2k/ \epsilon$. The bound on $|I|$ is analogous.
\end{proof}

Our ultimate goal is to recover a subset of the top-$k$ coordinates which suffices for $\ell_2/\ell_2$ sparse recovery. In our one-shot recovery, we settle to recover a subset of the heavy coordinates which contains a desired subset as above. However, due to the approximate nature of the algorithm, it is likely that the algorithm will recover some coordinates that are not heavy but a constant fraction smaller than the threshold of heavy. We call these coordinates intermediate.

\subsection{Point and Tail Estimates} \label{sec:sketch:sec:point-tail-estimates}
The first step is to analyze the accuracy of the point and tail estimates from the first two types of queries in the sketch. The following two lemmas are immediate from \cref{lem:count-sketch,lem:tail-sketch}, but we record both for convenience later on.

\begin{lemma}[Tail Estimates] \label{lem:tail-estimates}
With high probability,~\smash{$\frac{1}{c} \norm{x_{-c k}}_2^2 \leq \hat t \leq \norm{x_{-k}}_2^2$}.
\end{lemma}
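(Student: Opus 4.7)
The plan is to derive this lemma as a direct corollary of \cref{lem:tail-sketch}. That lemma provides a linear sketch of $O(\log n)$ rows producing a number $\tilde t$ with $\frac{1}{ck}\norm{x_{-ck}}_2^2 \le \tilde t \le \frac{1}{k}\norm{x_{-k}}_2^2$ with high probability. The sketch's first query type in \cref{sec:sketch} is precisely an instantiation of this primitive, so I would set $\hat t := k\cdot \tilde t$; multiplying both sides of the guarantee by $k$ yields exactly $\frac{1}{c}\norm{x_{-ck}}_2^2 \le \hat t \le \norm{x_{-k}}_2^2$.

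Concretely, the proof will be a single paragraph: invoke \cref{lem:tail-sketch}, observe that $\hat t$ is defined to be the rescaled estimate $k\tilde t$ extracted from the first block of the sketch (the tail-estimation queries), and conclude by scaling. The ``with high probability'' conclusion is inherited verbatim from \cref{lem:tail-sketch}.

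There is essentially no obstacle here. The only point worth double-checking is that the rescaling factor $k$ is a fixed, publicly known quantity (it is an input to the algorithm), so the rescaled output $\hat t$ can indeed be computed from the sketch in $O(\log n)$ time, matching the time budget already claimed for this block of the sketch. The column sparsity and row count of this sub-sketch are $O(\log n)$, consistent with the overall accounting toward the $m = O((k/\epsilon)\log n)$ rows and $O(\log n)$ column sparsity promised by \cref{thm:main}.
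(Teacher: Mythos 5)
Your proof is correct and matches the paper's reasoning exactly: the paper simply notes that \cref{lem:tail-estimates} is immediate from \cref{lem:tail-sketch}, and the intended derivation is indeed the rescaling $\hat t = k\tilde t$ that you spell out. No gap; the paper just omits the one-line calculation.
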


\begin{lemma}[Point Estimates] \label{lem:point-estimates}
With high probability, $|x_i - \hat x_i|^2 \leq \frac{\epsilon}{32 c k} \cdot \norm{x_{-c k}}_2^2$ for all $i \in [n]$.
\end{lemma}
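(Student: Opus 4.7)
The plan is to invoke the Count-Sketch guarantee from \cref{lem:count-sketch} with the parameters specified in the sketch construction, namely $r = 10\log n$ and $s = \lceil 8192 ck/\epsilon\rceil$, and then take a union bound over all $n$ coordinates. This is really a direct instantiation, with the only subtlety being to check that the target bound $\frac{\epsilon}{32ck}\|x_{-ck}\|_2^2$ is indeed what falls out of the lemma for an appropriate deviation parameter $\lambda$.

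Concretely, I would set $\lambda = 10\log n$ (which satisfies $\lambda \le r$ as required by \cref{lem:count-sketch}). Since $s = \lceil 8192 ck/\epsilon\rceil \ge ck$, the tail is only smaller, so $\|x_{-s}\|_2^2 \le \|x_{-ck}\|_2^2$. Then the Count-Sketch bound gives
\begin{equation*}
    |x_i - \hat x_i|^2 \le \frac{\lambda}{r}\cdot\frac{\|x_{-s}\|_2^2}{s} \le \frac{10\log n}{10\log n}\cdot\frac{\epsilon}{8192 ck}\cdot\|x_{-ck}\|_2^2 \le \frac{\epsilon}{32ck}\cdot\|x_{-ck}\|_2^2,
\end{equation*}
with failure probability at most $2\exp(-10\log n) = 2n^{-10}$ per coordinate.

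A union bound over all $i \in [n]$ then bounds the overall failure probability by $2n^{-9}$, which qualifies as high probability. No obstacle is expected here: the entire content of the lemma is a bookkeeping check that the constants $8192$ and $32$ chosen in the sketch description leave enough slack for the $\lambda/r$ factor and for passing from $s$ to $ck$ in the tail. I would also note in passing that this slack is deliberate, as the sharper per-coordinate concentration $\Pr(|x_i - \hat x_i|^2 \ge \frac{\lambda}{r}\cdot\frac{\epsilon}{ck}\|x_{-ck}\|_2^2) \le 2\exp(-\Omega(\lambda))$ with variable $\lambda$ will be reused in the later \cref{lem:point-estimates-intermediate,lem:bucket-error} to handle accumulated estimation errors; here we merely record the worst-case pointwise bound that is uniform over all $i$.
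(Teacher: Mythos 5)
Your proof is correct and is exactly the intended argument; the paper simply states that \cref{lem:point-estimates} is ``immediate from \cref{lem:count-sketch}'' without writing it out, and your instantiation with $\lambda = r = 10\log n$, $s = \lceil 8192ck/\epsilon\rceil$, and a union bound over $[n]$ is precisely that immediate derivation, with the constants checked.
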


The first consequence is that we can pinpoint which coordinates $i$ can pass the test ``$|x_i|^2 \geq \frac{\epsilon}{2k} \cdot \hat t$\,'' in \cref*{alg:recover:line:threshold} of the recovery algorithm: In the following lemma we show that \emph{all} heavy coordinates pass the test, and the test is \emph{only} passed by intermediate coordinates.

\begin{lemma}[Recovery Threshold] \label{lem:recovery-threshold}
Suppose that the events from \cref{lem:tail-estimates,lem:point-estimates} hold. Then for all $i \in H$ it holds that $|\hat x_i|^2 \geq \frac{\epsilon}{2k} \hat t$, whereas for all $i \not\in I$ it holds that $|\hat x_i|^2 < \frac{\epsilon}{2k} \hat t$.
\end{lemma}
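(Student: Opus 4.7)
The plan is to derive the lemma from the triangle inequality combined with the two given bounds, after converting every quantity into multiples of $\hat t$. The key conversion step uses the tail estimate $\hat t \geq \tfrac{1}{c} \|x_{-ck}\|_2^2$ to rewrite the point-estimate error bound from \cref{lem:point-estimates} as
\[ |x_i - \hat x_i|^2 \;\leq\; \tfrac{\epsilon}{32 c k} \cdot \|x_{-ck}\|_2^2 \;\leq\; \tfrac{\epsilon}{32 k} \cdot \hat t, \]
uniformly over $i \in [n]$. Once everything speaks the same unit, the two directions of the lemma become routine numeric verifications.

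For the first direction, fix $i \in H$. From the definition of heaviness and $\hat t \leq \|x_{-k}\|_2^2$ we have $|x_i|^2 \geq \tfrac{\epsilon}{k} \hat t$, which is much bigger than the error bound $|x_i - \hat x_i|^2 \leq \tfrac{\epsilon}{32 k} \hat t$. Combining with the reverse triangle inequality $|\hat x_i| \geq |x_i| - |x_i - \hat x_i|$ (which is nonnegative here) and squaring, one obtains
\[ |\hat x_i|^2 \;\geq\; \bigl(\sqrt{\tfrac{\epsilon}{k}\hat t}\,-\,\sqrt{\tfrac{\epsilon}{32k}\hat t}\bigr)^2 \;=\; \tfrac{\epsilon}{k}\hat t \cdot (1 - 1/\sqrt{32})^2 \;\geq\; \tfrac{\epsilon}{2k}\hat t, \]
since $(1 - 1/\sqrt{32})^2 > 1/2$. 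For the second direction, fix $i \notin I$. Then $|x_i|^2 < \tfrac{\epsilon}{4 c k} \|x_{-ck}\|_2^2 \leq \tfrac{\epsilon}{4 k} \hat t$, and the triangle inequality $|\hat x_i| \leq |x_i| + |x_i - \hat x_i|$ with the same error bound yields
\[ |\hat x_i|^2 \;\leq\; \bigl(\sqrt{\tfrac{\epsilon}{4k}\hat t}\,+\,\sqrt{\tfrac{\epsilon}{32k}\hat t}\bigr)^2 \;=\; \tfrac{\epsilon}{4k}\hat t \cdot (1 + 1/\sqrt{8})^2 \;<\; \tfrac{\epsilon}{2k}\hat t, \]
since $(1 + 1/\sqrt{8})^2 < 2$.

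There is no real obstacle here — the statement is essentially a sanity check that the recovery threshold $\tfrac{\epsilon}{2k}\hat t$ used by the algorithm sits strictly between the magnitudes of squared estimates produced for heavy versus non-intermediate coordinates, given the slack built into the point-estimate accuracy ($\tfrac{\epsilon}{32 c k}$) and the definition of intermediate ($\tfrac{\epsilon}{4 c k}$). The only thing to be careful about is to route the error bound through $\|x_{-ck}\|_2^2$ rather than $\|x_{-k}\|_2^2$ so that the factor of $c$ coming from the tail estimate cancels cleanly with the $1/c$ in the point-estimate constant; this is exactly why the sketch is tuned with those specific constants.
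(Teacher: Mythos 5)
Your proposal is correct and matches the paper's proof essentially verbatim: both directions use the triangle inequality in the non-squared form, convert $\|x_{-k}\|_2^2$ and $\|x_{-ck}\|_2^2$ to multiples of $\hat t$ via \cref{lem:tail-estimates}, and conclude with the numeric checks $(1 - 1/\sqrt{32})^2 > 1/2$ and $(1/2 + 1/\sqrt{32})^2 < 1/2$. The only cosmetic difference is whether the conversion to $\hat t$ units is done before or after applying the triangle inequality, which is immaterial.
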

\begin{proof}
Conditioned on these events we have~\smash{$\frac{1}{c} \norm{x_{-c k}}_2^2 \leq \hat t \leq \norm{x_{-k}}_2^2$} and~\smash{$|x_i - \hat x_i|^2 \leq \frac{\epsilon}{32 c k} \cdot \norm{x_{-c k}}_2^2$} for all $i \in [n]$. In this case, for each $i \in H$ we have
\begin{equation*}
    |\hat x_i| \geq |x_i| - |x_i - \hat x_i| \geq \sqrt{\frac{\epsilon \norm{x_{-k}}_2^2}{k}} - \sqrt{\frac{\epsilon \norm{x_{-ck}}_2^2}{32 c k}} \geq \sqrt{\frac{\epsilon \hat t}{k}} - \sqrt{\frac{\epsilon \hat t}{32 k}} \geq \sqrt{\frac{\epsilon \hat t}{2k}},
\end{equation*}
and for each $i \not\in I$ we have
\begin{equation*}
    |\hat x_i| \leq |x_i| + |x_i - \hat x_i| < \sqrt{\frac{\epsilon \norm{x_{-ck}}_2^2}{4ck}} + \sqrt{\frac{\epsilon \norm{x_{-ck}}_2^2}{32 c k}} \leq \sqrt{\frac{\epsilon \hat t}{4k}} + \sqrt{\frac{\epsilon \hat t}{32 k}} \leq \sqrt{\frac{\epsilon \hat t}{2 k}},
\end{equation*}
as claimed.
\end{proof}

The following lemma is more interesting. We exploit that Count-Sketch in fact provides point estimates that are better by a factor of $\log n$ \emph{on average} as proven by Minton and Price~\cite{MintonP14} to argue that with good probability the fraction of (intermediate) coordinates that are misestimated by a factor of $\Theta(\lambda)$ is exponentially small in $\lambda$.

\begin{lemma}[Intermediate Point Estimates] \label{lem:point-estimates-intermediate}
For an integer $\lambda \geq 1$, let $I_\lambda \subseteq I$ denote the subset of coordinates $i$ with
\begin{equation*}
    \frac{\epsilon(\lambda-1)}{2^{14} k \log n} \cdot \norm{x_{-k}}_2^2 \leq |x_i - \hat x_i|^2 \leq \frac{\epsilon \lambda}{2^{14} k \log n} \cdot \norm{x_{-k}}_2^2.
\end{equation*}
Then with probability at least $0.99$, it holds that~\smash{$|I_\lambda| \leq |I| / 2^{\lambda-1}$} for all integers $\lambda \geq 1$.
\end{lemma}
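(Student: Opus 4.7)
The plan is to argue, for each individual intermediate coordinate $i$, that it lies in $I_\lambda$ with probability decaying exponentially in $\lambda$, and then take expectations over $I$ and apply Markov together with a union bound over $\lambda$.

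First, I would apply the Minton--Price bound from \cref{lem:count-sketch} with the sketch parameters $r = 10 \log n$ and $s = \lceil 8192 ck/\epsilon \rceil$. Since $s \geq k$ we have $\|x_{-s}\|_2^2 \leq \|x_{-k}\|_2^2$, and hence
\[
    \frac{1}{r} \cdot \frac{\|x_{-s}\|_2^2}{s} \leq \frac{\epsilon \cdot \|x_{-k}\|_2^2}{81920\, c\, k \log n}.
\]
Choosing parameter $\lambda' = 5c(\lambda-1)$ in \cref{lem:count-sketch} therefore gives, for each fixed $i$,
\[
    \Pr\!\left(|x_i - \hat x_i|^2 \geq \frac{\epsilon(\lambda-1)}{2^{14} k \log n}\, \|x_{-k}\|_2^2\right) \leq 2\exp(-5c(\lambda-1)).
\]
In particular $\Pr(i \in I_\lambda) \leq 2\exp(-5c(\lambda-1))$ whenever $5c(\lambda-1) \leq r$, i.e., $\lambda \leq O(\log n / c)$.

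Next I would move from this per-coordinate bound to a global bound on $|I_\lambda|$. By linearity of expectation, $\mathbb{E}[|I_\lambda|] \leq 2 |I| \exp(-5c(\lambda-1))$, and Markov's inequality gives
\[
    \Pr\!\left(|I_\lambda| > |I|/2^{\lambda-1}\right) \leq 2 \cdot 2^{\lambda-1} \cdot \exp(-5c(\lambda-1)) = 2\left(\tfrac{2}{e^{5c}}\right)^{\lambda-1}.
\]
Since $c \geq 1$ and $\lambda = 1$ is trivially satisfied ($|I_1| \leq |I|$), summing this geometric tail over $\lambda \geq 2$ yields a probability well below $0.01$, as desired. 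This is the step I would need to double-check constants on, because Markov (not Chernoff) is forced on us by the dependencies between the events $\{i \in I_\lambda\}_{i \in I}$ coming from the shared hashing in Count-Sketch.

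The remaining subtlety is the range of $\lambda$ for which the per-coordinate bound is applicable: \cref{lem:count-sketch} requires $\lambda' \leq r$, corresponding roughly to $\lambda \leq 2 \log n / c$. For larger $\lambda$, I would instead invoke \cref{lem:point-estimates}, which holds with high probability and guarantees $|x_i - \hat x_i|^2 \leq \frac{\epsilon}{32\,c\,k} \|x_{-ck}\|_2^2 \leq \frac{\epsilon}{32 k} \|x_{-k}\|_2^2$ for every $i$. Comparing this to the lower threshold in the definition of $I_\lambda$ shows that $I_\lambda = \emptyset$ deterministically once $\lambda$ exceeds $O(\log n)$, so those tails contribute nothing to the union bound. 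The main conceptual obstacle is really the first one: squeezing an exponentially small expected-size bound out of Count-Sketch that can survive the $2^{\lambda-1}$ blow-up from Markov, and this is precisely where the averaged Minton--Price guarantee (rather than the classical worst-case Count-Sketch bound) is essential.
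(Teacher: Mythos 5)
Your proposal matches the paper's proof in all essential respects: a per-coordinate application of the Minton--Price guarantee of \cref{lem:count-sketch} with $r = 10 \log n$ and $s = \ceil{8192ck/\epsilon}$, linearity of expectation to bound $\Ex[|I_\lambda|]$, Markov's inequality against the threshold $|I|/2^{\lambda-1}$, and a geometric union bound over $\lambda \geq 2$ (the paper defers the observation that $I_\lambda = \emptyset$ beyond $\lambda = O(\log n)$ to \cref{lem:bucket-error}, and is in fact less explicit than you are about the restriction $\lambda \leq r$ in \cref{lem:count-sketch}). The only divergence is bookkeeping of constants: the paper takes the exponent $10(\lambda-1)$ (admissible since one may assume $c \geq 2$), giving per-$\lambda$ failure probability $2^{-9(\lambda-1)}$ and total error under $2^{-8} < 0.01$, whereas your $\lambda' = 5c(\lambda-1)$ together with the factor $2$ from the lemma lands near $0.03$ at $c = 1$ -- a slack you correctly flag and which is trivially absorbed by a slightly larger exponent.
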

\begin{proof}
Recall that in the sketch we pick the parameters $r = 10 \log n$ and $s \geq 8192 c k / \epsilon$ for the estimate $\hat x$ provided by Count-Sketch. Therefore, for any $i \in I$, \cref{lem:count-sketch} implies that:
\begin{equation*}
    \Pr\parens*{|x_i - \hat x_i|^2 \geq \frac{\epsilon (\lambda-1)}{2^{14} k \log n} \cdot \norm{x_{-k}}_2^2} \leq \Pr\parens*{|x_i - \hat x_i|^2 \geq \frac{10 (\lambda-1)}{r} \cdot \frac{\norm{x_{-s}}_2^2}{s}} \leq \exp(-10(\lambda-1)).
\end{equation*}
Therefore, $\Ex[|I_\lambda|] \leq |I| \cdot \exp(-10 (\lambda - 1))$, and from Markov's inequality it follows that
\begin{equation*}
    \Pr\parens*{|I_\lambda| > \frac{|I|}{2^{\lambda-1}}} \leq \frac{\Ex[|I_\lambda|] \cdot 2^{\lambda-1}}{|I|} \leq 2^{\lambda-1} \cdot 2^{-10(\lambda-1)} = 2^{-9(\lambda - 1)}.
\end{equation*}
Taking a union bound over these error probabilities for $\lambda \geq 2$ (noting that for $\lambda = 1$ the statement is clear), the total error probability is at most~\smash{$\sum_{\lambda \geq 2} 2^{-9(\lambda - 1)} \leq 2^{-8} \leq 0.01$} and the proof is complete.
\end{proof}

This completes the analysis of the point and tail estimates.  For the rest of the analysis we will never directly consider the randomness of these estimates, other than conditioning on the previous \cref{lem:point-estimates,lem:tail-estimates,lem:point-estimates-intermediate}.

\subsection{Estimation Error per Bucket} \label{sec:sketch:sec:bucket-estimates}
The next step is to bound the total estimation error in each bucket $B_v$. In fact, we only care about the estimation error of the intermediate coordinates $I_v$ (as these are the only potentially recovered elements that influence the values $p_{v, j}$ in the recovery algorithm; we will be more precise about this in the next subsection).

\begin{lemma}[Bucket Load] \label{lem:bucket-load}
With high probability, it holds that \smash{$|I_v| \leq \frac{2 \log n}{\log\log n}$}.
\end{lemma}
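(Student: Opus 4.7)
\medskip
\noindent\textbf{Proof proposal for \cref{lem:bucket-load}.} The plan is to argue, via a standard Chernoff/binomial-tail-style bound, that for each fixed bucket the load $|I_v|$ is extremely concentrated around its (small constant) mean, and then take a union bound over all buckets.

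First, I would observe that the random sets $e_i$ are independent across $i$, each being a uniformly random size-3 subset of $V$. Fixing any bucket $v \in V$, this means $|I_v| = \sum_{i \in I} \mathbb{1}[v \in e_i]$ is a sum of independent Bernoulli indicators, each with probability $p = 3/|V|$. Using $|I| \leq 8ck/\epsilon$ from \cref{lem:heavy-intermediate-number} and $|V| = 2^{19}ck/\epsilon$, the mean is $\mathbb{E}[|I_v|] = 3|I|/|V| \leq 24/2^{19}$, a tiny constant. (In particular $|I|p \leq 1$.)

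Next I would apply the binomial upper tail bound
\begin{equation*}
    \Pr\bigl(|I_v| \geq t\bigr) \;\leq\; \binom{|I|}{t} p^t \;\leq\; \frac{(|I|p)^t}{t!} \;\leq\; \left(\frac{e\,|I|p}{t}\right)^t \;\leq\; \left(\frac{e}{t}\right)^t,
\end{equation*}
and plug in $t = 2\log n/\log\log n$. The resulting exponent is
\begin{equation*}
    t \log(t/e) \;=\; \frac{2\log n}{\log\log n}\bigl(\log\log n - \log\log\log n - O(1)\bigr) \;=\; 2\log n - o(\log n),
\end{equation*}
so $\Pr(|I_v| \geq t) \leq n^{-2+o(1)}$. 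Since we are using the crude bound $k/\epsilon = O(n)$, we have $|V| = O(n)$ buckets, and a union bound yields a total failure probability of $n^{-1+o(1)}$, which is high probability as claimed.

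There is no real obstacle here; the only thing worth being slightly careful about is that independence of the Bernoullis across $i$ is needed (and available, since the $e_i$'s are independent), and that the calibration between $|I|$ and $|V|$ ensures $\mathbb{E}[|I_v|]$ is a small constant so that the $(e/t)^t$ form of the bound kicks in and gives exactly the polynomial savings required by the union bound.
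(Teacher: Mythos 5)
Your proof is correct and takes essentially the same approach as the paper: observe that $|I_v|$ is a sum of independent Bernoulli indicators with $\mathbb{E}[|I_v|] \leq 1$ (using $|I| \leq 8ck/\epsilon$ and the calibration of $|V|$), apply a Chernoff/Poisson-style tail bound of the form $(e/t)^t$ at $t = 2\log n/\log\log n$ to obtain $n^{-2+o(1)}$ per bucket, and union-bound over the $O(n)$ buckets. The paper states the same intermediate bound as $e^{\ell+1}/\ell^\ell$ rather than via explicit binomial coefficients, but the calculation is the same.
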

\begin{proof}
Fix any bucket $v \in V$. By \cref{lem:heavy-intermediate-number} the expected bucket load is at most
\begin{equation*}
    \Ex[|I_v|] = |I| \cdot \frac{3}{|V|} \leq 8c k / \epsilon \cdot \frac{3}{24 c k / \epsilon} = 1.
\end{equation*}
Moreover, as across all $i \in I$ the events $i \in I_v$ are independent, from Chernoff's bound it follows for $\ell := \frac{2\log n}{\log\log n}$ that
\begin{equation*}
    \Pr\parens*{|I_v| \geq \ell} \leq \frac{e^{\ell+1}}{\ell^\ell} \leq \frac{n^{o(1)}}{n^{2-o(1)}}.
\end{equation*}
Taking a union bound over all $|V| \leq O(n)$ buckets, the claim is immediate.
\end{proof}

\begin{lemma}[Estimation Error per Bucket] \label{lem:bucket-error}
Conditioned on the event from \cref{lem:point-estimates-intermediate}, with high probability it holds that, for each bucket $v \in V$,
\begin{equation*}
    \norm{(x - \hat x)_{I_v}}_2^2 \leq \frac{\epsilon}{2048 k} \cdot \norm{x_{-k}}_2^2.
\end{equation*}
\end{lemma}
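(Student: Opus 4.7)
The plan is to combine the bucket-load bound (\cref{lem:bucket-load}) with the refined structure on Count-Sketch errors from \cref{lem:point-estimates-intermediate}. Conditional on $|I_\lambda| \le |I|/2^{\lambda-1}$ for every $\lambda \ge 1$, I would partition $I_v = \bigsqcup_{\lambda \ge 1}(I_v \cap I_\lambda)$ and upper bound
\[
    \norm{(x-\hat x)_{I_v}}_2^2 \;\le\; \sum_{\lambda \ge 1} |I_v \cap I_\lambda| \cdot \frac{\epsilon \lambda}{2^{14} k \log n} \cdot \norm{x_{-k}}_2^2,
\]
so that setting $Y_v := \sum_{\lambda \ge 1} \lambda \cdot |I_v \cap I_\lambda|$, the lemma reduces to showing that $Y_v \le 8 \log n$ with high probability simultaneously for all $v$ (since $2^{14}/2048 = 8$).

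Next I would treat only the bucket-hashing as random, since the Count-Sketch randomness (which determines the sets $I_\lambda$) is independent of the hash function that defines the $e_i$. Let $\lambda(i)$ denote the unique index with $i \in I_{\lambda(i)}$; by \cref{lem:point-estimates} combined with $\norm{x_{-ck}}_2^2 \le \norm{x_{-k}}_2^2$, every error satisfies $|x_i - \hat x_i|^2 \le \tfrac{\epsilon}{32ck} \norm{x_{-k}}_2^2$, forcing $\lambda(i) \le O(\log n)$ uniformly. Writing
\[
    Y_v \;=\; \sum_{i \in I} \lambda(i) \cdot \mathbf{1}[i \in B_v]
\]
expresses $Y_v$ as an independent sum (over $i \in I$) of nonnegative random variables, each bounded by $O(\log n)$.

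The computation I expect to carry out is routine: using $|I| \le 8ck/\epsilon$, $|V| = 2^{19} ck/\epsilon$, the bound $|I_\lambda| \le |I|/2^{\lambda-1}$, and $\sum_{\lambda \ge 1} \lambda/2^{\lambda-1} = O(1)$, one gets
\[
    \Ex[Y_v] \;=\; \frac{3}{|V|} \sum_{\lambda \ge 1} \lambda |I_\lambda| \;=\; O(|I|/|V|) \;=\; O(1),
\]
and an identical calculation with $\lambda^2$ in place of $\lambda$ bounds the variance by $O(1)$ as well. Bernstein's inequality then yields $\Pr(Y_v > 8 \log n) \le \exp(-\Omega(\log n))$, which by suitable choice of the constants hidden in $r = 10 \log n$ (from \cref{lem:count-sketch}) and in the sketch can be made at most $n^{-2}$. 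A union bound over the $|V| = O(n)$ buckets then delivers the lemma.

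The main obstacle I expect is purely constant-bookkeeping: the threshold $8 \log n$ is fixed by the target $\tfrac{\epsilon}{2048 k}$ in the statement, so the Bernstein tail $\exp(-\Omega(\log n))$ must comfortably beat the $|V| = O(n)$ union bound. This is what forces the Count-Sketch column sparsity $r = 10 \log n$ (and, upstream, the $2^{14}$ factor in \cref{lem:point-estimates-intermediate}) to be chosen with enough slack. No new ideas beyond the Minton--Price subexponential-error guarantee and a Bernstein-type concentration bound should be required.
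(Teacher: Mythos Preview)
Your reduction is exactly the paper's: partition $I_v$ by the error classes $I_\lambda$, bound $\norm{(x-\hat x)_{I_v}}_2^2$ termwise, and reduce to showing $Y_v:=\sum_{\lambda\ge 1}\lambda\,|I_v\cap I_\lambda|\le 8\log n$ with high probability over the bucket hashing. The gap is in the concentration step.

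Bernstein is too weak here. In the regime $\sigma^2=O(1)$ and $Mt\gg\sigma^2$, Bernstein gives
\[
\Pr(Y_v-\Ex Y_v\ge t)\le\exp\!\left(-\Theta\!\left(\frac{t}{M}\right)\right),
\]
and in your setup both the deviation $t=8\log n$ and the uniform bound $M=\max_i\lambda(i)$ are $\Theta(\log n)$ (indeed $|I_\lambda|\le |I|/2^{\lambda-1}$ only forces $\lambda(i)\le 1+\log_2|I|$, and $|I|$ can be polynomial in $n$). So the exponent is a constant, and you get $\Pr(Y_v>8\log n)\le e^{-\Theta(1)}$, which does not survive the union bound over $|V|=\Theta(k/\epsilon)$ buckets. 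Tuning the constants in $r$ or in the definition of $I_\lambda$ does not help: it rescales $t$ by a constant but leaves $M$ at $\Theta(\log n)$, so $t/M$ stays $\Theta(1)$.

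The paper closes this gap differently. Rather than a moment-based inequality, it does a union bound over all \emph{profiles} $b=(b_1,\dots,b_{\log n})$ with $\sum_\lambda \lambda b_\lambda>8\log n$, bounding $\Pr(\forall\lambda:|B_v\cap I_\lambda|=b_\lambda)\le\prod_\lambda 2^{-(\lambda-1)b_\lambda}=2^{-\sum_\lambda\lambda b_\lambda+\sum_\lambda b_\lambda}$. The crucial extra ingredient is \cref{lem:bucket-load}: one only needs profiles with $\sum_\lambda b_\lambda\le |I_v|\le \ell:=2\log n/\log\log n$, which both makes the exponent $\ge 8\log n-\ell$ and bounds the number of profiles by $n^{O(1)}$. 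This is precisely the ``tight bucket-load'' dependence the paper flags in the overview, and it is what your Bernstein argument is missing.

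If you prefer to stay with your independent-sum formulation, the clean fix is to bypass Bernstein and bound the MGF directly: for any $s<\ln 2$,
\[
\Ex\!\left[e^{sY_v}\right]\le\exp\!\Big(\tfrac{3}{|V|}\sum_{i\in I}(e^{s\lambda(i)}-1)\Big)
\le\exp\!\Big(\tfrac{3}{|V|}\cdot 2|I|\sum_{\lambda\ge 1}(e^s/2)^\lambda\Big)=O(1),
\]
using the geometric decay $|I_\lambda|\le|I|/2^{\lambda-1}$ and $|I|/|V|=O(1)$. Then $\Pr(Y_v\ge 8\log_2 n)\le O(1)\cdot e^{-8s\log_2 n}=n^{-\Omega(1)}$, which \emph{does} survive the union bound. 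The point is that the full subexponential tail structure of the $\lambda(i)$'s must be used, not just the first two moments plus an $L_\infty$ bound.
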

\begin{proof}
We condition on the event from \cref{lem:point-estimates-intermediate} and treat the randomness of the point estimates~$\hat x$ as fixed. The only source of randomness that concerns us here is the randomness of $e_i$ for $i \in I$. Fix any bucket $v \in V$ and suppose that
\begin{equation*}
    \norm{(x - \hat x)_{I_v}}_2^2 > \frac{\epsilon}{2048 k} \cdot \norm{x_{-k}}_2^2.
\end{equation*}
On the other hand, we have that
\begin{align*}
    \norm{(x - \hat x)_{I_v}}_2^2
    &= \sum_{\lambda \geq 1} \norm{(x - \hat x)_{B_v \cap I_\lambda}}_2^2
    \leq \sum_{\lambda \geq 1} |B_v \cap I_\lambda| \cdot \frac{\epsilon\lambda}{2^{14} k \log n} \cdot \norm{x_{-k}}_2^2,
\end{align*}
and thus
\begin{equation*}
    \sum_{\lambda \geq 1} \lambda \cdot |B_v \cap I_\lambda| > 8 \log n.
\end{equation*}
However, we will show that this last inequality holds only with very small probability. To see this, first recall from \cref{lem:point-estimates-intermediate} that $H_\lambda = \emptyset$ for all $\lambda > \log n$. Now fix any integers $b = (b_1, \dots, b_{\log n})$ with~\smash{$\sum_{\lambda \geq 1} \lambda b_\lambda \geq 8 \log n$}; we show that the probability that $b_\lambda = |B_v \cap I_\lambda|$ is small. A first observation is that if $\sum_{\lambda \geq 1} b_{\lambda} > \ell := \frac{2 \log n}{\log \log n}$ then the bucket load $|I_v|$ exceeds $\ell$, but this event happens only with small probability by the previous lemma. So instead assume that~\smash{$\sum_{\lambda \geq 1} b_\lambda \leq \ell$}. Recall that each element $i \in I_\lambda$ is participating in $B_v$ independently with probability $\frac{3}{|V|}$. Therefore:
\begin{align*}
    \Pr(|B_v \cap I_\lambda| = b_\lambda)
    &\leq \binom{|I_\lambda|}{b_\lambda} \cdot \parens*{\frac{3}{|V|}}^{b_\lambda} \\
    &\leq |I_\lambda|^{b_\lambda} \cdot \parens*{\frac{3}{|V|}}^{b_\lambda} \\
    &\leq \frac{|I|^{b_\lambda}}{2^{(\lambda-1) b_{\lambda}}} \cdot \parens*{\frac{3}{|V|}}^{b_\lambda} \\
    &\leq \frac{1}{2^{(\lambda-1) b_\lambda}},
\end{align*}
where in the final step we used that $|I| \leq 8 c k / \epsilon \leq |V| / 3$. These events are independent for the different sets $I_\lambda$, hence
\begin{align*}
    \Pr(\forall \lambda \in [\log n] : |B_v \cap I_\lambda| = b_\lambda)
    &\leq \prod_\lambda \frac{1}{2^{(\lambda-1) b_\lambda}} \\
    &= 2^{-\sum_\lambda (\lambda-1) b_\lambda} \\
    &= 2^{-\sum_\lambda \lambda b_\lambda + \sum_\lambda b_\lambda} \\
    &\leq 2^{-8 \log n + \ell} \\
    &\leq n^{-8+o(1)}.
\end{align*}
To bound the total error probability, we take a union bound over the $|V| \leq O(n)$ buckets $v$, and also over all integer sequences $b = (b_1, \dots, b_{\log n})$ with $\sum_\lambda \lambda b_\lambda \geq 8 \log n$ and $\sum_\lambda b_\lambda \leq \ell$. Each such sequence has at most $\ell$ nonzero entries which respectively ranges from $1$ to $\ell$, and thus the number of sequences is at most
\begin{equation*}
    \binom{\log n}{\ell} \cdot (\ell+1)^{\ell} \leq (\log n)^{\ell} \cdot (\log n)^\ell = n^4.
\end{equation*}
It follows that the error probability is $n^{-3+o(1)}$, which completes the proof.
\end{proof}

\subsection{Model as Weighted Hypergraph Peeling} \label{sec:sketch:sec:peeling}
We are finally ready to cast the recovery algorithm as a weighted hypergraph peeling process. Specifically, consider the following random hypergraph:

\begin{definition}[Associated Hypergraph] \label{def:associated-hypergraph}
The \emph{associated hypergraph} $G = (V, E, w)$ is the $3$\=/uniform hypergraph with 
\begin{itemize}[topsep=\medskipamount, itemsep=0pt]
    \item vertices $V = [2^{19} c k/\epsilon]$,
    \item hyperedges $E = \set{ e_i : i \in H}$,
    \item edge-weights $w_{e_i} = |x_i|^2$, and
    \item vertex-weights $w_v = \norm{x_{L_v}}_2^2$.
\end{itemize}
\end{definition}

In the remainder of this subsection we will prove that, with good probability, \emph{all} peelable edges~$e_i$ in the associated hypergraph will be recovered by the algorithm in the sense that $i \in R$. In the next subsection we will then prove that for almost all coordinates $i$ that we aim to recover the edge $e_i$ is peelable in the associated hypergraph.

\begin{lemma}[Recovering Free Edges] \label{lem:recover-free}
Suppose that the events from \cref{lem:point-estimates,lem:tail-estimates,lem:bucket-load,lem:bucket-error} hold. Then with high probability, the graph $G \setminus \set{e_i : i \in R}$ does not contain $\rho$-free edges for $\rho = 2048$.
\end{lemma}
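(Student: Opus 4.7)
The plan is to argue by contradiction. Suppose some $\rho$-free edge $e_i$ survives in $G \setminus \set{e_j : j \in R}$ at termination, and fix a witness vertex $v \in e_i$. Because \cref*{alg:recover:line:push} re-inserts $v$ into $Q$ whenever some $i' \in B_v$ is added to $R$, the state at the last time $v$ is popped in \cref*{alg:recover:line:pop} must have $R \cap B_v$ identical to the final $R \cap B_v$; so $e_i$ was already $\rho$-free with witness $v$ at that moment. The goal is to show that the decoder applied to bucket $v$ at that moment outputs $i$, which via \cref{lem:recovery-threshold} (the test in \cref*{alg:recover:line:threshold} passes since $i \in H$) forces $i \in R$, contradicting $i \notin R$.

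\paragraph{Residual decomposition and noise bound.} At this critical moment the residual takes the form $p_{v,j} = \sigma_j(i)\operatorname{Enc}(i)_j \, x_i + N_{v,j}$, where $N_{v,j}$ is a signed sum with three contributions: (i) the non-recovered heavy hitters $H_v \setminus (R \cup \set{i})$, (ii) the light coordinates $L_v$, and (iii) the estimation errors of the recovered coordinates $B_v \cap R$. The $\rho$-freeness of $e_i$ at $v$ bounds the squared $\ell_2$-mass of the coefficients of (i)+(ii) by $|x_i|^2/\rho$. Since $R \cap B_v \subseteq I_v$ by \cref{lem:recovery-threshold}, \cref{lem:bucket-error} bounds the squared $\ell_2$-mass of (iii) by $\tfrac{\epsilon}{2048 k} \norm{x_{-k}}_2^2 \le |x_i|^2/\rho$, using that $i$ is heavy. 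Taking expectation only over the independent random signs $\sigma_j$ yields $\Ex[N_{v,j}^2] \le 2|x_i|^2/\rho = |x_i|^2/1024$, and crucially the $N_{v,j}$'s are independent across $j$ because different sign functions $\sigma_j$ are used.

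\paragraph{Decoder recovers $i$.} We will call index $j$ \emph{bad} if $|N_{v,j}| > |x_i|/2$; Markov gives $\Pr(j \text{ bad}) \le 1/256$, and Chernoff over the $2048\log n$ independent indices yields $|B| \le 64\log n$ with probability at least $1 - n^{-c}$ for any desired constant $c$. When this holds, every good $j$ with $\operatorname{Enc}(i)_j = 1$ satisfies $|p_{v,j}| \ge |x_i|/2$ and every good $j$ with $\operatorname{Enc}(i)_j = 0$ satisfies $|p_{v,j}| \le |x_i|/2$. Combined with the balance guarantee from \cref{lem:code} (that $|\operatorname{supp}(\operatorname{Enc}(i))|$ lies in $[960\log n,\, 1088\log n]$), a short case analysis on the top-$1024\log n$ positions of $(|p_{v,j}|)_j$ shows that they differ from $\operatorname{supp}(\operatorname{Enc}(i))$ in at most $128\log n + 2|B| \le 256\log n = \tfrac{1}{8} \cdot 2048\log n$ coordinates, so $\operatorname{Dec}(y) = i$.

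\paragraph{Union bound and the hard step.} The proof is closed by a union bound over all $O(n)$ buckets $v$, all $\le 3|H|$ pairs $(v, i)$ with $v \in e_i$ and $i \in H$, and all candidate final subsets $S = R \cap B_v \subseteq I_v$; by \cref{lem:bucket-load} there are only $2^{|I_v|} = n^{o(1)}$ such $S$, so the total overhead $n^{1+o(1)}$ is absorbed by choosing the Chernoff constant $c$ large enough. The main obstacle is precisely this bookkeeping: the random set $R$ realized by the algorithm depends intricately on the signs, yet the noise in bucket $v$ only depends on $R \cap B_v$, so it suffices to enumerate the subexponentially-many candidate subsets of $I_v$. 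A secondary subtlety is that the constants of the code (balance $\tfrac{1}{2} \pm \tfrac{1}{16}$ and error tolerance $\tfrac{1}{8}$) dovetail with $\rho = 2048$ tightly enough to leave the decoder the needed slack.
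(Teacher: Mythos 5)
Your proposal follows the paper's proof essentially verbatim: the same last-iteration argument for the witness vertex, the same decomposition of $p_{v,j}$ into signal plus noise $N_{v,j}$, the same three-way variance bound driven by $\rho$-freeness and \cref{lem:bucket-error}, the same Chebyshev--Chernoff treatment of bad coordinates, the same decoding argument, and the same union bound over $2^{|I_v|}$ candidate subsets. Two small numerical slips: your bad-coordinate threshold $|N_{v,j}| > |x_i|/2$ yields only $\geq |x_i|/2$ versus $\leq |x_i|/2$ for the two kinds of good positions, leaving no strict gap at the boundary (the paper takes $|x_i|/4$, giving $3|x_i|/4$ versus $|x_i|/4$); and the balance bracket for $|\operatorname{supp}(\operatorname{Enc}(i))|$ should be $[896\log n, 1152\log n]$ for the $\tfrac{1}{2}\pm\tfrac{1}{16}$ guarantee, not $[960\log n, 1088\log n]$, though your downstream figure of $128\log n$ mismatches from imbalance is the correct one.
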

\begin{proof}
Suppose otherwise, and let $e_{i^*}$ be a $\rho$-free edge in the graph $G \setminus \set{e_i : i \in R}$. Let $v \in e_{i^*}$ be the vertex witnessing that $e_{i^*}$ is $\rho$-free. Focus on the last iteration (of the while-loop in \crefrange*{alg:recover:line:loop}{alg:recover:line:push}) where we have considered the vertex~$v$. As this is the last iteration considering~$v$, from this point on the algorithm cannot recover any more edges $e_i$ incident to $v$ (as this would cause the algorithm to reinsert $v$ into the queue~$Q$), and in particular the set $B_v \cap R$ has not changed from this iteration until the algorithm has terminated. We show that this is a contradiction, since the algorithm would have recovered~\makebox{$i = i^*$} in \cref*{alg:recover:line:dec} and inserted $i$ into $R$ in \cref*{alg:recover:line:push} in this iteration.

By definition, as $e_i$ is $\rho$-free in $G \setminus \set{e_i : i \in R}$ we have that
\begin{equation*}
    w_{e_{i^*}} \geq \rho \cdot \left(w_v + \sum_{\substack{i \in H_v \setminus R\\i \neq i^*}} w_{e_i}\right).
\end{equation*}
Plugging in the definitions of the vertex- and edge-weights this becomes
\begin{equation*}
    |x_{i^*}|^2 \geq \rho \cdot \left(\sum_{i \in L_v} |x_i|^2 + \sum_{\substack{i \in H_v \setminus R\\i \neq i^*}} |x_i|^2 \right) \geq \rho \cdot \sum_{\substack{i \in B_v \setminus R\\i \neq i^*}} |x_i|^2.
\end{equation*}
By definition the algorithm computes
\begin{align*}
    p_{v, j}
    &= \sum_{i \in B_v} \sigma_j(i) \cdot \operatorname{Enc}(i)_j \cdot x_i - \sum_{i \in B_v \cap R} \sigma_j(i) \cdot \operatorname{Enc}(i)_j \cdot \hat x_i \\
    &= \sum_{i \in B_v \setminus R} \sigma_j(i) \cdot \operatorname{Enc}(i)_j \cdot x_i + \sum_{i \in B_v \cap R} \sigma_j(i) \cdot \operatorname{Enc}(i)_j \cdot (x_i - \hat x_i),
\end{align*}
and we define
\begin{align*}
    p_{v, j}^*
    &= \vphantom{\sum_i} p_{v, j} - \sigma_j(i^*) \cdot \operatorname{Enc}(i^*)_j \cdot x_{i^*} \\
    &= \sum_{\substack{i \in B_v \setminus R\\i \neq i^*}} \sigma_j(i) \cdot \operatorname{Enc}(i)_j \cdot x_i + \sum_{i \in B_v \cap R} \sigma_j(i) \cdot \operatorname{Enc}(i)_j \cdot (x_i - \hat x_i).
\end{align*}
We view $p_{v, j}^*$ as a random variable over the randomness of the signs $\sigma_j$. Of course, the set $R$ might also depend on the randomness of the $\sigma_j$'s, however, for now we suppose that $R$ is fixed. We justify this assumption later. We clearly have that
\begin{equation*}
    \Ex[p_{v, j}^*] = 0
\end{equation*}
and
\begin{equation*}
    \Var[p_{v, j}^*]
    = \sum_{\substack{i \in B_v \setminus R\\i \neq i^*}} \operatorname{Enc}(i)_j \cdot |x_i|^2 + \sum_{i \in B_v \cap R} \operatorname{Enc}(i)_j \cdot |x_i - \hat x_i|^2.
\end{equation*}
To derive an upper bound on this variance, we bound the first sum by \smash{$\frac{|x_{i^*}|^2}{\rho}$} due to our previous calculation. To bound the second sum, first recall that in the events of \cref{lem:point-estimates,lem:tail-estimates}, \cref{lem:recovery-threshold} implies that $R \subseteq I$ and thus $B_v \cap R \subseteq I_v$. Second, $\norm{x_{I_v}}_2^2$ is upper bounded by \cref{lem:bucket-error}. Putting both together, we obtain that
\begin{equation*}
    \Var[p_{v, j}^*] \leq \frac{|x_{i^*}|^2}{\rho} + \frac{\epsilon}{\rho k} \cdot \norm{x_{-k}}_2^2 \leq \frac{2|x_{i^*}|^2}{\rho},
\end{equation*}
where in the last step we used that $i^*$ is heavy.
From Chebyshev's inequality it follows that
\begin{equation*}
    \Pr\parens*{|p_{v, j}^*| \geq \frac{|x_{i^*}|}{4}} \leq \frac{\Var[p_{v, j}^*]}{(\frac{|x_{i^*}|}{4})^2} \leq \frac{32}{\rho} = \frac{1}{64}.
\end{equation*}
Let us call an index $j$ \emph{bad} if~\smash{$|p_{v, j}^*| \geq \frac{|x_{i^*}|}{4}$} and \emph{good} otherwise. The previous computation shows that we expect at most a $\frac{1}{64}$-fraction of the indices $j \in [2048 \log n]$ to be bad, and by Chernoff's bound at most a $\frac{1}{32}$-fraction is bad with high probability (namely, \smash{$1 - \exp(-\frac{32 \log n}{3}) \geq 1 - n^{-10}$}).

Now focus on any good index~$j$. In case that $\operatorname{Enc}(i^*)_j = 1$ we have that
\begin{equation*}
    |p_{v, j}| \geq |x_{i^*}| - \frac{|x_{i^*}|}{4} = \frac{3}{4} \cdot |x_{i^*}|,
\end{equation*}
whereas if $\operatorname{Enc}(i^*)_j = 0$ then
\begin{equation*}
    |p_{v, j}| = |p_{v, j}^*| \leq \frac{1}{4} \cdot |x_{i^*}|.
\end{equation*}
Recall that the bit-string $y$ is obtained from $|p_{v, j}|$ by replacing the half of the entries with largest magnitudes by $1$ and the others by $0$. From the previous paragraph, and from the assumption that the fraction of $1$'s in $\operatorname{Enc}(i^*)$ is $\frac{1}{2} \pm \frac{1}{16}$, it follows that $y$ differs from $\operatorname{Enc}(i^*)$ in at most a fraction of~\smash{$\frac{1}{16} + \frac{1}{16} = \frac{1}{8}$} of the entries. (Indeed, in the event that all indices are good, $y$ differs in only a fraction of $\frac{1}{16}$ entries from $\operatorname{Enc}(i^*)$. And each bad coordinate contributes one mismatch plus possibly one displaced good coordinate, leading to a fraction of at most $2 \cdot \frac{1}{32}$ mismatches.) By \cref{lem:code} it finally follows that $\operatorname{Dec}(y) = i^*$.

Additionally, \cref{lem:recovery-threshold} guarantees that (conditioned on the events from \cref{lem:point-estimates,lem:tail-estimates}) every heavy index $i$ satisfies that $|x_i|^2 \geq \frac{\epsilon}{2k} \cdot \hat t$. In particular, $i^*$ passes the if-condition in \cref*{alg:recover:line:threshold} and we successfully insert $i^*$ into $R$ in \cref*{alg:recover:line:push}.

This shows that for any bucket $v$ \emph{and for a fixed set $R$} the error probability is at most $n^{-10}$. However, the set $R$ depends also on the randomness as in previous iterations the signs $\sigma_j$ have influenced which heavy coordinates could be recovered. To deal with this dependency issue, note that the argument from before in fact only depends on $R \cap B_v$. However, conditioning on the events from \cref{lem:point-estimates,lem:tail-estimates} we throughout have that $R \subseteq I$ by  \cref{lem:recovery-threshold}, and conditioning on the event from \cref{lem:bucket-load} we have that $|I_v| \leq O(\log n / \log\log n)$. Notably, both of these events are independent of the randomness of the $\sigma_j$'s. Thus, we can afford to take a union bound over \emph{all possible sets $R \cap B_v$}---namely, the $2^{O(\log n / \log\log n)} = n^{o(1)}$ subsets of $I_v$---and over the at most $O(k/\epsilon) = O(n)$ buckets $v$. The total error probability is $n^{-9+o(1)}$.
\end{proof}

\begin{lemma}[Recovering Peelable Edges] \label{lem:recover-peelable}
Suppose that the events from \cref{lem:point-estimates,lem:tail-estimates,lem:bucket-load,lem:bucket-error} hold. If an edge $e_i$ is peelable in the associated hypergraph $G$, then with high probability we recover~$i$ (i.e., $i \in R$).
\end{lemma}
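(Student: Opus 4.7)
The plan is to derive this immediately from the preceding \cref{lem:recover-free} by an induction on the peeling sequence, using the key monotonicity observation that the $\rho$-free condition is preserved under edge removal.

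More precisely, by assumption there is a $\rho$-peeling sequence $e_{j_1}, e_{j_2}, \dots, e_{j_\ell}$ in $G$ with $j_\ell = i$. I would prove by induction on $t = 0, 1, \dots, \ell$ that $j_1, \dots, j_t \in R$. The base case is trivial; for the inductive step, assume $\{j_1, \dots, j_{t-1}\} \subseteq R$, and suppose for contradiction that $j_t \notin R$. Let $G' = G \setminus \{e_i : i \in R\}$. By the inductive hypothesis, the edges $e_{j_1}, \dots, e_{j_{t-1}}$ have all been removed in $G'$, and since $j_t \notin R$ the edge $e_{j_t}$ still belongs to $G'$. Moreover $G'$ is obtained from $G \setminus \{e_{j_1}, \dots, e_{j_{t-1}}\}$ by removing a (possibly empty) additional set of edges, and the $\rho$-free condition
\[
w_{e_{j_t}} \geq \rho \cdot \left( w_v + \sum_{e_{j_t} \neq e' \ni v} w_{e'} \right)
\]
only becomes easier to satisfy when more edges are removed, since the right-hand side can only decrease. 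Hence $e_{j_t}$ remains $\rho$-free in $G'$.

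However, \cref{lem:recover-free} (under the assumed events from \cref{lem:point-estimates,lem:tail-estimates,lem:bucket-load,lem:bucket-error}) guarantees that with high probability the graph $G' = G \setminus \{e_i : i \in R\}$ contains no $\rho$-free edges. This contradicts the previous paragraph, so we must have $j_t \in R$, completing the induction. Applying this to $t = \ell$ gives $i = j_\ell \in R$, and the same high-probability event from \cref{lem:recover-free} simultaneously covers all peelable edges, so no additional union bound is required. I do not anticipate a real obstacle here; the only subtlety is correctly articulating the monotonicity of $\rho$-freeness under edge deletion, which is where all the work of the previous lemma is being reused.
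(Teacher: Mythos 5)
Your proof is correct and takes essentially the same approach as the paper's. The paper phrases the argument as a minimal-counterexample (let $j$ be the smallest index with $i_j \notin R$) rather than an explicit induction, but both hinge on exactly the same two observations: the prefix of the peeling sequence before the first missing index lies in $R$, and $\rho$-freeness is monotone under edge removal, yielding a contradiction with \cref{lem:recover-free}.
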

\begin{proof}
If $e_i$ is $\rho$-peelable in $G$ then by definition there is a peeling sequence $e_{i_1}, \dots, e_{i_\ell}$ with $i = i_{\ell}$. Seeking contradiction, assume that $i = i_\ell \not\in R$. Let $j \leq \ell$ be the smallest index such that $i_j \not\in R$. As the edge $e_{i_j}$ is $\rho$-free in the graph $G \setminus \set{e_{i_1}, \dots, e_{i_{j-1}}}$, it is thus also $\rho$-free in $G \setminus \set{e_i : i \in R}$. This contradicts \cref{lem:recover-free}.
\end{proof}

\subsection{The Associated Hypergraph is Well-Peelable} \label{sec:sketch:sec:peelable}
In the previous step we have established that all edges that are peelable in the associated hypergraph can be recovered by the algorithm. In this section we show that almost all edges $e_i$ are peelable. More specifically, we will show that the total weight of all edges $e_i$ for $i \in T$ that are not peelable is negligibly small:

\begin{lemma}[Weight of Non-Peelable Edges] \label{lem:non-peelable-weight}
Let $P \subseteq H$ denote the set of $\rho$-peelable edges $e_i$ in the associated hypergraph. Then with probability at least $0.97$, it holds that
\begin{equation*}
    \norm{x_{T \setminus P}}_2^2 \leq 800 \epsilon \cdot \norm{x_{-k}}_2^2.
\end{equation*}
\end{lemma}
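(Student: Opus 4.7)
The first step is to split the top-$k$ set into its heavy and light parts: $T = (T \cap H) \sqcup (T \cap L)$. The light-but-top-$k$ coordinates can be dealt with deterministically: by definition every $i \in L$ satisfies $|x_i|^2 < \frac{\epsilon}{k}\|x_{-k}\|_2^2$, and since $|T| = k$ this immediately yields $\|x_{T \cap L}\|_2^2 \leq \epsilon \|x_{-k}\|_2^2$ (regardless of whether those coordinates are peeled). Thus it suffices to show that $\|x_{(T \cap H) \setminus P}\|_2^2$ is small with good probability.

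The plan for $T \cap H$ is to view the hashing of light coordinates as providing the random vertex weights on the hypergraph whose edges are determined by the hashing of heavy coordinates. Concretely, the heavy and light hashings are mutually independent, so I would first condition on the randomness defining $G = (V,E)$ (i.e.\ on the edges $\{e_i : i \in H\}$), leaving the vertex weights $w_v = \|x_{L_v}\|_2^2$ as the sole remaining source of randomness. For these weights, using independence of the hashings of the light entries,
\begin{equation*}
    \Ex[w_v] = \frac{3}{|V|}\|x_L\|_2^2 \leq \frac{3}{|V|} \cdot 2\|x_{-k}\|_2^2 =: \mu,
\end{equation*}
where the bound $\|x_L\|_2^2 \leq 2\|x_{-k}\|_2^2$ follows because elements of $L \cap T$ contribute at most $\epsilon \|x_{-k}\|_2^2$ (being light and of cardinality at most $k$), while $L \setminus T \subseteq [n] \setminus T$. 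Plugging $|V| = 2^{19}ck/\epsilon$ gives $\mu = O(\epsilon/(ck))\cdot \|x_{-k}\|_2^2$.

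Next, I would verify that $G$ is sparse enough to apply both hypergraph-structure lemmas: $|H| \leq 2k/\epsilon$ by \cref{lem:heavy-intermediate-number}, while $|V|/(8\rho h^2) = \Theta(ck/\epsilon)$ with $h=3$ and $\rho = 2048$, so for $c$ large enough $|H|$ is below both $|V|/(2h)$ and $|V|/(8\rho h^2)$. Invoking \cref{lem:hypertrees-unicyclic}, with probability $1 - n^{-1+o(1)}$ every connected component of $G$ is a hypertree or unicyclic; call this event $\mathcal{G}_0$. On $\mathcal{G}_0$, \cref{lem:peeling-random-vertex-weights} applies and yields, for each $i \in T \cap H$,
\begin{equation*}
    \Pr(e_i \notin P \mid G) \;\leq\; \frac{\mu \cdot D_{G,\rho}(e_i)}{|x_i|^2}.
\end{equation*}
Multiplying by $|x_i|^2$, summing over $i \in T \cap H$, and taking expectation over $G$ (using \cref{lem:expected-spreadness}, which gives $\Ex[D_{G,\rho}(e_i)] \leq \rho+1$ for each fixed $i$),
\begin{equation*}
    \Ex\bigl[\|x_{(T\cap H)\setminus P}\|_2^2 \cdot \mathbf{1}[\mathcal{G}_0]\bigr]
    \;\leq\; \mu(\rho+1)\,|T \cap H|
    \;\leq\; \mu(\rho+1)k
    \;=\; O\!\left(\tfrac{\rho\,\epsilon}{c}\right)\cdot \|x_{-k}\|_2^2,
\end{equation*}
which, for the constants chosen in the sketch, is well below $\epsilon \|x_{-k}\|_2^2$.

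Finally, a Markov inequality with constant slack (say $50$) fails with probability at most $0.02$, and $\mathcal{G}_0$ fails with probability at most $n^{-1+o(1)}$; a union bound combines these with the deterministic bound on $\|x_{T \cap L}\|_2^2$ to give $\|x_{T \setminus P}\|_2^2 \leq O(\epsilon)\|x_{-k}\|_2^2 \leq 800\epsilon \|x_{-k}\|_2^2$ with probability at least $0.97$. The main conceptual subtlety is the argument that conditioning on the hyperedges $\{e_i : i \in H\}$ preserves the random-vertex-weight setup required by \cref{lem:peeling-random-vertex-weights}; once that independence is used cleanly, the rest is routine arithmetic on the constants ($\mu$, $\rho$, $c$, $|V|$) to verify all hypotheses of the two peeling lemmas.
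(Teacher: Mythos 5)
Your proposal is correct and follows essentially the same strategy as the paper: condition first on the heavy hashings to argue (via \cref{lem:hypertrees-unicyclic,lem:expected-spreadness}) that every component is a hypertree or unicyclic and that $\Ex[D_{G,\rho}(e_i)] \leq \rho+1$, then condition on the resulting edge set and treat the light hashings as random vertex weights with mean $\mu = O\bigl(\tfrac{\epsilon}{ck}\bigr)\norm{x_{-k}}_2^2$, apply \cref{lem:peeling-random-vertex-weights}, and finish with Markov. Your explicit split $T = (T\cap H)\sqcup(T\cap L)$ is in fact slightly more careful than the paper's own exposition, which sums $D(e_i)$ over all of $T$ even though $D(e_i)$ is only defined for $i\in H$; one small slip on your side is that \cref{lem:hypertrees-unicyclic} yields failure probability $|V|^{-1+o(1)}$ (with $|V| = \Theta(k/\epsilon)$) rather than $n^{-1+o(1)}$, a subtlety the paper addresses in a footnote by inflating the number of buckets if $k/\epsilon$ is constant.
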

\begin{proof}
Observe that the associated hypergraph relies on two sources of randomness: $\set{ e_i : i \in H}$ and $\set{ e_i : i \in L}$. The former (the \emph{heavy} randomness) determines the edges $E$ of the associated hypergraph, whereas the latter (the \emph{light} randomness) determines the vertex weights $w_v$.

For the first part of the proof disregard the light randomness and focus on the heavy randomness. On the one hand, \smash{$|E| = |H| \leq 2 k / \epsilon \leq \frac{|V|}{6}$} by \cref{lem:heavy-intermediate-number} and thus \cref{lem:hypertrees-unicyclic} states that with high probability each connected component in the associated hypergraph is a hypertree or unicyclic. On the other hand, \smash{$|E| \leq 2k \epsilon \leq \frac{|V|}{72 \rho}$}, and thus \cref{lem:expected-spreadness} implies that $\Ex[D(e_i)] \leq \rho + 1$ for each~\makebox{$i \in H$}. It follows that
\begin{equation*}
    \Ex\brackets*{\sum_{i \in T} D(e_i)} \leq (\rho + 1) k \leq 2 \rho k,
\end{equation*}
and by Markov's bound with probability at least $0.99$ it holds that
\begin{equation*}
    \sum_{i \in T} D(e_i) \leq 200\rho k.
\end{equation*}

Next, fix the heavy randomness and focus on the light randomness. For each vertex $v \in V$, the expected vertex-weight is at most
\begin{align*}
    \Ex[w_v]
    &= \sum_{i \in L} \Pr(v \in e_i) \cdot |x_i|^2 \\
    &\leq \frac{3}{|V|} \sum_{i \in L} |x_i|^2 \\
    &= \frac{3}{|V|} \cdot \parens*{\norm{x_{L \setminus T}}_2^2 + \norm{x_{L \cap T}}_2^2} \\
    &\leq \frac{3}{|V|} \cdot \parens*{\norm{x_{-k}}_2^2 + k \cdot \frac{\epsilon}{k} \cdot \norm{x_{-k}}_2^2} \\
    &\leq \frac{6}{|V|} \cdot \norm{x_{-k}}_2^2 \\
    &\leq \frac{\epsilon}{2^{16} k} \cdot \norm{x_{-k}}_2^2.
\end{align*}
In this setting---where the hypergraph and the edge weights are fix, each connected component is a hypertree or unicyclic, and the vertex weights are chosen randomly (though not necessarily independently) with expected weight at most $\mu = \frac{\epsilon}{2^{16}k} \cdot \norm{x_{-k}}_2^2$---\cref{lem:peeling-random-vertex-weights} yields that the probability that an edge $e_i$ is not $\rho$-peelable is at most
\begin{equation*}
    \Pr(i \not\in P) \leq \frac{\mu \cdot D(e_i)}{w_{e_i}} = \frac{\epsilon \cdot D(e_i)}{2^{16} k \cdot |x_i|^2} \cdot \norm{x_{-k}}_2^2.
\end{equation*}
Thus, the expected weight of the non-peelable edges is
\begin{align*}
    \Ex\brackets*{\norm{x_{T \setminus P}}_2^2}
    &= \sum_{i \in T} \Pr(i \not\in P) \cdot |x_i|^2 \\
    &\leq \sum_{i \in T} \frac{\epsilon \cdot D(e_i)}{2^{16} k} \cdot \norm{x_{-k}}_2^2 \\
    &\leq \frac{\epsilon \cdot 200 \rho k}{2^{16} k} \cdot \norm{x_{-k}}_2^2 \\
    &\leq \vphantom{\frac{\epsilon}{\epsilon}}8 \epsilon \norm{x_{-k}}_2^2.
\end{align*}
By one more application of Markov's inequality, we finally obtain that $\norm{x_{T \setminus P}}_2^2 \leq 800 \epsilon \norm{x_{-k}}_2^2$ with probability at least $0.99$. Taking a union bound over the all four error events---the two events that succeed with high probability,\footnote{Strictly speaking, these events happen only with high probability $1 - |V|^{-\Omega(1)}$ in the number of nodes of the associated hypergraph, $|V| = 2^{19} c k / \epsilon$. If $k$ and $\epsilon$ are both constants, then this failure probability is also only constant. However, if necessary we can further increase the number of buckets (by a constant factor) to achieve failure probability $0.01$.} plus two applications of Markov's inequality---the total error probability is indeed $o(1) + 0.01 + 0.01 \leq 0.03$. 
\end{proof}

The following lemma summarizes the analysis up to this point:

\begin{lemma}[Recovery] \label{lem:recover}
Conditioned on the events from \cref{lem:recover-peelable,lem:non-peelable-weight}, the algorithm recovers a set $R$ satisfying that
\begin{equation*}
    \norm{x_{T \setminus R}}_2^2 \leq 800 \epsilon \cdot \norm{x_{-k}}_2^2.
\end{equation*}
\end{lemma}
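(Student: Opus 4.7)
The plan is to combine the two preceding lemmas almost mechanically. Lemma~\ref{lem:recover-peelable} certifies that every $\rho$-peelable coordinate ends up in $R$, and Lemma~\ref{lem:non-peelable-weight} bounds the total energy of the top-$k$ coordinates that fail to be peelable. So the first step is simply to read off from Lemma~\ref{lem:recover-peelable} (conditioned on its high-probability event) the containment $P \subseteq R$.

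The second step is to establish the inclusion $T \setminus R \subseteq T \setminus P$ and then invoke Lemma~\ref{lem:non-peelable-weight}. To see the inclusion, recall from Definition~\ref{def:associated-hypergraph} that the hyperedges of the associated hypergraph are indexed only by heavy coordinates, so $P \subseteq H$. Now take any $i \in T \setminus R$: either $i \in H$, in which case $P \subseteq R$ forces $i \notin P$, i.e., $i \in T \setminus P$; or $i \notin H$, in which case $i \notin P$ trivially and again $i \in T \setminus P$. Either way we get $T \setminus R \subseteq T \setminus P$, which yields $\|x_{T \setminus R}\|_2^2 \leq \|x_{T \setminus P}\|_2^2 \leq 800\epsilon \cdot \|x_{-k}\|_2^2$, the second inequality being exactly Lemma~\ref{lem:non-peelable-weight}.

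I do not foresee any real obstacle; the statement is essentially bookkeeping between the two preceding lemmas. The one subtle point worth flagging is that $T$ may contain coordinates that are not heavy (they merely happen to be among the top $k$ in magnitude), and these are not represented by any edge of the associated hypergraph, so they can never belong to $P$. However, each such coordinate contributes at most $\frac{\epsilon}{k} \|x_{-k}\|_2^2$ to $\|x_{T \setminus P}\|_2^2$, so in aggregate they add at most $\epsilon \|x_{-k}\|_2^2$ of slack, which is comfortably absorbed in the constant $800$ of Lemma~\ref{lem:non-peelable-weight}. Hence no separate treatment of $T \setminus H$ is needed.
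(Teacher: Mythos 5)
Your proof is correct and takes essentially the same approach as the paper's: combine $P \subseteq R$ from Lemma~\ref{lem:recover-peelable} with Lemma~\ref{lem:non-peelable-weight} via the set inclusion $T \setminus R \subseteq T \setminus P$ (which the paper leaves implicit with ``the claim follows''). The closing caveat about absorbing the contribution of $T \setminus H$ into the constant $800$ is superfluous once you condition on the event of Lemma~\ref{lem:non-peelable-weight}, since that lemma already bounds the full quantity $\|x_{T \setminus P}\|_2^2$, including any non-heavy coordinates of $T$.
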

\begin{proof}
Conditioned on the event from \cref{lem:non-peelable-weight}, the set $P \subseteq H$ of $\rho$-peelable edges in the associated hypergraph satisfies that $\norm{x_{T \setminus P}}_2^2 \leq 800 \epsilon \cdot \norm{x_{-k}}_2^2$. Conditioned on the event from \cref{lem:recover-peelable} the set $R$ computed by the algorithm contains all $\rho$-peelable edges in $G$. The claim follows.
\end{proof}

\subsection{Refinement} \label{sec:sketch:sec:correctness}
In the previous steps of the analysis we have shown that the algorithm successfully recovers a set~$R$ such that the total weight of the top-$k$ coordinates missing in $R$ is small. We finally analyze that the vector $x'$ as computed by the recovery algorithm is as claimed. Recall that the recovery algorithm first selects $S \subseteq R$ to be the set of the $\min(3k, |R|)$ coordinates $i$ with largest estimates~$|\hat x_i|$, and then selects $x' = x_S$. The proof is by an exchange argument that is exactly analogous to~\cite{NakosS19} and similar also to~\cite{PriceW11}.

\begin{lemma}[Refinement] \label{lem:refinement}
Conditioned on the events from \cref{lem:point-estimates,lem:recover-peelable,lem:non-peelable-weight}, the algorithm returns a vector $x'$ satisfying that
\begin{equation*}
    \norm{x - x'}_2^2 \leq (1 + 900 \epsilon) \norm{x_{-k}}_2^2.
\end{equation*}
\end{lemma}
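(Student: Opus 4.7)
The plan is to split $\|x - x'\|_2^2 = \|x_{[n] \setminus S}\|_2^2 + \|(x - \hat{x})_S\|_2^2$ and handle the two summands separately. The support-side error $\|(x - \hat{x})_S\|_2^2$ is immediate from \cref{lem:point-estimates}: since $|S| \leq 3k$ and each $|x_i - \hat{x}_i|^2 \leq \frac{\epsilon}{32ck}\|x_{-k}\|_2^2$, this piece contributes at most $O(\epsilon)\|x_{-k}\|_2^2$. Rewriting $\|x_{[n] \setminus S}\|_2^2 = \|x_{-k}\|_2^2 + \|x_T\|_2^2 - \|x_S\|_2^2 = \|x_{-k}\|_2^2 + \|x_{T \setminus S}\|_2^2 - \|x_{S \setminus T}\|_2^2$, the whole task reduces to proving
\begin{equation*}
    \|x_{T \setminus S}\|_2^2 - \|x_{S \setminus T}\|_2^2 \leq O(\epsilon)\,\|x_{-k}\|_2^2.
\end{equation*}

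Next I would decompose $T \setminus S = (T \setminus R) \sqcup A$ where $A := T \cap R \setminus S$. By \cref{lem:recover} (an immediate consequence of \cref{lem:recover-peelable,lem:non-peelable-weight}) we already have $\|x_{T \setminus R}\|_2^2 \leq 800\epsilon\,\|x_{-k}\|_2^2$, so only $\|x_A\|_2^2 - \|x_{S \setminus T}\|_2^2$ remains. In the easy case $|R| \leq 3k$ we have $S = R$ and hence $A = \emptyset$, making the bound trivial. In the hard case $|R| > 3k$, we have $|S| = 3k$, which yields the crucial capacity gap $|A| \leq |T| = k$ versus $|S \setminus T| \geq 3k - k = 2k$.

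The heart of the argument exploits that $S$ consists of the $3k$ coordinates in $R$ with largest $|\hat{x}|$, so every $i \in A \subseteq R \setminus S$ satisfies $|\hat{x}_i| \leq \min_{j \in S} |\hat{x}_j|$. Combined with the capacity gap,
\begin{equation*}
    \|\hat{x}_A\|_2^2 \leq |A| \cdot \min_{j \in S} |\hat{x}_j|^2 \leq \frac{|A|}{|S \setminus T|}\, \|\hat{x}_{S \setminus T}\|_2^2 \leq \tfrac{1}{2}\, \|\hat{x}_{S \setminus T}\|_2^2.
\end{equation*}
Converting between $\hat{x}$ and $x$ via \cref{lem:point-estimates} (which gives $\|x_J - \hat{x}_J\|_2 \leq \sqrt{|J|}\,\delta$ for any $J \subseteq [n]$ with $\delta^2 = \frac{\epsilon}{32ck}\|x_{-k}\|_2^2$) then yields
\begin{equation*}
    \|x_A\|_2 \leq \|\hat{x}_A\|_2 + \sqrt{|A|}\,\delta \leq \tfrac{1}{\sqrt{2}}\, \|\hat{x}_{S \setminus T}\|_2 + \sqrt{|A|}\,\delta \leq \tfrac{1}{\sqrt{2}}\, \|x_{S \setminus T}\|_2 + O(\sqrt{k}\,\delta).
\end{equation*}
Squaring, $\|x_A\|_2^2 - \|x_{S \setminus T}\|_2^2 \leq -\tfrac{1}{2}\|x_{S \setminus T}\|_2^2 + O(\sqrt{k}\,\delta \cdot \|x_{S \setminus T}\|_2) + O(k\delta^2)$, and since this downward-opening quadratic in $\|x_{S \setminus T}\|_2$ attains maximum value $O(k\delta^2) = O(\epsilon)\|x_{-k}\|_2^2$, we obtain $\|x_A\|_2^2 - \|x_{S \setminus T}\|_2^2 \leq O(\epsilon)\|x_{-k}\|_2^2$. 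Chaining the pieces and tracking multiplicative constants yields the stated $(1 + 900\epsilon)$ bound.

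The main obstacle, which the $\tfrac{1}{2}$-contraction neatly circumvents, is that a naive exchange argument based on injecting $A$ into $S \setminus T$ and comparing pointwise would leave an irreducible cross term of order $\sqrt{k}\,\delta \cdot \|x_{-k}\|_2 = \Theta(\sqrt{\epsilon})\|x_{-k}\|_2^2$, strictly worse than the target $O(\epsilon)\|x_{-k}\|_2^2$. The slack $|A|/|S \setminus T| \leq 1/2$ guaranteed by the factor-$3$ overshoot in the $3k$-refinement is precisely what makes the $-\tfrac{1}{2}\|x_{S \setminus T}\|_2^2$ term appear in the squared inequality, and this negative contribution absorbs the cross term and leaves only the diagonal error $O(k\delta^2) = O(\epsilon)\|x_{-k}\|_2^2$.
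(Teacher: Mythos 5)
Your proposal is correct and follows the paper's overall structure closely: the same split $\norm{x-x'}_2^2 = \norm{(x-\hat x)_S}_2^2 + \norm{x_{\bar S}}_2^2$, the same algebraic identity reducing the second term to bounding $\norm{x_{T\setminus S}}_2^2 - \norm{x_{S\setminus T}}_2^2$, the same further split of $T\setminus S$ into $T\setminus R$ (handled by \cref{lem:recover}) and $A := T\cap R\setminus S$, and the same case distinction on $|R|$ versus $3k$. Where you diverge is the final quantitative step exploiting the factor-$3$ slack. The paper takes the single coordinate $i$ of largest $|x_i|$ in $A$ and the single coordinate $j$ of smallest $|x_j|$ in $S\setminus T$, observes $|x_i|\geq|x_j|$ and $\bigl||x_i|-|x_j|\bigr|\leq 2\delta$ (two applications of the pointwise Count-Sketch bound), and applies the pointwise identity $a^2 - 2b^2 \leq 2(a-b)^2$ together with $|A|\leq k$, $|S\setminus T|\geq 2k$ to conclude $k|x_i|^2 - 2k|x_j|^2 \leq 8k\delta^2$. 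Your aggregated version---$\norm{\hat x_A}_2^2 \leq \tfrac12\norm{\hat x_{S\setminus T}}_2^2$ (from the min-estimate comparison plus the cardinality gap), then converting $\hat x\to x$ in $\ell_2$ and completing the square---is equally valid and arguably cleaner, since it avoids the somewhat ad hoc algebraic identity; both extract the same $O(k\delta^2) = O(\epsilon)\norm{x_{-k}}_2^2$ from the same underlying slack.
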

\begin{proof}
For a set $A \subseteq [n]$ we write $\bar A = [n] \setminus A$ to denote its complement set. By construction we have that:
\begin{align*}
    \norm{x - x'}_2^2
    &= \norm{(x - x')_S}_2^2 + \norm{(x - x')_{\bar S}}_2^2 \\
    &= \norm{(x - \hat x)_S}_2^2 + \norm{x_{\bar S}}_2^2 \\
    &\leq \frac{|S| \cdot \epsilon}{k} \cdot \norm{x_{-k}}_2^2 + \norm{x_{\bar S}}_2^2 \\
    &= \epsilon \cdot \norm{x_{-k}}_2^2 + \norm{x_{\bar S}}_2^2,
\end{align*}
and thus it suffices to bound $\norm{x_{\bar S}}_2^2$ in the following. First note that if $|S| < 3k$, then $S = R$ and the claim follows immediately by
\begin{align*}
    \norm{x_{\bar S}}_2^2
    &= \norm{x_{\bar R}}_2^2 \\
    &= \norm{x_{\bar R \cap T}}_2^2 + \norm{x_{\bar R \cap \bar T}}_2^2 \\
    &\leq 800 \epsilon \cdot \norm{x_{-k}}_2^2 + \norm{x_{-k}}_2^2.
\end{align*}
So assume that $|S| = 3k$. We partition $\bar S = (\bar S \cap R \cap T) \sqcup (\bar R \cap T) \sqcup (\bar S \cap \bar T)$ and bound
\begin{align*}
    \norm{x_{\bar S}}_2^2
    &= \norm{x_{\bar S \cap R \cap T}}_2^2 + \norm{x_{\bar R \cap T}}_2^2 + \norm{x_{\bar S \cap \bar T}}_2^2 \\
    &= \norm{x_{\bar S \cap R \cap T}}_2^2 + \norm{x_{\bar R \cap T}}_2^2 + \norm{x_{\bar T}}_2^2 - \norm{x_{S \cap \bar T}}_2^2 \\
    &\leq 800 \epsilon \cdot \norm{x_{-k}}_2^2 + \norm{x_{-k}}_2^2 + \norm{x_{\bar S \cap R \cap T}}_2^2 - \norm{x_{S \cap \bar T}}_2^2,
\end{align*}
where in the last step we used that $\norm{x_{\bar R \cap T}}_2^2 \leq 800 \epsilon \cdot \norm{x_{-k}}_2^2$ by \cref{lem:recover} and $\norm{x_{\bar T}}_2^2 = \norm{x_{-k}}_2^2$ by definition. In what follows we show that $\norm{x_{\bar S \cap R \cap T}}_2^2 - \norm{x_{S \cap \bar T}}_2^2 \leq 8\epsilon \cdot \norm{x_{-k}}_2^2$.

Let $i \in \bar S \cap R \cap T$ be such that $|x_i|$ is maximized, and let $j \in S \cap \bar T$ be such that $|x_j|$ is minimized. On the one hand, clearly~\smash{$|x_i| \geq |x_j|$} since $i$ is among the $k$ largest coordinates in $x$, but $j$ is not. On the other hand, we have that $|\hat x_i| \leq |\hat x_j|$, as otherwise we would have included $i$ in place of $j$ into~$S$. It follows that
\begin{equation*}
    |x_i| \leq |\hat x_i| + \sqrt{\frac{\epsilon}{k}} \cdot \norm{x_{-k}}_2 \leq |\hat x_j| + \sqrt{\frac{\epsilon}{k}} \cdot \norm{x_{-k}}_2 \leq |x_j| + \sqrt{\frac{4\epsilon}{k}} \cdot \norm{x_{-k}}_2,
\end{equation*}
where have applied the fact that $\hat x$ approximates $x$ with coordinate-wise error $\frac{\epsilon}{k} \cdot \norm{x_{-k}}_2$ twice. Combining both bounds we conclude that~\smash{$|\, |x_i| - |x_j| \,|^2 \leq \frac{4\epsilon}{k} \cdot \norm{x_{-k}}_2^2$}. With this in mind, observing that $|S \cap \bar T| \leq |S| - |T| = 2k$, and applying the inequality $a^2 - 2b^2 \leq 2(a - b)^2$,\footnote{Indeed, for all $a, b \in \Real$ we have $2(a - b)^2 - (a^2 - 2b^2) = a^2 - 4ab + 4b^2 = (a - 2b)^2 \geq 0$, and thus $a^2 - 2b^2 \leq 2(a - b)^2$.} we finally have that:
\begin{align*}
    \norm{x_{\bar S \cap R \cap T}}_2^2 - \norm{x_{S \cap \bar T}}_2^2
    &\leq k \cdot |x_i|^2 - 2k \cdot |x_j|^2 \\
    &\leq 2k \cdot (|x_i| - |x_j|)^2 \\
    &\leq 2k \cdot \frac{4\epsilon}{k} \cdot \norm{x_{-k}}_2^2 \\
    &= 8 \epsilon \cdot \norm{x_{-k}}_2^2.
\end{align*}
The proof is complete by putting together all previous bounds.
\end{proof}

\begin{lemma}[Correctness] \label{lem:correctness}
With probability at least $0.95$, the algorithm returns a vector $x'$ satisfying that
\begin{equation*}
    \norm{x - x'}_2^2 \leq (1 + 900 \epsilon) \norm{x_{-k}}_2^2.
\end{equation*}
\end{lemma}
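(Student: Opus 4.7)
The plan is a straightforward union bound. \cref{lem:refinement} already provides the desired bound $\norm{x - x'}_2^2 \leq (1 + 900 \epsilon) \norm{x_{-k}}_2^2$ conditioned on the events from \cref{lem:point-estimates}, \cref{lem:recover-peelable}, and \cref{lem:non-peelable-weight}. The events from \cref{lem:recover-peelable}, in turn, presuppose the events from \cref{lem:point-estimates}, \cref{lem:tail-estimates}, \cref{lem:bucket-load}, and \cref{lem:bucket-error}, while \cref{lem:bucket-error} itself is conditional on \cref{lem:point-estimates-intermediate}. Thus, it suffices to bound the total failure probability across all of these events and then invoke \cref{lem:refinement} directly.

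I would carry out the union bound as follows. The events of \cref{lem:tail-estimates}, \cref{lem:point-estimates}, \cref{lem:bucket-load}, \cref{lem:bucket-error}, \cref{lem:recover-free}, and \cref{lem:recover-peelable} each hold with high probability, contributing only $o(1)$ failure in total. The only events that fail with constant probability are \cref{lem:point-estimates-intermediate}, which fails with probability at most $0.01$, and \cref{lem:non-peelable-weight}, which fails with probability at most $0.03$. Summing these gives a total failure probability of at most $0.04 + o(1) \leq 0.05$ once the ambient constants (e.g.\ the number of buckets and the constants hidden in the ``with high probability'' bounds) are chosen large enough.

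Consequently, with probability at least $0.95$, all of the above events hold simultaneously, and \cref{lem:refinement} yields $\norm{x - x'}_2^2 \leq (1 + 900\epsilon) \norm{x_{-k}}_2^2$. There is no real obstacle here: the only thing to verify carefully is that we have not double-counted any randomness and that the $o(1)$ terms (which depend on $|V| = \Theta(ck/\epsilon)$) can be absorbed into the slack of $0.01$. If $k/\epsilon$ is not large enough to drive the high-probability events below $0.01$ each, we simply inflate the hidden constants in the bucket count and in the Count-Sketch parameters (by an absolute constant factor), which does not affect the asymptotic bounds on $m$, column sparsity, or recovery time stated in \cref{thm:main}.
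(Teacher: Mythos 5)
Your proposal is correct and follows essentially the same approach as the paper: a union bound over the conditioning events, with the two constant-probability failures coming from \cref{lem:point-estimates-intermediate} ($0.01$) and \cref{lem:non-peelable-weight} ($0.03$), and everything else contributing $o(1)$. Your final arithmetic ($0.04 + o(1) \leq 0.05$) is in fact cleaner than the paper's, which contains an apparent typo writing ``$0.1 + 0.3 + o(1) \leq 0.05$'' where $0.01 + 0.03$ was clearly intended.
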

\begin{proof}
The events from \cref{lem:point-estimates,lem:tail-estimates,lem:bucket-load} each happen with high probability $1 - o(1)$. The event from \cref{lem:point-estimates-intermediate} happens with probability $0.99$, and in this case the event from \cref{lem:bucket-error} happens with high probability $1 - o(1)$. In this case further the event of \cref{lem:recover-peelable} happens with high probability $1 - o(1)$, and finally the event from \cref{lem:non-peelable-weight} happens with probability $0.97$. Conditioned on these events, \cref{lem:refinement} implies the correctness claim. The total error probability is $0.1 + 0.3 + o(1) \leq 0.05$.
\end{proof}

\begin{lemma}[\#Rows and Column Sparsity] \label{lem:column-sparsity}
The sketch matrix $A$ has $m = O((k / \epsilon) \log n)$ rows and column sparsity $O(\log n)$.
\end{lemma}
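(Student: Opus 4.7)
The plan is to simply sum up the contributions of the three blocks of rows described in the sketch construction, and for the column sparsity, bound the number of nonzero entries each coordinate $i \in [n]$ contributes to each block.

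First I would handle the number of rows. By \cref{lem:tail-sketch}, the tail-estimation block contributes $O(\log n)$ rows. By \cref{lem:count-sketch} invoked with $r = 10 \log n$ and $s = \lceil 8192 c k / \epsilon \rceil$, the Count-Sketch block contributes $O(sr) = O((k/\epsilon) \log n)$ rows. The main block has one row for each pair $(v, j)$ with $v \in V$ and $j \in [2048 \log n]$; since $|V| = 2^{19} ck/\epsilon = O(k/\epsilon)$, this gives $|V| \cdot 2048 \log n = O((k/\epsilon) \log n)$ rows. Summing yields $m = O((k/\epsilon) \log n)$.

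Next I would handle the column sparsity. For the tail-estimation block, \cref{lem:tail-sketch} directly guarantees column sparsity $O(\log n)$. For the Count-Sketch block, \cref{lem:count-sketch} gives column sparsity $O(r) = O(\log n)$. For the main block, fix a coordinate $i \in [n]$; it participates in the query $q_{v,j}$ only if $v \in e_i$, and $|e_i| = 3$ by construction. Thus $i$ contributes a nonzero entry to exactly $3 \cdot 2048 \log n = O(\log n)$ rows of the main block. Summing the three contributions, every column has at most $O(\log n)$ nonzero entries.

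There is no real obstacle here: the statement is a direct bookkeeping consequence of the sketch definition together with the parameter choices of the two invoked primitives. The only point worth a moment of care is confirming that $|V| = O(k/\epsilon)$ (immediate from the definition $V = [2^{19} c k / \epsilon]$) and that the hyperedges $e_i$ have size exactly $3$, so that each coordinate appears in only $O(\log n)$ rows of the main block rather than depending on $|V|$.
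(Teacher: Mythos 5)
Your proof is correct and takes essentially the same approach as the paper: sum the row contributions of the three sketch blocks, and bound column sparsity by noting each coordinate touches only the $O(\log n)$ rows of the tail sketch, the $O(r)=O(\log n)$ rows of Count-Sketch, and the $|e_i|\cdot 2048\log n = O(\log n)$ main-block rows with $v\in e_i$.
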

\begin{proof}
There are three components to our sketch: First, the tail estimation sketch has $O(\log n)$ rows and column sparsity $O(\log n)$ (\cref{lem:tail-sketch}). Second, Count-Sketch has $O(s r) = O((k / \epsilon) \log n)$ rows and column sparsity $O(r) = O(\log n)$ (\cref{lem:count-sketch}). As the third component we have one row for each query $q_{v, j}$. As there are $O(k / \epsilon)$ buckets $v$ and $O(\log n)$ repetitions $j$, this amounts to $O((k / \epsilon) \log n)$ rows. Moreover, each coordinate $i$ appears only with nonzero coefficient in the rows~$q_{v, j}$ with $v \in e_i$, so the column sparsity is $O(\log n)$.
\end{proof}

\subsection{Running Time} \label{sec:sketch:sec:time}
The final step is to show that the recovery algorithm runs in time $O((k / \epsilon) \log n)$.

\begin{lemma}[Running Time] \label{lem:time}
The recovery algorithm runs in time $O((k / \epsilon) \log n)$ with high probability.
\end{lemma}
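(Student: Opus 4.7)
The plan is to bound the two contributions to the running time separately: (i) the total number of iterations of the while-loop, and (ii) the cost per iteration. For (i), the key observation is that a bucket $v$ is popped from $Q$ once initially, plus once for each later insertion of some $i$ with $v \in e_i$ into $R$. Since the if-condition in \cref*{alg:recover:line:threshold} blocks duplicate insertions, each element is pushed into $R$ at most once, and each push reinserts exactly $|e_i| = 3$ buckets into $Q$. Thus the total number of pops is bounded by $|V| + 3|R|$. Conditioning on the event of \cref{lem:recovery-threshold} we have $R \subseteq I$, so by \cref{lem:heavy-intermediate-number} $|R| \leq |I| = O(k/\epsilon)$; combined with $|V| = O(k/\epsilon)$ this yields $O(k/\epsilon)$ pops.

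For (ii), the subtlety is \cref*{alg:recover:line:residual}, which na\"ively would take $\Theta(\log n \cdot |B_v \cap R|)$ time per iteration and could blow the budget. I would therefore implement the residuals $p_{v,j}$ incrementally rather than recomputing from scratch: initialize $p_{v,j} \gets q_{v,j}$ for all $v,j$ in time $O(|V| \log n) = O((k/\epsilon) \log n)$, and each time an index $i$ is added to $R$ in \cref*{alg:recover:line:push}, update $p_{v,j} \gets p_{v,j} - \sigma_j(i) \cdot \operatorname{Enc}(i)_j \cdot \hat x_i$ for every $v \in e_i$ and every $j \in [2048 \log n]$. Each such update costs $O(\log n)$, and since there are at most $|R| = O(k/\epsilon)$ insertions, the amortized bookkeeping across the whole execution is $O((k/\epsilon) \log n)$.

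With this maintenance in place, each iteration of the while-loop performs: reading the $2048\log n$ residuals and thresholding them to form $y$ (using linear-time selection to find the median, $O(\log n)$ time); one call to $\operatorname{Dec}(y)$ which costs $O(\log n)$ by \cref{lem:code}; one Count-Sketch query which costs $O(r) = O(\log n)$ by \cref{lem:count-sketch}; and $O(1)$ additional work to update $Q$ and $R$. So each iteration is $O(\log n)$ time. Multiplying by the $O(k/\epsilon)$ pops gives $O((k/\epsilon)\log n)$, and adding the $O((k/\epsilon)\log n)$ initialization and incremental maintenance cost yields the claimed bound. The final sorting step selects the top $\min(3k, |R|)$ entries of $\hat x$ restricted to $R$, which takes $O(|R|) = O(k/\epsilon)$ time via linear-time selection, and so does not affect the bound.

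The only real obstacle is the conditioning structure: the bound $|R| = O(k/\epsilon)$ relies on the high-probability events of \cref{lem:point-estimates,lem:tail-estimates} (so that \cref{lem:recovery-threshold} gives $R \subseteq I$), which are already in place by the time we invoke this lemma; I would simply state that the running time bound holds under the same high-probability events governing correctness, which remain $1 - o(1)$.
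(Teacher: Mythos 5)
Your proof is correct and takes essentially the same approach as the paper: bound the number of while-loop iterations by $|V| + O(|R|)$ (the paper phrases it via ``successful/unsuccessful'' iterations, you phrase it via pops and re-insertions, yielding the same $|V|+3|R|$ bound), maintain the residuals $p_{v,j}$ incrementally so each iteration costs $O(\log n)$, and bound $|R| \leq |I| = O(k/\epsilon)$ via \cref{lem:recovery-threshold,lem:heavy-intermediate-number} under the high-probability events.
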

\begin{proof}
The first step is to bound the number of iterations of the while-loop in \crefrange*{alg:recover:line:loop}{alg:recover:line:push} by $O(|V| + |R|)$. Let us call an iteration \emph{successful} if the condition in \cref*{alg:recover:line:threshold} is not fulfilled. Observe that in each successful iteration we increase $|Q|$ by $2$ and increase $|R|$ by $1$, whereas in each unsuccessful iteration we decrease $|Q|$ by $1$ and leave $R$ unchanged. Thus, there can clearly be at most $|R|$ successful iterations, and consequently at most $|V| + 2|R|$ unsuccessful iterations.

Next, we bound the running time of a single iteration. The time to pick $v \in Q$ in \cref*{alg:recover:line:pop} is negligible. Let us ignore \cref*{alg:recover:line:residual} for a moment. Computing the string $y$ in \cref*{alg:recover:line:median} essentially amounts to computing the median of the values $p_{v, j}$ which runs in linear time $O(\log n)$. Decoding~$y$ in \cref*{alg:recover:line:dec} also runs in time $O(\log n)$ by \cref{lem:code}. \cref*{alg:recover:line:threshold,alg:recover:line:push} are again negligibly fast. Thus, disregarding the computation of the values $p_{v, j}$ in \cref*{alg:recover:line:residual}, the time per iteration is $O(\log n)$. To amount for \cref*{alg:recover:line:residual}, we can simply keep track of the values $p_{v, j}$ throughout. Whenever we insert an element $i$ into $R$ we compute $\operatorname{Enc}(i)$ in time $O(\log n)$ and update the values $p_{v, j}$ for the three buckets $v \in e_i$ in time $O(\log n)$. It follows that the total time is $O((|V| + |R|) \log n)$.

Finally, recall that $|V| = O(k / \epsilon)$. By \cref{lem:recovery-threshold} we have that $R \subseteq I$ with high probability, and thus $|R| \leq |I| \leq O(k / \epsilon)$ by \cref{lem:heavy-intermediate-number}.
\end{proof}

This finally completes the proof of \cref{thm:main}:

\begin{proof}[Proof of \cref{thm:main}.]
In \cref{lem:column-sparsity} we have demonstrated that the number of rows and the column sparsity of the sketch matrix $A$ are as claimed. In \cref{lem:correctness} we have shown that the recovery algorithm recovers an approximation $x'$ satisfying that $\norm{x - x'}_2^2 \leq (1 + 900 \epsilon) \norm{x_{-k}}_2^2$ with constant probability $0.9$; thus, by rescaling $\epsilon$ by a constant we achieve the claimed bound. In \cref{lem:time} we have shown that the recovery algorithm runs in time $O((k / \epsilon) \log n)$.
\end{proof}

\section*{Acknowledgements}
We thank the anonymous reviewers for many helpful comments on an earlier draft of this paper.

\bibliographystyle{plainurl}
\bibliography{main}

\begin{thebibliography}{10}

\bibitem{CandesRT06}
Emmanuel~J. Cand{\`{e}}s, Justin~K. Romberg, and Terence Tao.
\newblock Robust uncertainty principles: exact signal reconstruction from highly incomplete frequency information.
\newblock {\em {IEEE} Trans. Inf. Theory}, 52(2):489--509, 2006.
\newblock \href {https://doi.org/10.1109/TIT.2005.862083} {\path{doi:10.1109/TIT.2005.862083}}.

\bibitem{CandesT05}
Emmanuel~J. Cand{\`{e}}s and Terence Tao.
\newblock Decoding by linear programming.
\newblock {\em {IEEE} Trans. Inf. Theory}, 51(12):4203--4215, 2005.
\newblock \href {https://doi.org/10.1109/TIT.2005.858979} {\path{doi:10.1109/TIT.2005.858979}}.

\bibitem{CandesT06}
Emmanuel~J. Cand{\`{e}}s and Terence Tao.
\newblock Near-optimal signal recovery from random projections: Universal encoding strategies?
\newblock {\em {IEEE} Trans. Inf. Theory}, 52(12):5406--5425, 2006.
\newblock \href {https://doi.org/10.1109/TIT.2006.885507} {\path{doi:10.1109/TIT.2006.885507}}.

\bibitem{CevherKSZ17}
Volkan Cevher, Michael Kapralov, Jonathan Scarlett, and Amir Zandieh.
\newblock An adaptive sublinear-time block sparse fourier transform.
\newblock In Hamed Hatami, Pierre McKenzie, and Valerie King, editors, {\em 49th Annual {ACM} Symposium on Theory of Computing ({STOC} 2017)}, pages 702--715. {ACM}, 2017.
\newblock \href {https://doi.org/10.1145/3055399.3055462} {\path{doi:10.1145/3055399.3055462}}.

\bibitem{CharikarCF04}
Moses Charikar, Kevin~C. Chen, and Martin Farach{-}Colton.
\newblock Finding frequent items in data streams.
\newblock {\em Theor. Comput. Sci.}, 312(1):3--15, 2004.
\newblock \href {https://doi.org/10.1016/S0304-3975(03)00400-6} {\path{doi:10.1016/S0304-3975(03)00400-6}}.

\bibitem{CheraghchiN20}
Mahdi Cheraghchi and Vasileios Nakos.
\newblock Combinatorial group testing and sparse recovery schemes with near-optimal decoding time.
\newblock In Sandy Irani, editor, {\em 61st Annual {IEEE} Symposium on Foundations of Computer Science ({FOCS} 2020)}, pages 1203--1213. {IEEE}, 2020.
\newblock \href {https://doi.org/10.1109/FOCS46700.2020.00115} {\path{doi:10.1109/FOCS46700.2020.00115}}.

\bibitem{Cohen2008}
Albert Cohen, Wolfgang Dahmen, and Ronald DeVore.
\newblock Compressed sensing and best {$k$}-term approximation.
\newblock {\em Journal of the American Mathematical Society}, 22(1):211--231, 2008.
\newblock URL: \url{http://dx.doi.org/10.1090/S0894-0347-08-00610-3}, \href {https://doi.org/10.1090/s0894-0347-08-00610-3} {\path{doi:10.1090/s0894-0347-08-00610-3}}.

\bibitem{CormodeM06}
Graham Cormode and S.~Muthukrishnan.
\newblock Combinatorial algorithms for compressed sensing.
\newblock In Paola Flocchini and Leszek Gasieniec, editors, {\em 13th International Colloquium on Structural Information and Communication Complexity ({SIROCCO} 2006)}, volume 4056 of {\em Lecture Notes in Computer Science}, pages 280--294. Springer, 2006.
\newblock \href {https://doi.org/10.1007/11780823\_22} {\path{doi:10.1007/11780823\_22}}.

\bibitem{DietzfelbingerW07}
Martin Dietzfelbinger and Christoph Weidling.
\newblock Balanced allocation and dictionaries with tightly packed constant size bins.
\newblock {\em Theor. Comput. Sci.}, 380(1-2):47--68, 2007.
\newblock URL: \url{https://doi.org/10.1016/j.tcs.2007.02.054}, \href {https://doi.org/10.1016/J.TCS.2007.02.054} {\path{doi:10.1016/J.TCS.2007.02.054}}.

\bibitem{DuH99}
Ding-Zhu Du and Frank~Kwang ming Hwang.
\newblock {\em Combinatorial group testing and its applications}, volume~12.
\newblock World Scientific, 1999.

\bibitem{EldarK12}
Yonina~C. Eldar and Gitta Kutyniok.
\newblock {\em Compressed sensing: theory and applications}.
\newblock Cambridge University Press, 2012.

\bibitem{EppsteinG07}
David Eppstein and Michael~T. Goodrich.
\newblock Space-efficient straggler identification in round-trip data streams via newton's identities and invertible bloom filters.
\newblock In Frank K. H.~A. Dehne, J{\"{o}}rg{-}R{\"{u}}diger Sack, and Norbert Zeh, editors, {\em 10th International Workshop on Algorithms and Data Structure ({WADS} 2007)}, volume 4619 of {\em Lecture Notes in Computer Science}, pages 637--648. Springer, 2007.
\newblock \href {https://doi.org/10.1007/978-3-540-73951-7\_55} {\path{doi:10.1007/978-3-540-73951-7\_55}}.

\bibitem{FountoulakisP12}
Nikolaos Fountoulakis and Konstantinos Panagiotou.
\newblock Sharp load thresholds for cuckoo hashing.
\newblock {\em Random Struct. Algorithms}, 41(3):306--333, 2012.
\newblock URL: \url{https://doi.org/10.1002/rsa.20426}, \href {https://doi.org/10.1002/RSA.20426} {\path{doi:10.1002/RSA.20426}}.

\bibitem{FriezeK15}
Alan Frieze and Micha{\l} Karo{\'n}ski.
\newblock {\em Introduction to random graphs}.
\newblock Cambridge University Press, 2015.

\bibitem{FriezeM12}
Alan~M. Frieze and P{\'{a}}ll Melsted.
\newblock Maximum matchings in random bipartite graphs and the space utilization of cuckoo hash tables.
\newblock {\em Random Struct. Algorithms}, 41(3):334--364, 2012.
\newblock URL: \url{https://doi.org/10.1002/rsa.20427}, \href {https://doi.org/10.1002/RSA.20427} {\path{doi:10.1002/RSA.20427}}.

\bibitem{GilbertI10}
Anna~C. Gilbert and Piotr Indyk.
\newblock Sparse recovery using sparse matrices.
\newblock {\em Proc. {IEEE}}, 98(6):937--947, 2010.
\newblock \href {https://doi.org/10.1109/JPROC.2010.2045092} {\path{doi:10.1109/JPROC.2010.2045092}}.

\bibitem{GilbertLPS12}
Anna~C. Gilbert, Yi~Li, Ely Porat, and Martin~J. Strauss.
\newblock Approximate sparse recovery: Optimizing time and measurements.
\newblock {\em {SIAM} J. Comput.}, 41(2):436--453, 2012.
\newblock \href {https://doi.org/10.1137/100816705} {\path{doi:10.1137/100816705}}.

\bibitem{GilbertLPS17}
Anna~C. Gilbert, Yi~Li, Ely Porat, and Martin~J. Strauss.
\newblock For-all sparse recovery in near-optimal time.
\newblock {\em {ACM} Trans. Algorithms}, 13(3):32:1--32:26, 2017.
\newblock \href {https://doi.org/10.1145/3039872} {\path{doi:10.1145/3039872}}.

\bibitem{GilbertSTV07}
Anna~C. Gilbert, Martin~J. Strauss, Joel~A. Tropp, and Roman Vershynin.
\newblock One sketch for all: fast algorithms for compressed sensing.
\newblock In David~S. Johnson and Uriel Feige, editors, {\em 39th Annual {ACM} Symposium on Theory of Computing ({STOC} 2007)}, pages 237--246. {ACM}, 2007.
\newblock \href {https://doi.org/10.1145/1250790.1250824} {\path{doi:10.1145/1250790.1250824}}.

\bibitem{GoodrichM11}
Michael~T. Goodrich and Michael Mitzenmacher.
\newblock Invertible bloom lookup tables.
\newblock In {\em 49th Annual Allerton Conference on Communication, Control, and Computing}, pages 792--799. {IEEE}, 2011.
\newblock URL: \url{https://doi.org/10.1109/Allerton.2011.6120248}, \href {https://doi.org/10.1109/ALLERTON.2011.6120248} {\path{doi:10.1109/ALLERTON.2011.6120248}}.

\bibitem{GuruswamiI05}
Venkatesan Guruswami and Piotr Indyk.
\newblock Linear-time encodable/decodable codes with near-optimal rate.
\newblock {\em {IEEE} Trans. Inf. Theory}, 51(10):3393--3400, 2005.
\newblock \href {https://doi.org/10.1109/TIT.2005.855587} {\path{doi:10.1109/TIT.2005.855587}}.

\bibitem{HassaniehIKP12b}
Haitham Hassanieh, Piotr Indyk, Dina Katabi, and Eric Price.
\newblock Nearly optimal sparse fourier transform.
\newblock In Howard~J. Karloff and Toniann Pitassi, editors, {\em 44th Annual {ACM} Symposium on Theory of Computing ({STOC} 2012)}, pages 563--578. {ACM}, 2012.
\newblock \href {https://doi.org/10.1145/2213977.2214029} {\path{doi:10.1145/2213977.2214029}}.

\bibitem{HassaniehIKP12a}
Haitham Hassanieh, Piotr Indyk, Dina Katabi, and Eric Price.
\newblock Simple and practical algorithm for sparse fourier transform.
\newblock In Yuval Rabani, editor, {\em 23rd Annual {ACM-SIAM} Symposium on Discrete Algorithms ({SODA} 2012)}, pages 1183--1194. {SIAM}, 2012.
\newblock \href {https://doi.org/10.1137/1.9781611973099.93} {\path{doi:10.1137/1.9781611973099.93}}.

\bibitem{HermanS09}
Matthew~A. Herman and Thomas Strohmer.
\newblock High-resolution radar via compressed sensing.
\newblock {\em {IEEE} Trans. Signal Process.}, 57(6):2275--2284, 2009.
\newblock \href {https://doi.org/10.1109/TSP.2009.2014277} {\path{doi:10.1109/TSP.2009.2014277}}.

\bibitem{IndykK14}
Piotr Indyk and Michael Kapralov.
\newblock Sample-optimal fourier sampling in any constant dimension.
\newblock In {\em 55th Annual {IEEE} Symposium on Foundations of Computer Science ({FOCS} 2014)}, pages 514--523. {IEEE} Computer Society, 2014.
\newblock \href {https://doi.org/10.1109/FOCS.2014.61} {\path{doi:10.1109/FOCS.2014.61}}.

\bibitem{IndykKP14}
Piotr Indyk, Michael Kapralov, and Eric Price.
\newblock (nearly) sample-optimal sparse fourier transform.
\newblock In Chandra Chekuri, editor, {\em 25th Annual {ACM-SIAM} Symposium on Discrete Algorithms ({SODA} 2014)}, pages 480--499. {SIAM}, 2014.
\newblock \href {https://doi.org/10.1137/1.9781611973402.36} {\path{doi:10.1137/1.9781611973402.36}}.

\bibitem{IndykR08}
Piotr Indyk and Milan Ruzic.
\newblock Near-optimal sparse recovery in the {L1} norm.
\newblock In {\em 49th Annual {IEEE} Symposium on Foundations of Computer Science ({FOCS} 2008)}, pages 199--207. {IEEE} Computer Society, 2008.
\newblock \href {https://doi.org/10.1109/FOCS.2008.82} {\path{doi:10.1109/FOCS.2008.82}}.

\bibitem{Janson18}
Svante Janson.
\newblock Tail bounds for sums of geometric and exponential variables.
\newblock {\em Statistics \& Probability Letters}, 135:1--6, 2018.

\bibitem{JaspanFL15}
Oren~N. Jaspan, Roman Fleysher, and Michael~L. Lipton.
\newblock Compressed sensing mri: a review of the clinical literature.
\newblock {\em The British Journal of Radiology}, 88(1056):20150487, 2015.
\newblock URL: \url{http://dx.doi.org/10.1259/bjr.20150487}, \href {https://doi.org/10.1259/bjr.20150487} {\path{doi:10.1259/bjr.20150487}}.

\bibitem{Kapralov16}
Michael Kapralov.
\newblock Sparse fourier transform in any constant dimension with nearly-optimal sample complexity in sublinear time.
\newblock In Daniel Wichs and Yishay Mansour, editors, {\em 48th Annual {ACM} Symposium on Theory of Computing ({STOC} 2016)}, pages 264--277. {ACM}, 2016.
\newblock \href {https://doi.org/10.1145/2897518.2897650} {\path{doi:10.1145/2897518.2897650}}.

\bibitem{Kapralov17}
Michael Kapralov.
\newblock Sample efficient estimation and recovery in sparse {FFT} via isolation on average.
\newblock In Chris Umans, editor, {\em 58th Annual {IEEE} Symposium on Foundations of Computer Science ({FOCS} 2017)}, pages 651--662. {IEEE} Computer Society, 2017.
\newblock \href {https://doi.org/10.1109/FOCS.2017.66} {\path{doi:10.1109/FOCS.2017.66}}.

\bibitem{KarpLS04}
Richard~M. Karp, Michael Luby, and Amin Shokrollahi.
\newblock Finite length analysis of {LT} codes.
\newblock In {\em {IEEE} International Symposium on Information Theory ({ISIT} 2004)}, page~39. {IEEE}, 2004.
\newblock \href {https://doi.org/10.1109/ISIT.2004.1365074} {\path{doi:10.1109/ISIT.2004.1365074}}.

\bibitem{LarsenNNT19}
Kasper~Green Larsen, Jelani Nelson, Huy~L. Nguyen, and Mikkel Thorup.
\newblock Heavy hitters via cluster-preserving clustering.
\newblock {\em Commun. {ACM}}, 62(8):95--100, 2019.
\newblock \href {https://doi.org/10.1145/3339185} {\path{doi:10.1145/3339185}}.

\bibitem{LiXW13a}
Shancang Li, Li~Da Xu, and Xinheng Wang.
\newblock Compressed sensing signal and data acquisition in wireless sensor networks and internet of things.
\newblock {\em {IEEE} Trans. Ind. Informatics}, 9(4):2177--2186, 2013.
\newblock \href {https://doi.org/10.1109/TII.2012.2189222} {\path{doi:10.1109/TII.2012.2189222}}.

\bibitem{LiYPPR19}
Xiao Li, Dong Yin, Sameer Pawar, Ramtin Pedarsani, and Kannan Ramchandran.
\newblock Sub-linear time support recovery for compressed sensing using sparse-graph codes.
\newblock {\em {IEEE} Trans. Inf. Theory}, 65(10):6580--6619, 2019.
\newblock \href {https://doi.org/10.1109/TIT.2019.2921757} {\path{doi:10.1109/TIT.2019.2921757}}.

\bibitem{LubyMSS01}
Michael Luby, Michael Mitzenmacher, Mohammad~Amin Shokrollahi, and Daniel~A. Spielman.
\newblock Efficient erasure correcting codes.
\newblock {\em {IEEE} Trans. Inf. Theory}, 47(2):569--584, 2001.
\newblock \href {https://doi.org/10.1109/18.910575} {\path{doi:10.1109/18.910575}}.

\bibitem{LustigDSP08}
Michael Lustig, David~L. Donoho, Juan~M. Santos, and John~M. Pauly.
\newblock Compressed sensing {MRI}.
\newblock {\em {IEEE} Signal Process. Mag.}, 25(2):72--82, 2008.
\newblock \href {https://doi.org/10.1109/MSP.2007.914728} {\path{doi:10.1109/MSP.2007.914728}}.

\bibitem{MintonP14}
Gregory~T. Minton and Eric Price.
\newblock Improved concentration bounds for count-sketch.
\newblock In Chandra Chekuri, editor, {\em 25th Annual {ACM-SIAM} Symposium on Discrete Algorithms ({SODA} 2014)}, pages 669--686. {SIAM}, 2014.
\newblock \href {https://doi.org/10.1137/1.9781611973402.51} {\path{doi:10.1137/1.9781611973402.51}}.

\bibitem{Mitzenmacher09}
Michael Mitzenmacher.
\newblock Some open questions related to cuckoo hashing.
\newblock In Amos Fiat and Peter Sanders, editors, {\em 17th Annual European Symposium on Algorithms ({ESA} 2009)}, volume 5757 of {\em Lecture Notes in Computer Science}, pages 1--10. Springer, 2009.
\newblock \href {https://doi.org/10.1007/978-3-642-04128-0\_1} {\path{doi:10.1007/978-3-642-04128-0\_1}}.

\bibitem{Molloy04}
Michael Molloy.
\newblock The pure literal rule threshold and cores in random hypergraphs.
\newblock In J.~Ian Munro, editor, {\em 15th Annual {ACM-SIAM} Symposium on Discrete Algorithms ({SODA} 2004)}, pages 672--681. {SIAM}, 2004.
\newblock URL: \url{http://dl.acm.org/citation.cfm?id=982792.982896}.

\bibitem{NakosS19}
Vasileios Nakos and Zhao Song.
\newblock Stronger {$\ell_2/\ell_2$} compressed sensing; without iterating.
\newblock In Moses Charikar and Edith Cohen, editors, {\em 51st Annual {ACM} Symposium on Theory of Computing ({STOC} 2019)}, pages 289--297. {ACM}, 2019.
\newblock \href {https://doi.org/10.1145/3313276.3316355} {\path{doi:10.1145/3313276.3316355}}.

\bibitem{PaghR01}
Rasmus Pagh and Flemming~Friche Rodler.
\newblock Cuckoo hashing.
\newblock In Friedhelm~Meyer auf~der Heide, editor, {\em 9th Annual European Symposium on Algorithms ({ESA} 2001)}, volume 2161 of {\em Lecture Notes in Computer Science}, pages 121--133. Springer, 2001.
\newblock \href {https://doi.org/10.1007/3-540-44676-1\_10} {\path{doi:10.1007/3-540-44676-1\_10}}.

\bibitem{Price11}
Eric Price.
\newblock Efficient sketches for the set query problem.
\newblock In Dana Randall, editor, {\em 22nd Annual {ACM-SIAM} Symposium on Discrete Algorithms ({SODA} 2011)}, pages 41--56. {SIAM}, 2011.
\newblock \href {https://doi.org/10.1137/1.9781611973082.4} {\path{doi:10.1137/1.9781611973082.4}}.

\bibitem{PriceW11}
Eric Price and David~P. Woodruff.
\newblock {$1+\epsilon$}-approximate sparse recovery.
\newblock In Rafail Ostrovsky, editor, {\em 52nd Annual {IEEE} Symposium on Foundations of Computer Science ({FOCS} 2011)}, pages 295--304. {IEEE} Computer Society, 2011.
\newblock \href {https://doi.org/10.1109/FOCS.2011.92} {\path{doi:10.1109/FOCS.2011.92}}.

\bibitem{Price13}
Eric~C. Price.
\newblock {\em Sparse recovery and Fourier sampling}.
\newblock PhD thesis, Massachusetts Institute of Technology, 2013.
\newblock URL: \url{https://hdl.handle.net/1721.1/85458}.

\bibitem{Spielman96}
Daniel~A. Spielman.
\newblock Linear-time encodable and decodable error-correcting codes.
\newblock {\em {IEEE} Trans. Inf. Theory}, 42(6):1723--1731, 1996.
\newblock \href {https://doi.org/10.1109/18.556668} {\path{doi:10.1109/18.556668}}.

\end{thebibliography}

\end{document}